\documentclass[11pt,a4paper]{article}

\addtolength{\topmargin}{-.5cm}
\addtolength{\textheight}{1cm}
\addtolength{\oddsidemargin}{-.6cm}
\addtolength{\evensidemargin}{-6cm}
\addtolength{\textwidth}{1.2cm}

\usepackage[utf8x]{inputenc}
\usepackage{amsmath,ifthen,color,eurosym}

\usepackage[]{datetime}
\usepackage{wrapfig}
\usepackage{hyperref}
\usepackage{latexsym,amsthm,amsmath,amssymb,url}
\usepackage{fancyhdr}
\usepackage{graphicx}
\usepackage{comment}
\usepackage{stmaryrd}
\usepackage{enumerate}
\usepackage{cite}
\usepackage{tikz}

\usepackage{todonotes}
\newtheorem{theorem}{Theorem}
\newtheorem{proposition}{Proposition}
\newtheorem{lemma}{Lemma}

\newtheorem{observation}{Observation}
\newtheorem{claim}{Claim}

\renewcommand{\int}{{\bf int}}

\newcommand{\yes}{\mbox{\sc yes}}
\newcommand{\no}{\mbox{\sc no}}

\newcommand{\allocation}{{\sc List  Allocation}}

\newcommand{\la}{{\sc  LA}}
\newcommand{\cla}{{\sc  CLA}}
\newcommand{\hcla}{{\sc  HCLA}}
\newcommand{\shcla}{{\sc  S-HCLA}}

 \usepackage{todonotes}
% \todo{blabla}                        pour les notes dans les marges
% \todo[inline]{blabla}             pour les notes dans

\newcommand{\remove}[1]{}

\usepackage[lined,boxed,commentsnumbered]{algorithm2e}

\usepackage{graphicx}

\newenvironment{proofofclaim}{\noindent \textsc{Proof of the Claim:}}{\hfill$\Diamond$\medskip}

\usepackage{cite}

\begin{document}

\date{\empty}

\title{Parameterized Algorithms for Min-Max Multiway Cut and List Digraph Homomorphism\footnote{
The third author was co-financed by the European Union (European Social Fund ESF) and Greek national funds through the Operational Program ``Education and Lifelong Learning'' of the National Strategic Reference Framework (NSRF), Research Funding Program: ARISTEIA~II.}~\footnote{Emails:
\texttt{eunjungkim78@gmail.com}, \texttt{\{Christophe.Paul,Ignasi.Sau\}@lirmm.fr}, and \texttt{sedthilk@thilikos.info}
}~\footnote{An extended abstract of this work will appear in the \emph{Proceedings of the 10th International Symposium on Parameterized and Exact Computation (IPEC), Patras, Greece, September 2015}.}}

\author{Eunjung Kim\thanks{CNRS, LAMSADE, Paris, France.} \and Christophe Paul\thanks{AlGCo project-team, CNRS, LIRMM, Montpellier, France} \and Ignasi Sau$^{\P}$ \and\\ Dimitrios M. Thilikos$^{\P}$\thanks{Department of Mathematics, University of Athens, Athens, Greece}~\thanks{Computer Technology Institute \& Press  ``Diophantus'', Patras, Greece}}

\maketitle

\vspace{0mm}
\begin{abstract}\medskip\medskip

\noindent In this paper we design {\sf FPT}-algorithms for two parameterized problems.
The first is   \textsc{List Digraph Homomorphism}:
given two digraphs $G$ and $H$ and a list of allowed vertices of $H$ for
every vertex of $G$, the question is
whether there exists a homomorphism from $G$ to $H$
respecting the list constraints.
The second problem is a variant of \textsc{Multiway Cut}, namely \textsc{Min-Max Multiway Cut}:
given a graph $G$, a non-negative integer $\ell$,
 and a set $T$ of $r$ terminals, the question is
whether we can partition the vertices of $G$ into $r$
parts such that (a) each part contains one terminal and
(b) there are at most $\ell$ edges with only one endpoint in this part.
We parameterize \textsc{List Digraph Homomorphism}
by the number $w$ of edges of $G$ that are mapped to non-loop edges of $H$
and we give a time
$2^{O(\ell\cdot\log h+\ell^2\cdot \log \ell)}\cdot n^{4}\cdot \log n$ algorithm,
where $h$ is the order of the host graph $H$.
We also prove that
\textsc{Min-Max Multiway Cut} can be solved in time $2^{O((\ell r)^2\log \ell r)}\cdot n^{4}\cdot \log n$.
Our approach introduces a general problem, called  {\sc List Allocation}, whose expressive power
permits the design of parameterized  reductions of both aforementioned problems to it.
Then our results are based on an {\sf FPT}-algorithm for the  {\sc List Allocation} problem
that is designed using a  suitable adaptation of the {\em randomized contractions} technique
(introduced by [Chitnis, Cygan, Hajiaghayi, Pilipczuk, and Pilipczuk, FOCS 2012]).
\end{abstract}

\noindent{\bf Keywords:} Parameterized complexity; Fixed-Parameter Tractable algorithm; Multiway Cut, Digraph homomorphism.

\section{Introduction}
\label{sec:intro}

The {\sc Multiway Cut} problem asks, given a graph $G$, a set of $r$ terminals $T$, and a non-negative integer $\ell$,
whether it is possible to partition $V(G)$ into $r$ parts such that each part contains exactly one of the terminals of $T$ and
there are at most $\ell$ edges between different parts (i.e., at most $\ell$ crossing edges).
In the special case where $|T|=2$, this gives the celebrated {\sc Minimum Cut}
problem, which is polynomially solvable~\cite{StWa97}. In general, when there is no restriction on the number
of terminals, the  {\sc Multiway Cut} problem  is {\sc NP}-complete~\cite{DahlhausJPSY94thec}
and a lot of research has been devoted
to the study of this problem and its generalizations, including several classic results
on its polynomial  approximability~\cite{GoemansW95impr,ChekuriGN06thes,EvenNSS98appro,GargVY04mult,KargerKSTY99roun,Ravi02appr}.

More recently, special attention to the  {\sc Multiway Cut} problem was given from the parameterized complexity
point of view. The existence of an {\sf FPT}-algorithm for
{\sc Multiway Cut} (when parameterized by $\ell$), i.e., an $f(\ell)\cdot n^{O(1)}$-step algorithm,
had been a long-standing open problem. This question was answered positively
by Marx in~\cite{Marx06para} with the use of  the {\em important separators technique} which was
also used for the design of {\sf FPT}-algorithms for several other problems
such as  {\sc Directed Multiway Cut} ~\cite{ChitnisHM12fixe},
{\sc Vertex Multicut}, and {\sc Edge Multicut}~\cite{MarxR14fixe}.
%,ChenLL09anim}
This  technique has been  extended to the powerful framework of {\sl randomized
contractions technique}, introduced in~\cite{ChitnisCHPP12desi}. This   made it
possible to design  {\sf FPT}-algorithms for
several other problems such as {\sc  Unique Label Cover}, {\sc Steiner Cut},
{\sc Edge/Vertex Multiway Cut-Uncut}.
We stress that this technique is quite versatile.

In this paper we use
it in order
to design {\sf FPT}-algorithms for parameterizations of two problems that do not seem to be directly related to each other:
the   {\sc Min-Max-Multiway Cut} problem~\cite{SvitkinaT04minm} and the {\sc List Digraph Homomorphism} problem.
%There are  \textsc{Min-Max Multiway Cut} and \textsc{Arc-specified List Digraph Homomorphism}.
%The first is a known variant of  \textsc{Multiway Cut} parameterized by $w$ and $r$ and the second
%is a specific parameterization of the  {\sc List Digraph Homomorphism} problem.

\vspace{0mm}\subsection{\textbf{{Min-Max-Multiway Cut}}} In the  {\sc Multiway Cut} problem the parameter $\ell$ bounds
the total number of crossing edges (i.e., edges with endpoints in different parts).
Svitkina and Tardos~\cite{SvitkinaT04minm} considered a ``min-max'' variant of this problem, namely the {\sc Min-Max-Multiway Cut},
where $\ell$ bounds  the maximum number of outgoing edges of the parts\footnote{Notice that under this viewpoint
{\sc Multiway Cut} can be seen as {\sc Min-Sum-Multiway Cut}.}.
In~\cite{SvitkinaT04minm}, it was proved that  {\sc Min-Max-Multiway Cut} is {\sf NP}-complete even when the number of terminals is $r=4$.
As a consequence of the results in~\cite{SvitkinaT04minm} and~\cite{Racke08opti},
{\sc Min-Max-Multiway Cut} admits
 an $O(\log^2 n)$-approximation algorithm. This was improved
 recently in~\cite{BansalFKMNNS11minm} to a $O((\log n\cdot \log r)^{1/2})$-approximation algorithm.

To our knowledge, nothing is known about the parameterized complexity of this problem.
We prove the following.

 \begin{theorem}
 \label{mmmc}
There exists an algorithm that solves the {\sc Min-Max-Multiway Cut}
problem in  $2^{O((r\ell)^2\log r\ell)}\cdot n^{4}\cdot \log n$ steps,
 i.e., {\sc Min-Max-Multiway Cut} belongs to {\sf FPT}  when parameterized by both $r$ and $\ell$.
\end{theorem}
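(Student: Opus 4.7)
The plan is to introduce an auxiliary problem, \textsc{List Allocation} (\la), which asks, given a graph $G$, an integer $r$, a list $L(v) \subseteq \intv{1}{r}$ for every vertex $v \in V(G)$, and per-part capacity bounds on the number of crossing edges incident to each part, whether there exists an assignment $\phi: V(G) \to \intv{1}{r}$ respecting all lists and capacities. The value of this intermediate formulation is that it separates the combinatorial constraints (who can be placed where) from the counting constraints (how many edges cross each part). The proof will then split into two conceptually independent parts: first, a parameterized reduction from \textsc{Min-Max Multiway Cut} to \la, and second, an \textsf{FPT}-algorithm for \la.

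The reduction is direct: given an instance of \textsc{Min-Max Multiway Cut} with terminals $T=\{t_1,\dots,t_r\}$ and bound $\ell$, build an \la-instance on the same graph $G$ with $L(t_i)=\{i\}$ and $L(v)=\intv{1}{r}$ for every non-terminal $v$, and with uniform per-part capacity $\ell$. Feasible solutions correspond bijectively. The key numerical observation driving the complexity is that, in any feasible solution, each crossing edge contributes to exactly two of the per-part counts, so the \emph{total} number of crossing edges is at most $k := r\ell/2$, a quantity bounded by a function of the parameters.

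The main step is the \textsf{FPT}-algorithm for \la, which I would obtain by adapting the \emph{randomized contractions} framework of Chitnis et al. The goal is to produce, in \textsf{FPT}-time, a family of vertex partitions $(X,Y)$ such that, with good probability, the cut induced by the desired assignment aligns with some member of the family in the sense that one of the two sides is both small and ``internally simple'' (e.g., of bounded treewidth after appropriate contractions). Once such $(X,Y)$ is at hand, the simple side is processed by dynamic programming whose state, at each bag, records the partial assignment of bag vertices to the $r$ parts together with a vector in $\intv{0}{\ell}^r$ counting the crossing edges already charged to each part. Because the total cut is bounded by $k=r\ell/2$ and the per-part load never exceeds $\ell$, the number of states per bag is at most $(r\ell)^{O(r\ell)}$, which yields the claimed factor $2^{O((r\ell)^2\log r\ell)}$. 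The $n^{4}\cdot \log n$ polynomial comes from iterating the contraction/separation procedure on the surviving graph, as in the original framework.

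The main obstacle I anticipate is the alignment step: classical randomized contractions are tailored to produce a single bounded-size separator, whereas here the target solution is a partition into up to $r$ blocks, each contributing its own small boundary. I would overcome this by exploiting the \la\ formulation itself: after guessing (via the contraction procedure) the assignment of one ``peeled off'' block, the residual instance is again an \la-instance with strictly shorter lists on the boundary, allowing a recursive/inductive call in which the parameters $r$ and $\ell$ control both the branching factor and the depth. Ensuring that this recursion keeps the overall running time within $2^{O((r\ell)^2\log r\ell)}\cdot n^{4}\cdot \log n$ and that the success probability aggregates correctly over all recursive calls (standard derandomization via splitters applies at the end) will be the most delicate point, and is precisely what establishes Theorem~\ref{mmmc}.
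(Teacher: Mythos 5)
Your overall architecture matches the paper's (reduce \textsc{Min-Max Multiway Cut} to a list-allocation problem, then solve that problem via randomized contractions), your reduction is fine, and the observation that the total cut has size at most $r\ell/2$ is correct. But there are two genuine gaps. First, your formulation of \la\ with per-part capacity \emph{upper bounds} is essentially just a list version of \textsc{Min-Max Multiway Cut} itself: the reduction becomes trivial precisely because all of the difficulty is pushed into the claimed {\sf FPT}-algorithm for \la, which you never actually supply. The paper's \la\ instead prescribes the number of crossing edges between every pair of boxes \emph{exactly} (and the reduction pays $2^{O(r\cdot \min\{\ell\log r, r\log\ell\})}$ to guess these values up front, in Theorem~\ref{thm:minmax2la}). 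This exactness is not cosmetic: it is what makes the recursive-understanding step sound. In Lemma~\ref{lem:contraction-equivalent}, an edge set $E_{\cal C}$ may be contracted because the finite set $\tilde{\frak{U}}(I[V_1],B)$ of boundary assignments paired with exact cut-value functions $\alpha'\leq\alpha$ captures every way a global solution can interact with the piece $V_1$, and partial solutions on the two sides of a separation compose additively. With only ``at most $\ell$ per part'' constraints, the contribution of a piece to each part's budget is undetermined, so ``equivalent after contraction'' is not even well defined until you enumerate budget splits --- which amounts to reintroducing the exact counts.

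Second, your sketch of the algorithm misdescribes the randomized contractions technique. The framework never produces a side of ``bounded treewidth'', and there is no tree decomposition with bags of size $O(r\ell)$ on which to run your bag-based DP: the input graph can have unbounded treewidth even when the solution cut has size $O(r\ell)$, and that is exactly the situation the technique exists to handle. What actually happens (Subsections~\ref{jj8poo4yyy}--\ref{mnr34op973e}) is: (i) recursive understanding reduces to $(f_2(w),w+1)$-connected instances by repeatedly contracting edges that lie in no solution of the boundary instances; (ii) in such instances every solution has one big part while all other parts together contain at most $w\cdot f_2(w)$ vertices (Lemma~\ref{opensi}); (iii) the derandomized splitter family of Proposition~\ref{jrif6thf} is used to find a set $S$ containing the big part's boundary and avoiding the small parts; (iv) the final step is a dynamic program over the connected components of $G\setminus S$, not over bags of a decomposition. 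Your complexity arithmetic is also a non sequitur: $(r\ell)^{O(r\ell)}=2^{O(r\ell\log r\ell)}$ does not ``yield'' $2^{O((r\ell)^2\log r\ell)}$; in the paper the quadratic exponent arises from the splitter family size $2^{O(w\cdot\log(w\cdot f_2(w)))}=2^{O(w^2\log w)}$ with $w=O(r\ell)$, and from the brute-force step of Lemma~\ref{j4oqndua} on instances of size $f_2(w)$. As written, your proposal identifies the right strategy but leaves the core of Theorem~\ref{mmmc} --- the {\sf FPT}-algorithm of Theorem~\ref{u7u33ejk} --- unproven.
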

(Throughout the paper, we use $n=|V(G)|$ when we refer to the
number of vertices of the graph $G$ in the instance of the considered problem.)

%
% The proof is almost the same
% {\sf T-FPT}-reduction to the {\sc List Allocation} problem as the one we described before for {\sc Multiway Cut}. The only  difference is that we now define ${\cal A}$
% so to contain every $\alpha$ such that $$\forall i\in\{1,\ldots,w\},\  \sum_{j\in\{1,\ldots,w\}\setminus\{i\}}\alpha(i,j)\leq w.$$
 %As a result of this, {\sc Min-Max-Multiway Cut} can be solved in
% In particular we p
%
%. The details are presented in Subsection~\ref{io4nfhsod8f}.

\vspace{0mm}\subsection{\textbf{{List Digraph Homomorphism}}}
Given two directed graphs $G$ and $H$, an {\em $H$-homomorphism} of $G$
is a mapping $\chi: V(G)\rightarrow V(H)$ such that if $(x,y)$ is an arc of $G$, then
$(\chi(x),\chi(y))$ is also an arc in $H$.
In the {\sc List Digraph Homomorphism} problem,
we are given two graphs $G$ and $H$ and a list function $\lambda: V(G)\rightarrow 2^{V(H)}$
and we ask whether $G$ has a $H$-homomorphism such that
for every vertex $v$ of $G$, $\chi(v)\in\lambda(v)$.
%
%The {\sc List Digraph Homomorphism} problem,
%we are given two digraphs $G$ and $H$ and a list function $\lambda: V(G)\rightarrow 2^{V(H)}$
%and we ask whether there  is a mapping $\chi: V(G)\rightarrow V(H)$ such
%that if $(x,y)$ is an arc of $G$, then
%$(\chi(x),\chi(y))$ is also an arc in $H$
%and for every vertex $v$ of $G$, $\chi(v)\in\lambda(v)$.
Graph and digraph homomorphisms have been extensively studied
both from the combinatorial and the algorithmic point of view
(see e.g.,~\cite{HellN04grap,BrownMSW08grap,
EgriKLP12thec,FederHH99list,
FederHH03biar}).

Especially for the  {\sc List Digraph Homomorphism} problem,
a dichotomy characterizing  the instantiations of $H$ for which the problem is hard was given in~\cite{HellR11thed} (see also~\cite{EgriHLR14spac}).
Notice that the standard parameterization of {\sc List Digraph Homomorphism}
by the size of the graph $H$ is {\sf para-NP}-complete as it  yields the {\sc 3-Coloring} problem
when $G$ is restricted to be a simple graph and $H=K_{3}$.
A more promising
parameterization of {\sc List Homomorphism}  (for undirected graphs) has been introduced in~\cite{DiazST01colo},
where the parameter is a bound on the number of pre-images of some prescribed set of vertices of $H$ (see also~\cite{DiazST08effi,DiazST04fixe,
MarxOR13find}). Another parameterization, again for the undirected case,
was introduced in~\cite{ChitnisEM13list}, where
the parameter is the number of vertices to be removed from the graph $G$ so that the remaining graph
has a list $H$-homomorphism.

We introduce a new parameterization
of  {\sc List Digraph Homomorphism}  where the parameter
%\ig{we have to unify $w$ and $w$}
is, apart from $h=|V(H)|$, the number of ``crossing edges'', i.e., the edges of $G$ whose endpoints
are mapped to different vertices of $H$.  For this, we enhance the input with an integer $\ell$
and ask for a list digraph homomorphism with at most $\ell$ crossing edges. Clearly, when $\ell=|E(G)|$, this
yields the original problem. We call the new problem  {\sc Bounded  List Digraph Homomorphism} (in short, {\sc BLDH}).
Notice that the fact that {\sc   List Digraph Homomorphism} is {\sf NP}-complete even when $h=3$,
implies that {\sc BLDH} is {\sf para-NP}-complete when parameterized only by $h$. The input of
 {\sc BLDH} is a quadruple $(G,H,\lambda,\ell)$ where $G$ is the guest graph, $H$ is the host graph,   $\lambda: V(G)\rightarrow 2^{V(H)}$ is the list function and $\ell$ is a non-negative integer.
Our next step is to observe  that {\sc BLDH}  is {\sf W}$[1]$-hard, when parameterized only by $\ell$.
To see this consider an input $(G,k)$ of the {\sc Clique} problem
and construct the input $(K,\bar{G},\lambda,\ell)$ where
$K$ is a the complete digraph on $k$ vertices, $\bar{G}$ is the digraph obtained by $G$ by replacing each edge by two opposite direction arcs between  the same endpoints, $\lambda=\{(v,V(G))\mid v\in V(K)\}$, and $\ell=k(k-1)$.
Notice that $(G,k)$ is a {\sc yes}-instance of {\sc Clique} iff $(K,\bar{G},\lambda,\ell)$ is a {\sc yes}-instance
of {\sc BLDH}.

We conclude that when  {\sc BLDH} is parameterized by $\ell$ or $h$ only, then
one may not expect it to be fixed parameter tractable.
This means that the parameterization of  {\sc BLDH} by $h$ and $\ell$ is meaningful
to consider. Our result is the following.
%
%Our next step is to show that {\sc BLDH}, when parameterized only by $h=|V(H)|$ is {\sf para-NP}-complete
%even when $G$ is a series-parallel graph.
%For this, it is enough to prove that {\sc BLDH} is {\sf NP}-hard on series-parallel graphs for every fixed $r=3$.
%For this, we present a simple reduction from the \textsc{Partition} problem. Given an instance $I=\{a_1, \ldots, a_t\}$ of \textsc{Partition}, we build an instance $I'=(G,H,\lambda,\ell)$ of {\sc BLDH} as follows: Let $H=(\{1,2,3\},\{\{1,2\},\{1,3\}\})$
%Let us start to construct a planar graph $G$ from an isolated vertex $r$ where $\lambda(r)=1$.
% For each $i \in [t]$, add to $G$
%a path on $a_i$ new vertices and connect each of these vertices to $r$. Clearly, the resulting graph $G$ is planar.
%For each vertex $v$ different from $r$ we set $\lambda(v)=\{2,3\}$. Finally,  set $\ell=\frac{1}{2}\sum_{i\in[t]}a_{i}$.
%It is clear that
%$I'$  is a  {\sc yes}-instance of \la\ if and only if $I$ is a {\sc yes}-instance of {\sc Partition}.
%Using the standard terminology of parametrized complexity,
%this reduction proves that {\sc LA} is {\sf para-NP}-complete
%when parameterized by $r$.

%
\begin{theorem}
\label{ldh}
There exists an algorithm that solves the  {\sc Bounded List Digraph Homomorphism}
problem in   $2^{O(\ell\cdot\log h+\ell^2\cdot \log \ell)}\cdot n^{4}\cdot \log n$ steps,
i.e.,  {\sc Bounded List Digraph Homomorphism}  belongs to {\sf FPT}
when parameterized by  the number $\ell$ of crossing edges and the number $h$ of vertices of $H$.
\end{theorem}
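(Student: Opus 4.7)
The plan is to derive Theorem~\ref{ldh} by a parameterized reduction from \textsc{Bounded List Digraph Homomorphism} to the \textsc{List Allocation} problem announced in the abstract, and then invoke the \textsf{FPT}-algorithm for \textsc{List Allocation}.

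Given an instance $(G, H, \lambda, \ell)$ of \textsc{BLDH}, I would construct a \textsc{List Allocation} instance whose ground set is $V(G)$ and whose parts are indexed by $V(H)$; thus the target homomorphism $\chi$ corresponds to allocating each $v\in V(G)$ to one part. Each vertex $v$ is restricted to the parts in $\lambda(v)$. For every arc $(u,v)$ of $G$, if $u$ lands in part $a$ and $v$ in part $b$, then $(a,b)$ must be an arc of $H$; in particular the case $a=b$ is permitted only when $H$ has a loop at $a$, and the edge is charged against the crossing budget $\ell$ precisely when $a\neq b$. Consequently, the number of crossing edges of the allocation equals the number of edges of $G$ mapped to non-loop arcs of $H$, so the original $\ell$-bound transfers without change, and the size of the constructed instance remains linear in $|V(G)|+|E(G)|$.

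With the reduction in place, Theorem~\ref{ldh} reduces to applying the \textsc{List Allocation} algorithm, whose parameters here are the number of parts $h=|V(H)|$ and the crossing budget $\ell$. Plugging $h$ and $\ell$ into the running time of the \textsc{List Allocation} algorithm (obtained through a suitable adaptation of the randomized contractions technique) yields $2^{O(\ell\cdot \log h+\ell^2\cdot\log\ell)}\cdot n^{4}\cdot \log n$: the $\ell\cdot\log h$ term reflects the $\log h$-bit choice of an $H$-arc attached to each of the $\ell$ crossing edges, while the $\ell^2\cdot\log\ell$ term is the combinatorial overhead arising once the randomized contractions step has collapsed the ``safe'' part of the instance and one has to guess the interaction pattern among $O(\ell)$ surviving boundary objects.

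The main technical obstacle is to model the arc set of $H$ as \textsc{List Allocation} constraints in a way that (i) simultaneously forbids edges of $G$ whose pair of parts is not an arc of $H$ and limits to $\ell$ the number of edges whose pair of parts is a non-loop arc, and (ii) does not inflate the parameters beyond $O(h)$ and $O(\ell)$. In particular, \textsc{List Allocation} is tailored in this paper (cf.\ Theorem~\ref{mmmc}) to handle per-part boundary counts rather than a single global count, and one has to verify that its expressive power also accommodates the global cap needed by \textsc{BLDH} while respecting list restrictions. Once this modelling step is carried out, the remaining reasoning is the routine invocation of the \textsc{List Allocation} algorithm.
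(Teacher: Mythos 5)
Your high-level plan (reduce \textsc{BLDH} to \textsc{List Allocation} and invoke its \textsf{FPT}-algorithm) is indeed the paper's strategy, but the reduction you sketch does not work with the \textsc{List Allocation} problem as the paper defines it, and the step you defer as ``the main technical obstacle'' is precisely the content of the paper's proof. Concretely, \la\ is a problem on \emph{undirected} graphs whose constraints are (a) exact prescribed counts $\alpha(i,j)$ of crossing edges for each \emph{unordered} pair of boxes, and (b) list restrictions; it imposes \emph{no} constraint on edges inside a single box and has no global budget semantics. Your direct encoding (ground set $V(G)$, boxes $V(H)$) therefore fails on three counts: it cannot distinguish an arc $(u,v)$ mapped to $(a,b)$ from one mapped to $(b,a)$, so a non-symmetric arc set of $H$ is not enforced; it cannot forbid an edge of $G$ whose two endpoints land in the same box $a$ when $H$ has no loop at $a$; and ``charged against the crossing budget $\ell$'' has no counterpart in \la, which demands exact per-pair counts. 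The paper resolves these issues with machinery you do not supply: it first guesses, for every non-loop arc $e$ of $H$, the exact number $\alpha(e)$ of edges of $G$ mapped onto $e$ (the intermediate problem \textsc{ASLDH}, Theorem~\ref{kl6iq2zmn6}, costing the $2^{O(\ell\log h)}$ factor you attribute to a different mechanism), and then encodes direction and loops by subdividing every arc of $G$ twice (vertices $f_{uv},\ell_{uv}$) and every non-loop arc of $H$ twice (boxes $\tilde{f}_{xy},\tilde{\ell}_{xy}$), with lists designed so that the subdivision vertices can only sit either on a loop-box or on the two middle boxes of the correctly oriented subdivided $H$-arc (Theorem~\ref{thm:hom2la}).

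A second, quantitative gap: the paper's gadget produces an \la\ instance with $|V(G')|=O(|E(G)|)$ vertices, which in general is $O(\ell\cdot n^{2})$; plugged naively into the $2^{O(w^2\log w)}\cdot n^4\log n$ algorithm for \la\ this would ruin the claimed polynomial factor. The paper therefore precedes the reduction by a sparsification step (Lemma~\ref{sparsehom}): contracting the connected components of a $(d+1)$-edge-connected core of $G$ (which any valid homomorphism must map to a single vertex of $H$) yields an equivalent instance with $|E(G)|=O(\ell\cdot|V(G)|)$. Your assertion that ``the size of the constructed instance remains linear in $|V(G)|+|E(G)|$'' is true but beside the point; without the sparsifier the stated $n^{4}\cdot\log n$ dependence is not obtained. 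In short, the skeleton of your argument is right, but the two ideas that make it go through --- the exact-count arc-specified intermediate problem plus the double-subdivision gadget, and the edge-connected-core sparsifier --- are missing, and what you propose in their place would fail against the actual definition of \la.
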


\vspace{0mm}\subsection{List Allocation}
In order to prove Theorems~\ref{mmmc} and~\ref{ldh}, we prove that both  problems
are Turing {\sf FPT}-reducible%
%\sed{\small Reviewer says:
%In the footnote, this is not how Turing FPT-reductions are formally
%defined. Also the other says that "Give a citation for the standard reduction"}
\footnote{Let ${\bf A}$ and ${\bf B}$ be two parameterized problems.  We say that a parameterized problem ${\bf A}$ is {\em Turing  {\sf FPT}-reducible} to ${\bf B}$ when the existence of an {\sf FPT}-algorithm for ${\bf B}$ implies the existence of an  {\sf FPT}-algorithm for ${\bf A}$. (For brevity, in this paper,  we write  ``{\sf T-FPT}'' instead of  ``Turing {\sf FPT}''.)}
to a single new problem that we call  {\sc List Allocation} (in short, \la).

The {\sc List Allocation} problem is defined as follows:
We are given a graph $G$ and a set of $r$ ``boxes'' indexed  by  numbers from $\{1,\ldots,r\}$.
Each vertex $v$ of $G$ is accompanied with a list $\lambda(v)$ of indices corresponding to the boxes where it is allowed  to be allocated.
Moreover, there is a weight function $\alpha$ assigning to every pair of  different boxes a non-negative integer.
The question is whether there is a way to place  each of the vertices of $G$
into some box of its list such that,  for any two different boxes $i$ and $j,$ the number
of crossing edges between them
is {\sl exactly} $\alpha(i,j)$.

As we easily see in Subsection~\ref{thloca},
%By a straightforward reduction from {\sc Max Cut}, it follows that
 {\sc List Allocation} is {\sf NP}-complete, even when $r=2$.
 Throughout this paper, we parameterize the {\sc List Allocation} problem by the total number $w$ of ``crossing edges''
between different boxes, i.e.,   $w=\sum_{1\leq i<j\leq r} \alpha(i,j)$.

Our main result is that this parameterization of \la\ is in {\sf FPT}.

\begin{theorem}
\label{u7u33ejk}
There exists an algorithm that, given as input  an instance $I=(G,r,\lambda,\alpha)$ of
\allocation, returns an answer to this problem in $2^{O(w^2\cdot \log w)}\cdot n^{4}\cdot \log n$ steps, where $w=\sum_{1\leq i<j\leq r} \alpha(i,j)$.
\end{theorem}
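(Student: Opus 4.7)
The plan is to apply the randomized contractions framework of Chitnis, Cygan, Hajiaghayi, Pilipczuk, and Pilipczuk, adapted to enforce the exact edge-count constraints dictated by the weight function $\alpha$. The guiding structural observation is that, in any valid allocation, the total edge-boundary across all boxes equals $w$; hence each box is either of size bounded by a function of $w$ or is ``unbreakable'' in the sense of having no small balanced edge cut, since such a cut would exceed the budget $w$.

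At the top level, the algorithm performs a recursive decomposition by invoking a \emph{recursive understanding} subroutine which, in polynomial time via max-flow computations, either produces a separation $(A,S,B)$ of $G$ with $|S|=O(w)$ and $|A|$ bounded by some $f(w)$, or certifies that the current subgraph is $(s,k)$-unbreakable for parameters $s,k=\mathrm{poly}(w)$. In the separation case, the algorithm branches over (i) the allocation of each vertex of $S$ to a box in its list, and (ii) the assignment of every crossing edge incident to $A\cup S$ to one of the box-pairs counted by $\alpha$; it then recurses on $G[A\cup S]$ and $G[B\cup S]$ with appropriately restricted lists and an updated weight function $\alpha'$ obtained by subtracting the crossing edges that have already been committed.

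In the unbreakable case, no cut of size at most $w$ can separate the subgraph into two large parts, so any valid allocation must place all but $\mathrm{poly}(w)$ vertices into a single dominant box $i^\ast$. We guess $i^\ast$ from among the boxes appearing in the lists, guess the allocation of the small set of vertices incident to crossing edges, and then verify compatibility with $\lambda$ and with $\alpha$; the remaining vertices are all allocated to $i^\ast$ provided this is consistent with their lists. The output of the two regimes is then merged by a straightforward bookkeeping step that checks that the sum of crossing edges across all recursive calls equals $\alpha$.

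The main obstacle is maintaining the \emph{exact} counts $\alpha(i,j)$ across recursive calls: at each separation, the crossing edges located inside each side must be distributed correctly among the box-pairs. This is handled by guessing, for every pair $(i,j)$, the number of crossing edges contributed by each side, which adds a factor of order $2^{O(w\log w)}$ per recursion level; combined with the $2^{O(w\log w)}$ cost of branching over the allocations of $S$ and a bounded recursion depth inherent in the recursive-understanding framework, this yields the claimed $2^{O(w^{2}\log w)}$ dependence on the parameter. The polynomial factor $n^{4}\cdot\log n$ in the final running time comes from the max-flow subroutines used to locate separators and to test unbreakability at each level of the recursion.
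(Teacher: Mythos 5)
Your high-level plan (recursive understanding plus special treatment of highly connected parts, in the style of Chitnis et al.) is indeed the strategy of the paper, but as written the two load-bearing steps fail. First, the recursion structure: you branch over the $2^{O(w\log w)}$ allocations of the separator and the split of $\alpha$ \emph{before} recursing on both sides. The recursion depth in this framework is not bounded by any function of $w$: your separation only guarantees that one side has size at most $f(w)$, so the recursion can peel off $O(f(w))$ vertices per level and have depth $\Omega(n/f(w))$. Hence the per-level branching factors multiply along the recursion and give a running time of the form $2^{O(w\log w\cdot n/f(w))}$, which is not {\sf FPT}; your appeal to ``a bounded recursion depth inherent in the recursive-understanding framework'' is exactly the gap, since no such bound exists. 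The point of recursive understanding --- and what the paper actually does --- is to reverse the order of guessing and recursion: the recursive call on one side $V_1$ returns solutions for \emph{all} boundary behaviors $(\psi,\alpha')\in\tilde{\frak{U}}(I[V_1],B)$ at once, and then $V_1$ is replaced by a bounded-size equivalent obtained by \emph{contracting} every edge of $G[V_1]$ used in none of those solutions (Lemma~\ref{lem:contraction-equivalent} and the algorithm \textbf{shrink} in Lemma~\ref{u83nd01nfbsyr}). Because no guess is committed before recursing, the costs add over the polynomially many recursive calls instead of multiplying.

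Second, the unbreakable case: you ``guess the allocation of the small set of vertices incident to crossing edges,'' but this set is itself unknown; it has size $O(w)$, yet there are $n^{\Theta(w)}$ candidate sets, so literal guessing gives an {\sf XP} factor, not {\sf FPT}. This is precisely where the (derandomized) randomized-contractions tool must be deployed: the paper uses the splitter-type family of Proposition~\ref{jrif6thf} to produce only $2^{O(w^{2}\log w)}\cdot\log n$ candidate sets $S$, one of which is guaranteed to contain $\partial_{G}({\cal V}^{(j)})$ of the dominant part $j$ while avoiding all other parts (Lemma~\ref{opensi} and Lemma~\ref{p01j5ds}), and then solves the resulting split instance by dynamic programming over the components of $G\setminus S$ (Lemma~\ref{opp4pp4kjj4j}). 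You invoke the framework by name but never supply this mechanism, and it is where the $2^{O(w^{2}\log w)}$ parameter dependence actually comes from. Two smaller omissions: disconnected inputs need a separate dynamic program over components (Lemma~\ref{d8h4kjks}), since the dominant-box argument requires connectivity; and the relevant separations here are edge cuts, not vertex separators $(A,S,B)$.
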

To witness the expressive power of  {\sc List Allocation},
let us first exemplify  why {\sc Multiway Cut}, parameterized by $w$,
is {\sf T-FPT}-reducible to  {\sc List Allocation}. Given an instance  of {\sc Multiway Cut}, we first
discard from its graph  all  the connected components that have at most 1 terminal. Clearly,
this gives an equivalent instance  $(G,T=\{t_{1},\ldots,t_{r}\},w)$ where $r\leq w+1$.

Next, we consider
%the set ${\cal M}$ containing all bijections $\mu: T\rightarrow [l]$ and
  the set ${\cal A}$ containing every
 weight function $\alpha$ such that $\sum_{1\leq i<j\leq r} \alpha(i,j)\leq w.$
 Let also  $\lambda: V(G) \to 2^{[r]}$ be the list function such that if $v=t_{i}\in T$, then $\lambda(v)=\{i\}$, otherwise
 $\lambda(v)=\{1,\ldots,r\}$.
 %For every $\mu\in{\cal M},$ we  set up the list function $\lambda_{\mu}$
  %such that, for every
%vertex $v$,
% $\lambda_{\mu}(v)=\mu(v)$ if $v$ is a terminal, otherwise,  $\lambda_{\mu}(v)=\{1,\ldots,w\}$.
 %
 It is easy to verify
 that $(G,T,w)$ is
a {\sc yes}-instance of {\sc Multiway Cut}  if and only if there exists
%some $\mu\in{\cal M}$ and
some $\alpha\in{\cal A}$ such that
$(G,r,\lambda,\alpha)$ is a {\sc yes}-instance of  {\sc List Allocation}.
This yields the claimed reduction, as $|{\cal A}|$ is clearly bounded by some function of $w$.
This reduction to  the {\sc List Allocation} problem turns out to be quite flexible
and, as we will see in  Subsection~\ref{subsec:minmax2la} (Theorem~\ref{thm:minmax2la}), it can easily be adapted to a
{\sf T-FPT}-reduction of   {\sc Min-Max-Multiway Cut} to {\sc List Allocation}.
The reduction of {\sc Bounded List Digraph Homomorphism}  to {\sc List Allocation}
is more complicated and is described in Subsection~\ref{subsec:hom2la} (Theorem~\ref{thm:hom2la}).
%by suitably
%adapting the definition of $\lambda$ and
%the set ${\cal A}$, we can easily {\sf T-FPT}-reduce more problems
% to  {\sc List Allocation}.
%
Theorem~\ref{u7u33ejk}, together with the aforementioned reductions,
yields Theorems~\ref{mmmc} and~\ref{ldh}.

 \vspace{0mm}\section{Preliminaries and the definition of \allocation}
\label{sec:prelim}

\vspace{0mm}\subsection{\textbf{{Functions and allocations}}}

We use the notation $\log(n)$ to denote $\lceil \log_{2}(n)\rceil$ for $n\in\Bbb{Z}_{\geq 1}$
and we agree that  $\log(0)=1$. Given a non-negative integer $n$, we denote by $[n]$
the set of all positive integers no bigger than $n$.
Given a finite set $A$ and an integer $s\in\Bbb{Z}_{\geq 0}$, we denote by ${A\choose s}$   (resp. ${A\choose \leq s}$ )
the set of all subsets of $A$ with exactly (resp. at most) $s$ elements.
%
%
%Given a collection ${\cal F}$ of sets or graphs, we define $\cupall{\cal F}=\bigcup_{S\in{\cal F}}S$.
%%
 Given a
 function $f:A\rightarrow\Bbb{Z}_{\geq 0}$ we define $\sum f=\sum_{x\in A}f(x)$.
An {\em $r$-allocation}  of a set $S$ is an $r$-tuple ${\cal V}=(V_{1},\ldots,V_{r})$ of, possibly empty, sets that are pairwise disjoint and
whose union is the set $S$. We refer to the elements of ${\cal V}$ as the {\em parts} of ${\cal V}$
and we denote by ${\cal V}^{(i)}$ the $i$-th part of ${\cal V}$, i.e., ${\cal V}^{(i)}=V_{i}$.

\vspace{0mm}\subsection{\textbf{{Definitions about graphs}}}

In this paper, when giving the running time of an algorithm
of some problem whose instance involves a graph $G$, we
agree that  $n=|V(G)|$ and $m=|E(G)|$.

All graphs in this paper are loopless and they may have multiple edges. The only  exception to this agreement is in Subsection~\ref{subsec:hom2la} where we also  allow loops.
%
%For a vertex set $S \subseteq V(G)$, we define $N_G(S)$ as the set of vertices in $V(G) \setminus S$ with at least one neighbor in $S$, and $N_G[S] = N_G(S) \cup S$.
%
If $G$ is a
 graph and $X$, $Y$ are  two disjoint vertex subsets
of $V(G)$, we define $\delta_{G}(X,Y)$ as the set of edges with
one endpoint in $X$ and the other in $Y$.
Given a graph $G$,  denote by ${\cal C}(G)$  the collection of all connected components of ${ G}$.
%Given an $S\subseteq V(G)$, we denote by $G[S]$ the subgraph of $G$ induced
%by $S$ and we also denote $G^{+}[S]=G[S\cup N_{G}(S)]$.

%

\vspace{0mm}\subsection{The list allocation problem}
\label{thloca}

\vspace{0mm}
We define the problem \la\ as follows.

\begin{center}
\fbox{\begin{minipage}{13.5cm}
\noindent {\sc \allocation} (\la)\\
\noindent {\sl Input:} A tuple $I=(G,r,\lambda,\alpha)$
where $G$ is a graph, $r\in\Bbb{Z}_{\geq 1}$, $\lambda:V(G)\rightarrow 2^{[r]}$,  and
$\alpha: {[r]\choose 2}\rightarrow \Bbb{Z}_{\geq 0}$.\\
%\noindent {\sl Parameter:} $w=\sum\alpha$.
{\sl Output:} An $r$-allocation ${\cal V}$ of $V(G)$ such that
\begin{enumerate}\setlength\itemsep{.2em}
\item[~~~]  ~~{\bf 1.} $\forall \{i,j\}\in {[r]\choose 2}$, $|\delta_{G}({\cal V}^{(i)},{\cal V}^{(j)})|= \alpha(i,j)$ and
\item[~~~] ~~{\bf 2.} $\forall {v\in V(G)}, \forall i\in[r]$, if $v\in {\cal V}^{(i)}$ then $i\in\lambda(v)$,
\end{enumerate}
or a correct report that no such $r$-allocation exists.
\end{minipage}
}
\end{center}

\noindent For simplicity, in the above definition we write $\alpha(\{i,j\})$ as $\alpha(i,j)$ and we agree  that $\alpha(i,j)=\alpha(j,i)$. Also, given an instance $I$ of \la, we denote\footnote{Given a function $\tau: A\rightarrow \Bbb{Z}_{\geq 0}$, we
denote $\sum\tau=\sum_{x\in A}\tau(x)$.} $w(I) =\sum\alpha$. We will also
use $w$ instead of $w(I)$ when
it is clear what is the instance we are working with. We  assume that the multiplicity of
each edge in $G$ does not exceed $w$ as, if this happens, then reducing it
to $w$ creates an equivalent instance of the problem.

In the definition of \la\ each vertex $v$ of $G$ carries a {\em list} $\lambda(v)$
indicating the  parts where $v$ can be possibly allocated.
Moreover, $\alpha$  is a function assigning weights  to pairs of parts in ${\cal V}$.
The weights defined by $\alpha$
prescribe the precise number of crossing edges between distinct parts of ${\cal V}$.

Notice that \la\ is an {\sf NP}-hard problem by a simple reduction from the
{\sc Max Cut} problem, asking whether,
for an input graph $G$ and some $w\in\Bbb{Z}_{\geq 0}$, whether there is a partition $V_{1}$, $V_{2}$ of $V(G)$ such
that there are {\sl exactly\footnote{It is straightforward to see that the standard
reduction from {\sc Nae-3-Sat}  also works when the question of {\sc Max Cut} asks for exactly $w$ crossing edges instead of at least $w$ crossing edges.}} $w$ edges each with endpoints in both $V_{1}$ and $V_{2}$. Indeed, given an instance $I=(G,w)$ of {\sc Max Cut}, construct the instance $I'=(G,2,\lambda,
\alpha)$ where $\lambda(v)=\{1,2\}$ for every $v\in V(G)$ and $\alpha(1,2)=w$.
Note also that when $r=2$, \la\ is polynomially solvable on planar graphs as it directly
reduces to {\sc Planar Max Cut} that is polynomially solvable~\cite{Had75}.
%\hrule
%Moreover, \la\ remains {\sf NP}-hard on planar graphs for every fixed $r\geq 3$.
%For this, we present a simple reduction from the \textsc{Partition} problem. Given an instance $I=\{a_1, \ldots, a_t\}$ of \textsc{Partition}, we build an instance $I'=(G,3,\lambda,\alpha)$ of \la\ as follows: Let us start to construct a planar graph $G$ from an isolated vertex $r$ where $\lambda(r)=1$. For each $i \in [t]$, add to $G$ $a_i$ new vertices linked by a path, and connect each of these vertices to $r$.
%For each vertex $v$ different from $r$ we set $\lambda(v)=\{2,3\}$.
%Set $\alpha(2,3)=0$ and $\alpha(1,2)=\alpha(1,3)= \sum_{i=1}^t a_i/2$. It is clear that
%$I'$  is a  {\sc yes}-instance of \la\ if and only if $I$ is a {\sc yes}-instance of {\sc Partition}.
%Using the standard terminology of parametrized complexity,
%this reduction proves that {\sc LA} is {\sf para-NP}-complete
%when parameterized by $r$.

\section{Main reductions}
 \label{hegfnrymn}

In this section we formally define   {\sc Min-Max-Multiway Cut} and
 {\sc List Digraph Homomorphism} and we reduce them to {\sc List Allocation}.

\vspace{0mm}\subsection{Min-Max-Multiway Cut}
\label{subsec:minmax2la}

The  {\sc Min-Max-Multiway Cut} problem is formally defined as follows:
\begin{center}
\fbox{\begin{minipage}{13.5cm}
 {\sc  {\sc Min-Max-Multiway Cut}}\\
 {\sl Input:}  A tuple $I=(G,\ell,r,T)$ where $G$ is an undirected graph,
 $\ell,r\in\Bbb{Z}_{\geq 0}$, and $T\subseteq V(G)$ with $|T|=r$.\\
%
%\noindent {\sl Parameter:} $w\cdot r$.
%
{\sl Output:} A partition  $\{{\cal P}_{1},\ldots,{\cal P}_{r}\}$ of $V(G)$ such that for every ${i\in[r]},$ it holds that $|{\cal P}_{i}\cap T|=1$ and  $|\delta_{G}({\cal P}_{i},V(G)\setminus {\cal P}_{i})|\leq \ell$, or a correct report that no such partition exists.
\end{minipage}
}
\end{center}

%\ig{formally, in the above definition, the ``Question'' is not a question, but an ``Output''. This happens with all problems. For all of them, I would just keep ``Question'' and remove the part saying ``Or correctly report...''}
%

Similarly to  the case of \la, we assume that the multiplicity of each
edge in $G$ does not exceed $\ell$.

\begin{theorem}\label{thm:minmax2la}
If there is an algorithm that solves \la\ in $T(n,w(I))$ steps, then there exists an
algorithm that solves  {\sc Min-Max-Multiway Cut} in $2^{O(r\cdot \min\{\ell\cdot  \log r,r\cdot \log \ell\})}\cdot T(n,r\ell)$ steps.
\end{theorem}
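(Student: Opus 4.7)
The plan is to follow the same recipe used for {\sc Multiway Cut} in the excerpt, the only change being the family of weight functions $\alpha$ we enumerate. Given an instance $I=(G,\ell,r,T)$ of {\sc Min-Max-Multiway Cut} with $T=\{t_{1},\ldots,t_{r}\}$, I would first discard the connected components of $G$ containing no terminal: such components can be attached to any part without altering $|\delta_{G}({\cal P}_{i},V(G)\setminus{\cal P}_{i})|$ for any $i$, so they are irrelevant. I also cap every edge multiplicity at $\ell$, producing an equivalent instance.

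The core step is to enumerate the set ${\cal A}$ of all functions $\alpha:{[r]\choose 2}\to\Bbb{Z}_{\geq 0}$ satisfying $\sum_{j\in[r]\setminus\{i\}}\alpha(i,j)\leq \ell$ for every $i\in[r]$; these are exactly the cut-profiles that a feasible Min-Max partition can realize. I would then define $\lambda:V(G)\to 2^{[r]}$ by $\lambda(t_{i})=\{i\}$ and $\lambda(v)=[r]$ for every non-terminal $v$, and for each $\alpha\in{\cal A}$ invoke the hypothetical \la-algorithm on $(G,r,\lambda,\alpha)$. The overall instance is declared a \yes\ iff some invocation succeeds, and the returned $r$-allocation is directly a valid Min-Max partition. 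Correctness is immediate: a partition $\{{\cal P}_{1},\ldots,{\cal P}_{r}\}$ witnessing \yes\ for the Min-Max instance yields $\alpha(i,j):=|\delta_{G}({\cal P}_{i},{\cal P}_{j})|\in{\cal A}$, while any allocation produced by \la\ for some $\alpha\in{\cal A}$ satisfies $|\delta_{G}({\cal V}^{(i)},V(G)\setminus{\cal V}^{(i)})|=\sum_{j\neq i}\alpha(i,j)\leq \ell$.

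The main technical point is to bound $|{\cal A}|$ and the parameter passed to \la. Since every crossing edge is counted in exactly two of the row-sums $\sum_{j\neq i}\alpha(i,j)$, adding the per-part constraints gives $\sum\alpha = w((G,r,\lambda,\alpha))\leq r\ell/2$, so each call to \la\ costs at most $T(n,r\ell)$. For $|{\cal A}|$ I would use two complementary estimates and take the smaller. First, each entry $\alpha(i,j)$ is bounded by $\ell$, giving $|{\cal A}|\leq (\ell+1)^{{r\choose 2}}=2^{O(r^{2}\log \ell)}$. Second, $\alpha$ is encoded by a multiset of at most $r\ell/2$ unordered pairs from ${[r]\choose 2}$ (one per crossing edge), giving $|{\cal A}|\leq 2^{O(r\ell\log r)}$. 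Combining, $|{\cal A}|\leq 2^{O(r\cdot \min\{\ell\log r,\ r\log \ell\})}$, and multiplying by $T(n,r\ell)$ delivers the claimed running time. This double counting is the only nontrivial ingredient of the plan: the reduction itself and its correctness are a direct adaptation of the {\sc Multiway Cut}-to-\la\ sketch already given in the paper.
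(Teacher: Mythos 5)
Your proposal is correct and takes essentially the same approach as the paper's proof: the same list function $\lambda$, the same family of weight functions $\alpha$ with all row sums $\sum_{j\in[r]\setminus\{i\}}\alpha(i,j)\leq \ell$, and one invocation of the \la\ algorithm per member of that family (your preliminary discarding of terminal-free components is harmless but unnecessary, since such components can simply be left with lists $[r]$). The only addition is that you explicitly justify the cardinality bound $2^{O(r\cdot \min\{\ell\cdot \log r,\, r\cdot \log \ell\})}$ and the estimate $\sum\alpha\leq r\ell/2$, both of which the paper asserts without proof.
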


\begin{proof}
Given an input $I=(G,\ell,r,T)$ of   {\sc Min-Max-Multiway Cut},
we fix (arbitrarily) a bijection $\mu: V(T)\rightarrow [r]$ and we
define $\lambda: V(G)\rightarrow 2^{[r]}$ such that
\[\lambda(x)=\left\{\begin{array}{lll}
& [r] & \mbox{if $x\in V(G)\setminus T$} \\
& \{\mu(x)\} & \mbox{if $x\in T.$}
\end{array}\right.\]
We now
consider the family ${\cal U}(I)$ of instances of \la\
containing one element $I'=(G,r,\lambda,\alpha)$
for each choice of function
 $\alpha: {[r]\choose 2}\rightarrow \Bbb{Z}_{\geq 0}$ satisfying
$$\forall i\in[r], \sum_{j\in[r]\setminus i}\alpha(i,j)\leq \ell.$$
Notice that $I$ is a {\sc yes}-instance of  {\sc Min-Max-Multiway Cut} if and only if
there exists some $I'\in{\cal U}(I)$ that is a {\sc yes}-instance of  \la. As  $|{\cal U}(I)|= 2^{O(r\cdot \min\{\ell\cdot  \log r,r\cdot \log \ell\})}$
% \sed{\small Reviewer says: Say why is $|U(I)|=2^{O(\ell\cdot  log r)}$}
 and for each $I'\in{\cal U}(I)$ it holds that $w(I')=O(r  \ell)$, the result follows.
\end{proof}

\vspace{0mm}\subsection{{List Digraph Homomorphism}}
\label{subsec:hom2la}

Let $G$ and $H$ be directed graphs where $G$ is simple  and $H$ may have loops but not multiple directed edges. A (directed) edge in the  digraph $G$ from the vertex $x$ to the vertex $y$ is denoted by $(x,y)$.
Let also $\lambda: V(G)\rightarrow 2^{V(H)}$.
 We denote by $E_{1}(H)$ the loops of $H$ and
by $E_{2}(H)$ the edges of $H$ between distinct vertices.
An {\em $\lambda$-{list} $H$-homomorphism} of $G$ is a function
$\chi:V(G)\rightarrow V(H)$ such that

\begin{itemize}
\item $\chi(v) \in \lambda(v)$ for every $v \in V(G)$, and
\item $(\chi(u),\chi(v))\in E(H)$ for every $(u,v)\in E(G)$.
\end{itemize}

Given a list  $H$-homomorphism $\chi$ of $G$ and an edge $e=(a,b)\in E_{2}(H)$
we define
%the {\em  $\chi${-arc charge}} of $e$ as
 %the
 %set
  $$C(e)=\{(u,v)\in E(G)\mid \chi(u)=a\mbox{ and } \chi(v)=b\}.$$
{\sc Bounded List Digraph Homomorphism} is formally defined as follows.

\begin{center}
\fbox{\begin{minipage}{13.5cm}
\noindent {\sc Bounded List Digraph Homomorphism} ({\sc BLDH})\\
\noindent {\sl Input:}\! A tuple $I=(G,H,\lambda,\ell)$
where $G$ and $H$ are  digraphs,  $\lambda:V(G)\rightarrow 2^{V(H)}$,\!   and\! $\ell\in\Bbb{N}_{\geq 0}$.\\
%\noindent {\sl Parameter:} $w=\sum\alpha$.
{\sl Output:} A {\em $\lambda$-{list} $H$-homomorphism} of $G$  where $ \sum_{e\in E(H)}|C(e)|\leq \ell$
or a correct report that no such homomorphism exists.
\end{minipage}
}
\end{center}
\medskip
%
%Given a function $\alpha: E_{2}(H)\rightarrow \Bbb{Z}_{\geq 0},$
%we say that $\chi$ is $\alpha${\em -arc-specified} if  $|C(e)|= \alpha(e),$
%for every $e\in E_{2}(H)$.\medskip
%
%
We now define the following more general problem.
\begin{center}
\fbox{\begin{minipage}{13.5cm}
 {\sc Arc-Specified List Digraph Homomorphism} ({\sc ASLDH})\\
 {\sl Input:}  A tuple $I=(G,H,\lambda,\alpha)$
where $G$ and $H$ are  digraphs,  $\lambda:V(G)\rightarrow 2^{V(H)}$,\!   and  $\alpha: E_{2}(H)\to \Bbb{Z}_{\geq 0}$.\\
\noindent{\sl Output:} A  $\lambda$-list $H$-homomorphism $\chi$
of $G$ such that  $\forall_{e\in E_{2}(H)}\ |C(e)|= \alpha(e)$ or a correct report that no such  $\lambda$-list $H$-homomorphism
 exists.
 \end{minipage}}
 \end{center}
Given an instance  $I=(G,H,\lambda,\alpha)$ of  {\sc ASLDH} we define $d(I)=\sum\alpha$.
As we already did for the cases of \la\ and  {\sc Min-Max-Multiway Cut}, we assume
that the multiplicity of the edges of the instance of {\sc BLDH} (resp. {\sc ASLDH}) does
not exceed $\ell$ (resp. $d(I)$).
%Using essentially the same reduction from {\sc Partition} as we did for the case of {\sc LA}
%it follows that {\sc ASLDH} is {\sf para-NP}-complete when parameterized by $d=d(I)$.

In the next sections we will prove that   there exists an {\sf FPT}-algorithm for {\sc ASLDH},  when parameterized by
{\sl both} $h=|V(H)|$ and $d=d(I)$. This fact together with the following result yields Theorem~\ref{ldh}.

\begin{theorem}
\label{kl6iq2zmn6}
If there is an algorithm that solves {\sc ASLDH} in $T(n,d(I))$ steps, then there exists an
algorithm that solves  {\sc BLDH}  in $2^{O(\ell\log h)}\cdot T(n,\ell)$ steps where $h=|V(H)|$.
\end{theorem}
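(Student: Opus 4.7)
The plan is to mimic the reduction from {\sc Min-Max-Multiway Cut} to \la\ given by Theorem~\ref{thm:minmax2la}: enumerate all plausible ``arc-profiles'' and invoke the assumed {\sc ASLDH} algorithm on each. More precisely, given an instance $I=(G,H,\lambda,\ell)$ of {\sc BLDH}, I would define the family
\[
{\cal U}(I)=\bigl\{(G,H,\lambda,\alpha)\mid \alpha:E_{2}(H)\to \Bbb{Z}_{\geq 0},\ \sum\alpha\leq\ell\bigr\}
\]
of {\sc ASLDH} instances, and run the hypothetical {\sc ASLDH} algorithm on every $I'\in{\cal U}(I)$, returning {\sc yes} iff at least one call returns {\sc yes}. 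The correctness is routine: a $\lambda$-list $H$-homomorphism $\chi$ of $G$ witnessing a {\sc yes}-answer for $I$ determines $\alpha(e):=|C(e)|$ with $\sum\alpha\leq\ell$, for which $\chi$ is also an {\sc ASLDH} witness for the instance $(G,H,\lambda,\alpha)\in{\cal U}(I)$; conversely, any {\sc ASLDH} witness for some $\alpha$ with $\sum\alpha\leq\ell$ is automatically a {\sc BLDH} witness for $I$.

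The only real work is bounding $|{\cal U}(I)|$ tightly enough to obtain $\log h$ rather than $\log\ell$ in the exponent. Since $|E_{2}(H)|\leq h(h-1)\leq h^{2}$ and $\sum\alpha\leq\ell$, the support $S$ of each $\alpha$ satisfies $|S|\leq\ell$. I would thus enumerate $\alpha$ in two stages: first pick the support $S\subseteq E_{2}(H)$, and then assign positive values to its elements summing to at most $\ell$. The number of supports of size $s$ is at most $\binom{h^{2}}{s}\leq h^{2s}\leq h^{2\ell}$; for each such $S$ of size $s$, the number of admissible assignments is $\sum_{d=s}^{\ell}\binom{d-1}{s-1}=\binom{\ell}{s}\leq 2^{\ell}$. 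Summing over $s$ yields
\[
|{\cal U}(I)|\ \leq\ (\ell+1)\cdot h^{2\ell}\cdot 2^{\ell}\ =\ 2^{O(\ell\log h)},
\]
valid for $h\geq 2$; the case $h=1$ is trivial since $E_{2}(H)=\emptyset$.

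Finally, each $I'\in{\cal U}(I)$ satisfies $d(I')=\sum\alpha\leq\ell$, so invoking the assumed {\sc ASLDH} algorithm costs $T(n,\ell)$ per instance, and the overall running time is $2^{O(\ell\log h)}\cdot T(n,\ell)$, as required. The main obstacle is not conceptual but combinatorial: a naive enumeration over all $\alpha:E_{2}(H)\to\{0,\dots,\ell\}$ would give a bound of $(\ell+1)^{h^{2}}$, which is too large; the support-first decomposition above is exactly what is needed to shift the $h^{2}$ exponent into the $\ell\log h$ regime.
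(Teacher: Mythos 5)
Your proposal is correct and is essentially identical to the paper's own proof: the paper defines the very same family ${\cal U}(I)=\{(G,H,\lambda,\alpha)\mid \sum\alpha\leq \ell\}$, observes the same yes-instance equivalence, and concludes from $|{\cal U}(I)|=2^{O(\ell\log h)}$ and $d(I')\leq\ell$. The only difference is that you spell out the support-based counting argument for the bound $|{\cal U}(I)|=2^{O(\ell\log h)}$, which the paper asserts without detail; your count is valid.
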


\begin{proof}
%[Proof of Theorem~\ref{kl6iq2zmn6}]
Given an instance $I=(G,H,\lambda,\ell)$ of {\sc BLDH}  we set
${\cal U}(I)=\{(G,H,\lambda,\alpha)\mid \sum\alpha\leq \ell\}$
and we observe that $I$ is a {\sc yes}-instance of {\sc BLDH}  if and only if
some $I'\in{\cal  U}(I)$ is a {\sc yes}-instance of {\sc ASLDH}.
The lemma follows as $|{\cal U}(I)|=2^{O(\ell\log h)}$ and $d(I')\leq \ell$.
\end{proof}

\vspace{0mm}\subsection{A sparsifier for {\sc ASLDH}}
\label{sput8}

In order to prove that {\sc ASLDH} admits an {\sf FPT}-algorithm when parameterized by
{\sl both} $h=|V(H)|$ and $d=d(I)$,  we
will give a Turing-{\sf FPT} reduction of  {\sc ASLDH} to \la\ in Subsection~\ref{i7p1m6iowq}. The latter problem can be solved by an {\sf FPT}-algorithm due to the result of Section~\ref{sec:lafpt}.
The reduction of Subsection~\ref{i7p1m6iowq} receives an instance $(G,H,\lambda,\alpha)$ of    {\sc ASLDH}
 and returns an equivalent  instance $(G',r,\lambda',\alpha')$ of \la\ where
$|V(G')|=O(|E(G)|)$ which is $O(w\cdot |V(G)|^{2})$, in general. In order to avoid this blow-up in the polynomial
running time of our final {\sf FPT}-algorithm, we  give a way to transform the instances of   {\sc ASLDH}
to equivalent instances of the same problem  whose graphs are sparse.
This ``sparsification'' procedure  is described below.\medskip

A graph is {\em $d$-edge connected} if it has at least two vertices and for every
two vertices there are $d$ edge disjoint paths between them.
We use the following result from~\cite{KirousisSS93}.
\begin{proposition}
\label{skr3}
For every $d\in \Bbb{Z}_{\geq 1}$,
every graph $G$ where $|E(G)|\geq  d\cdot (|V(G)|-1)$ contains a $d$-edge connected subgraph.
\end{proposition}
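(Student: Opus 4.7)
The natural plan is to prove the claim by induction on $n = |V(G)|$, exploiting the fact that if $G$ is not $d$-edge connected then its vertex set admits a bipartition across which only few edges run, so the density hypothesis can be transferred to one of the two sides. The base case $n=2$ is immediate: the hypothesis $|E(G)|\geq d$ forces at least $d$ parallel edges between the two vertices, a graph which is plainly $d$-edge connected.

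For the inductive step, assume the claim for all graphs on fewer than $n\geq 3$ vertices and consider $G$ with $|V(G)| = n$ and $|E(G)| \geq d(n-1)$. If $G$ is itself $d$-edge connected we are done. Otherwise, by the definition of edge connectivity there is a bipartition $(A,B)$ of $V(G)$ with both parts nonempty and $|\delta_G(A,B)| \leq d - 1$. Writing $G_A = G[A]$, $G_B = G[B]$, $a = |A|$, $b = |B|$, the central identity
\[
|E(G_A)| + |E(G_B)| \;=\; |E(G)| - |\delta_G(A,B)| \;\geq\; d(n-1) - (d-1) \;=\; d(n-2) + 1
\]
is the engine of the argument. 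When $a,b \geq 2$, one cannot simultaneously have $|E(G_A)| < d(a-1)$ and $|E(G_B)| < d(b-1)$, since adding those inequalities would cap the left-hand side at $d(n-2)$, contradicting the display. Hence at least one of the induced subgraphs, say $G_A$, inherits the density hypothesis with $a < n$, and the inductive hypothesis yields a $d$-edge connected subgraph of $G_A \subseteq G$.

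The only delicate case is when one of the parts, say $A$, is a singleton: then $|E(G_A)| = 0$, so the density condition for $G_A$ is vacuously satisfied yet $G_A$ cannot contain any $d$-edge connected subgraph at all. However, the same counting now gives $|E(G_B)| \geq d(n-2) + 1 > d(b-1)$ with $b = n - 1 \geq 2$, so the induction applies to $G_B$ instead. This bookkeeping around singletons is really the sole obstacle in the proof; once it is handled, the argument is a clean density-peeling that isolates a $d$-edge connected core, establishing Proposition~\ref{skr3}.
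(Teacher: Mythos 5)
Your proof is correct, but there is nothing in the paper to compare it against: the paper does not prove Proposition~\ref{skr3} at all, it imports it as a known result from~\cite{KirousisSS93}. Your argument is the standard self-contained density-counting induction for this classical fact: if $G$ is not itself $d$-edge connected, Menger gives a bipartition $(A,B)$ with $|\delta_G(A,B)|\leq d-1$, the identity $|E(G[A])|+|E(G[B])| \geq d(n-2)+1$ forces one side to inherit the hypothesis $|E|\geq d(|V|-1)$, and one recurses. You also handle correctly the two places where such arguments typically go wrong: the base case $n=2$ (which genuinely needs parallel edges, consistent with the paper's convention that graphs are loopless multigraphs), and the singleton side of the cut, where the density condition holds vacuously yet no $d$-edge connected subgraph can exist, so the recursion must be pushed to the other side. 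One pedantic remark: as literally stated the proposition fails for a one-vertex graph ($0\geq 0$ holds, but a $d$-edge connected subgraph needs two vertices), so no proof can cover that case; your implicit restriction of the induction to graphs on at least two vertices is the right reading, and it is harmless for the paper, which only invokes the proposition to derive edge bounds (Lemmata~\ref{skr4} and~\ref{sparsehom}), where the trivial case is vacuous.
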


We need first the following known result. For completeness, we  provide the proof.

\begin{lemma}
\label{op34klri}
Let $G$ be a $d$-edge connected graph and let $S=(s_{1},\ldots,s_{d})$ and $T=(t_{1},\ldots,t_{d})$
be two orderings of vertices of $V(G)$, possibly with repetitions.
Then, there exists a bijection $\sigma: [d]\rightarrow [d]$
and a collection ${\cal C}$ of $d$ pairwise edge-disjoint
paths such that
 for each $i\in[d]$, $s_{i}$ and $t_{\sigma(i)}$ are the endpoints of some  path in ${\cal C}$.
\end{lemma}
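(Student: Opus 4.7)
The plan is to reduce the lemma to a standard application of Menger's theorem (edge version), equivalently to a max-flow computation, by attaching a super-source and a super-sink to the vertices specified by $S$ and $T$.

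First, I would construct an auxiliary multigraph $G^{*}$ from $G$ as follows. Add two new vertices $s^{*}$ and $t^{*}$; for each $i\in[d]$, add one edge $e_{i}^{S}=\{s^{*},s_{i}\}$ and one edge $e_{i}^{T}=\{t_{i},t^{*}\}$. Note that if some vertex $v\in V(G)$ appears $k$ times in $S$ (or $T$), then $s^{*}$ (or $t^{*}$) is joined to $v$ by $k$ parallel edges. By construction, $\deg_{G^{*}}(s^{*})=\deg_{G^{*}}(t^{*})=d$.

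The key claim is that the minimum edge cut separating $s^{*}$ from $t^{*}$ in $G^{*}$ has size at least $d$. To verify this, consider any cut $(A,\bar{A})$ with $s^{*}\in A$ and $t^{*}\in \bar{A}$. If $A\cap V(G)=\emptyset$, all $d$ edges incident with $s^{*}$ lie in the cut. Symmetrically, if $\bar{A}\cap V(G)=\emptyset$, all $d$ edges incident with $t^{*}$ lie in the cut. Otherwise both $A\cap V(G)$ and $\bar{A}\cap V(G)$ are non-empty, and the restriction of the cut to $G$ separates two vertices of $G$; by $d$-edge-connectedness of $G$ and Menger's theorem, this restricted cut already has at least $d$ edges. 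In every case, the cut has size $\geq d$.

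By the edge version of Menger's theorem applied to $G^{*}$, there exist $d$ pairwise edge-disjoint paths $P_{1},\dots,P_{d}$ from $s^{*}$ to $t^{*}$. Since $s^{*}$ has exactly $d$ incident edges and the paths are edge-disjoint, each edge $e_{i}^{S}$ is the first edge of exactly one path; likewise each $e_{j}^{T}$ is the last edge of exactly one path. This defines a bijection $\sigma:[d]\to[d]$ by declaring $\sigma(i)=j$ whenever the path starting with $e_{i}^{S}$ ends with $e_{j}^{T}$. Stripping off these two extremal edges from each $P_{i}$ yields a walk in $G$ between $s_{i}$ and $t_{\sigma(i)}$; if $s_{i}=t_{\sigma(i)}$ the walk may be trivial (a single vertex), and in general a walk in a graph contains a path with the same endpoints, so we can extract a path. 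The resulting collection $\mathcal{C}$ of $d$ paths is edge-disjoint in $G$, because the original paths $P_{1},\dots,P_{d}$ are edge-disjoint in $G^{*}$.

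The main technical care is in the min-cut verification, where the possibility of repetitions in $S$ or $T$ and the degenerate cases where one side of the cut avoids $V(G)$ must be handled; this is why adding parallel edges at the super-source and super-sink (rather than identifying repeated terminals) is essential, as it guarantees $\deg(s^{*})=\deg(t^{*})=d$ and so forces the desired bijective correspondence between the $s_{i}$'s and the $t_{j}$'s.
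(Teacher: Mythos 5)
Your proposal is correct and follows essentially the same route as the paper's proof: attach a super-source and super-sink with edge multiplicities matching the repetitions in $S$ and $T$, verify that every $s^{*}$--$t^{*}$ cut has at least $d$ edges, apply the edge version of Menger's theorem, and strip the auxiliary edges to obtain the $d$ edge-disjoint paths and the bijection $\sigma$. Your cut-verification case analysis (handling the cuts that isolate $s^{*}$ or $t^{*}$) is in fact spelled out slightly more carefully than in the paper, but the argument is the same.
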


\begin{proof}
We add in $G$ two new vertices $s$ and $t$ and connect $s$ with each vertex in $S$
and $t$ with each vertex in $T$ such that the multiplicity of each edge $\{s,x\}$ is
equal to the number of  times $x$ appears in $S$ and multiplicity of each edge $\{t,x\}$ is
equal to the number of  times $x$ appears in $T$. We observe that there are $d$-edge-disjoint paths from $s$ to $t$. To see this, suppose that a set of fewer than $d$ edges in $E(G')$ disconnects $s$ and $t$. This means that removal of fewer than $d$ edges of $E(G)$ disconnects $s_i$ and $t_j$ for some $i,j\in [d]$, contradicting $d$-edge-connectivity of $G$. Hence, we can find  $d$ edge-disjoint paths between $s$ and $t$ by Menger's Theorem. Removing $s$ and $t$ from these paths yields $d$ edge-disjoint paths between $S$ and $T$ having the desired property.
\end{proof}

The following Lemma is based on Lemma~\ref{op34klri}.

\begin{lemma}
\label{sjg4rtt}
Let $G$ be a graph and let ${\cal C}=\{C_{1},\ldots,C_{r}\}$ be a collection of vertex disjoint connected subgraphs of $G$.
Let also $G'$ be the graph obtained if we contract in $G$ all edges in the graphs in ${\cal C}$.
If $G'$ is $d$-edge connected and each  graph in ${\cal C}$ is $d$-edge connected or a single vertex, then $G$ contains a subgraph that is $d$-edge connected.
\end{lemma}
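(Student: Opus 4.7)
The plan is to prove that $G$ itself is $d$-edge connected, so that $G$ can serve as the required subgraph. Fix two distinct vertices $u,v\in V(G)$; we will exhibit $d$ pairwise edge-disjoint walks from $u$ to $v$ in $G$, from which $d$ edge-disjoint $u$--$v$ paths follow by shortcutting closed subwalks. Define the projection $\pi:V(G)\to V(G')$ by $\pi(x)=v_{j}$ when $x\in V(C_{j})$ and $\pi(x)=x$ otherwise. If $\pi(u)=\pi(v)$, then $u$ and $v$ lie in a common $C_{j}$ (which cannot be a single vertex since $u\neq v$), and the $d$-edge connectivity of $C_{j}$ directly supplies the desired paths. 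So we may assume $\pi(u)\neq\pi(v)$, and by $d$-edge connectivity of $G'$ we fix $d$ pairwise edge-disjoint paths $P_{1},\ldots,P_{d}$ from $\pi(u)$ to $\pi(v)$ in $G'$.

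Every edge of $G'$ corresponds canonically to an edge of $G$ not contained in any $C_{j}$, while every vertex of $G'$ is either a vertex of $G$ outside $\bigcup_{j}V(C_{j})$ or a contracted vertex $v_{j}$. For each contracted vertex $v_{j}$ and each $P_{i}$ visiting $v_{j}$, the edges of $P_{i}$ incident to $v_{j}$ (two if $v_{j}$ is internal to $P_{i}$, one if $v_{j}=\pi(u)$ or $v_{j}=\pi(v)$) correspond in $G$ to edges with well-defined endpoints in $V(C_{j})$, which we shall call the \emph{anchors} of $P_{i}$ at $v_{j}$. Lifting each $P_{i}$ to a walk $W_{i}$ in $G$ then amounts to specifying, at every contracted vertex $v_{j}$ it passes, an edge-disjoint route inside $C_{j}$ joining the relevant anchors.

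If $C_{j}$ is a single vertex no routing is required; otherwise $C_{j}$ is $d$-edge connected. When $v_{j}\in\{\pi(u),\pi(v)\}$, say $v_{j}=\pi(u)$, each $P_{i}$ contributes exactly one anchor in $V(C_{j})$, and Lemma~\ref{op34klri}---or its immediate variant obtained by the same Menger-style argument when fewer than $d$ anchors are present---provides $d$ edge-disjoint paths in $C_{j}$ from $u$ to the prescribed anchors (take the source sequence to be $d$ copies of $u$ and the target sequence to be the list of anchors). When $v_{j}\notin\{\pi(u),\pi(v)\}$, the anchors come in pairs, one entry and one exit per visiting path, and Lemma~\ref{op34klri} applied with the entries as the source sequence and the exits as the target sequence produces edge-disjoint paths in $C_{j}$ realising some bijection between them. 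The crucial observation is that any such bijection is admissible: rematching the entry of $P_{i}$ with the exit of $P_{i'}$ at an internal $v_{j}$ merely swaps the portions of $W_{i}$ and $W_{i'}$ that follow $v_{j}$, and since every $P_{i}$ terminates at $\pi(v)$, the rearranged walks still run from $u$ to $v$.

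The walks $W_{1},\ldots,W_{d}$ thus obtained are pairwise edge-disjoint: the edges inherited from the $P_{i}$'s are disjoint because the $P_{i}$ are edge-disjoint in $G'$; the routings chosen inside a given $C_{j}$ are edge-disjoint by Lemma~\ref{op34klri}; and routings inside distinct $C_{j},C_{j'}$ cannot share edges because the subgraphs $C_{j}$ are pairwise vertex-disjoint. Shortcutting each $W_{i}$ to a simple path yields $d$ edge-disjoint $u$--$v$ paths in $G$, proving that $G$ is $d$-edge connected. The main subtle point is the flexibility at internal contracted vertices: one must notice that the bijection returned by Lemma~\ref{op34klri} is immaterial, because any pairing of entries and exits still produces valid $u$--$v$ walks once shortcut; granted this, the proof reduces to a simultaneous application of $d$-edge connectivity inside each $C_{j}$.
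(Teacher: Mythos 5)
Your proof is correct and takes essentially the same route as the paper's: project $u,v$ to $G'$, take $d$ edge-disjoint paths there by $d$-edge connectivity of $G'$, and lift them back to $G$ by routing through each contracted component via Lemma~\ref{op34klri}, observing that the bijection that lemma returns is immaterial since any rematching of entries to exits still yields $d$ edge-disjoint $u$--$v$ walks. If anything, you are more explicit than the paper on the two delicate points (components visited by fewer than $d$ paths, and the entry/exit rematching, which the paper dismisses as ``easy to observe''), and your conclusion that $G$ itself is $d$-edge connected handles vertices outside $\bigcup_j V(C_j)$ cleanly via $\pi(x)=x$.
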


\begin{proof}
Let $H$ be the subgraph of $G$ induced by the vertices in the graphs in ${\cal C}$.
Given a vertex $v\in V(H)$, we  denote by $C_{v}$ the graph in ${\cal C}$
that is either $v$ itself or is contracted in $G$ to create $v$ in $H$.
We prove that for every two vertices $s$ and $t$ in $V(H)$ there are $d$-edge-disjoint paths between them. This follows easily in the case where both $s$ and $t$ belong in the same $C_{v}\in{\cal C}$ because of the $d$-edge connectivity of $C_{v}$.
Assume now that $x\in C^{(s)}$ and $y\in C^{(t)}$  where $C^{(s)}$ and $C^{(t)}$ are different graphs in ${\cal C}$. Let also $v_{s}$ and $v_{t}$
be vertices of $G'$ such that $C_{v_{s}}=C^{(s)}$ and $C_{v_{t}}=C^{(t)}$.
 As $G'$ is $d$-edge connected, there is a collection ${\cal P}=\{P_{1},\ldots,P_{d}\}$ of  $d$ edge-disjoint
paths in $G'$ from $v_{s}$ to $v_{t}$. We direct all these paths from $v_s$ to $v_t$
and we set $W=\bigcup_{i\in[d]}P_{i}$.  Let $v\in W$ and let $E_{v}$
be the set of edges in $G'$ incident to $v$.
Notice that $E_{v}$ has a partition $\{E^{1}_{v},\ldots,E_{v}^{d}\}$ such that $E_{v}^{i}$ are the edges of $P_{i}$ that are incident to $v$.
Clearly, each $E_{v_s}^{i}$ has only one edge and the same holds for each $E_{v_t}^{i}$.
Moreover, each $E_{v}^{i}$ with $v\not\in\{v_s,v_t\}$ has cardinality
two. We enhance the notation of
the sets $E_{v}^{i}$ as follows: if $E_{v}^{i}=\{e\}$ and  $v=v_s$ then we write $E_{v}^{i}=({\sf s},e)$.
If $E_{v}^{i}=\{e\}$ and  $v=v_t$ then  we write $E_{v}^{i}=(e,{\sf t})$.
If $v\not\in\{v_s,v_t\}$ and $E_{v}^{i}=\{e,e'\}$ such that $e$ is ingoing to $v$ in $P_{i}$ and $e'$ is outgoing to $v$ in $P_{i}$ then we write $E_{v}^{i}=(e,e')$.
We now define the pair ${\bf p}_{i}^{v}$ as follows:
if $E_{v}^{i}=({\sf s},e)$ and $y$ is the endpoint the edge $e$ in $G'$ that belongs in $C_{s}$ then we set ${\bf p}_{i}^{v}=(s,y)$,
if $E_{v}^{i}=(e,{\sf t})$ and $y$ is the endpoint the edge $e$ in $G'$ that belongs in $C_{t}$, then we set ${\bf p}_{i}^{v}=(y,t)$, and
if $E_{v}^{i}=(e,e')$, then $y$ and $y'$ are the endpoint the edges $e$ and $e'$ respectively
that belong in $C_{v}$, then ${\bf p}_{i}^{v}=(y,y')$.
For each $v\in W$ we create two orderings $S_{v}=(s_{1}^{v},\ldots,s_{d}^{v})$ and $T_{v}=(t_{1}^{v},\ldots,t_{d}^{v})$ of vertices in $C_{v}$ such that $(s_{i}^{v},t_{i}^{v})={\bf p}_{i}^{v}$ for every $i\in[d]$. For each $v$, we apply Lemma~\ref{op34klri} and obtain a collection ${\cal P}_{v}$
of edge-disjoint paths between the vertices of $S_{v}$ and the vertices of $T_{v}$.
It is now easy to observe that the subgraph of $H$ consisting
of the edges in the paths in ${\cal P}$ (that are also edges of $G'$)
and the edges of the paths in ${\cal P}_{v}$ for every $v\in W$ is the union
of $d$ edge-disjoint paths in $H$ between $s$ and $t$.
\end{proof}

Given a graph $H$ and a positive integer $d$, we say
that a subgraph $H$ of $G$ is a {\em $d$-edge connected core of $G$}
if every connected component of $H$ is
$d$-edge connected and, among all such subgraphs of $G$, $H$ has maximum number of edges. The proof of the next lemma uses Proposition~\ref{skr3}.

\begin{lemma}
\label{skr4}
For every $d\in\Bbb{Z}_{>0}$, every graph $G$ with $m\geq  d\cdot (n-1)$
contains a unique $d$-edge connected core that can be found in $O(d\cdot n^4)$ steps.
\end{lemma}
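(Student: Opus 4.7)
\emph{Existence} will be immediate: among the finitely many subgraphs of $G$ each of whose connected components is $d$-edge connected (including the empty subgraph), I simply pick one with maximum number of edges. Non-emptiness of the core when $m\geq d(n-1)$ then follows directly from Proposition~\ref{skr3}.

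\emph{Uniqueness} is the more delicate point. The first thing I would establish is the following merging lemma: if $H_1$ and $H_2$ are two $d$-edge connected subgraphs of $G$ sharing at least one vertex, then $H_1\cup H_2$ is also $d$-edge connected. The proof is short — any edge cut of $H_1\cup H_2$ of size $<d$ restricts to cuts of $H_1$ and $H_2$ of size $<d$; $d$-edge connectivity forces each of $V(H_1),V(H_2)$ to lie entirely on one side of its respective cut, and sharing a vertex forces them on the same side, so the cut of $H_1\cup H_2$ must be trivial. Using this merging lemma, given two $d$-edge connected cores $H_1,H_2$, I would take any component $C_1$ of $H_1$ and form $H_2\cup C_1$. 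By iterating the merging lemma, $C_1$ together with every component of $H_2$ it meets fuses into a single $d$-edge connected piece; hence every component of $H_2\cup C_1$ is $d$-edge connected. Maximality of $H_2$ then forces $C_1\subseteq H_2$, and symmetry yields $H_1=H_2$.

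\emph{Algorithm.} The plan is to peel off low cuts. I initialize $H:=G$ and iterate the following: for each connected component $C$ of $H$, test whether $C$ is $d$-edge connected by fixing a vertex $u\in V(C)$ and running, for every other $v\in V(C)$, a capacity-$d$ max-flow computation between $u$ and $v$ (augmenting-path style, each augmentation in $O(|E(C)|)$ time, at most $d$ augmentations per pair). If every component passes, I return $H$. Otherwise, some max-flow attempt terminates early and exhibits an $H$-cut $(A,B)$ of size $<d$; I delete all edges of this cut from $H$ and repeat. \emph{Correctness} rests on the invariant that the core of the current $H$ equals the core of the original $G$: any deleted edge $e$ lies in a cut of $H$ of size $<d$, so $e$ cannot belong to any $d$-edge connected subgraph of $H$ (restricting that same cut to such a subgraph would contradict its $d$-edge connectivity). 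The algorithm halts precisely when no component admits a cut of size $<d$, i.e., when $H$ itself is the core.

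\emph{Running time and main obstacle.} Each iteration strictly increases the number of connected components, so there are at most $n-1$ iterations; the per-iteration cost is dominated by the $O(n)$ max-flow computations of value at most $d$ on a graph with $O(n^2)$ edges (using the bounded-multiplicity convention), giving $O(d\cdot n^3)$ per iteration and $O(d\cdot n^4)$ overall. The point that I expect to require the most care is the uniqueness argument — in particular, verifying that when a single $d$-edge connected component $C_1$ of one core is merged with several components of the other, iterated application of the merging lemma does yield a $d$-edge connected union (each successive component shares a vertex with $C_1$, hence with the growing union). The remaining subtlety is a careful bookkeeping of the max-flow cost under the paper's edge-multiplicity convention so that the bound $O(d\cdot n^4)$ is not exceeded.
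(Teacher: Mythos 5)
Your proposal is correct and takes essentially the same route as the paper: existence via Proposition~\ref{skr3}, uniqueness by showing that the union of two cores is again a core (your merging lemma just makes explicit what the paper dismisses as ``easily checked''), and the same peeling algorithm that repeatedly deletes cuts of size at most $d-1$ inside connected components, justified by the same observation that no deleted edge can lie in a $d$-edge connected subgraph, so by maximality the final graph is the core. The only differences are implementation details: the paper finds the small cuts with the Stoer--Wagner global min-cut algorithm~\cite{StWa97} rather than your per-component capacity-$d$ max-flow tests, and it explicitly discards isolated vertices as they appear, but both yield the same $O(d\cdot n^{4})$ bound.
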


%\removed{
\begin{proof}
The claimed $d$-edge connected core exists because of Proposition~\ref{skr3}.
Also, it is unique because if there are two $d$-edge connected cores $J_{1}$ and $J_{2}$, then it can be easily
checked that
the graph $J_{1}\cup J_{2}$ is also a $d$-edge connected core of $G$.
The algorithm
 repetitively removes from $G$ edges of min-cuts of size at most $d-1$ in its connected components  (each can be found in $O(d\cdot n^{3})$
steps according to~\cite{StWa97})
until this is not possible anymore (isolated vertices, when appearing during this procedure, are removed).

Note that the total number of steps of this procedure is bounded by the running time of the algorithm in~\cite{StWa97} times
the number of connected components of the resulting graph. This justifies the claimed running time.
Let $J$ be the  $d$-edge connected core  of ${G}$. Notice that none of the edges
of $J$ will be deleted by this procedure. Indeed, assuming the opposite,
let $G'$ be the graph where for the first time
a cut $(V_{1},V_{2})$ is found where the set $F$ of crossing edges contains some edge $e=\{x,y\}$ in $J$. Let also $C$ be the connected component
of $G'$ containing this cut and let $C_{J}$ be a  connected component of $J$ that is a subgraph of $C$
containing $e$.

Notice that $x$ and $y$ belong to different connected components of $C\setminus F$ and
therefore also to different connected components of $C_{J}\setminus F$, contradicting the fact that $C_{J}$
is $d$-edge connected. We just proved that the output of the algorithm will be a subgraph of $J$.
Notice also that each connected component of this output is  $d$-edge connected. By the maximality of $J$,
this output is necessarily $J$.
\end{proof}

\begin{lemma}
\label{sparsehom}
There is an $O(d(I)\cdot n^{4})$-step algorithm that given in instance $I=(G,H,\lambda,\alpha)$ of
{\sc ASLDH}, outputs an equivalent instance $I'=(G',H,\lambda',\alpha)$ of the same problem
where
$|E(G')|=O(d(I)\cdot |V(G')|)$.
\end{lemma}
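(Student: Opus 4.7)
The plan is to contract highly edge-connected subgraphs of $G$, exploiting the fact that any such subgraph must be mapped to a single looped vertex of $H$ under every feasible homomorphism. Setting $d = d(I)+1$, I would first test whether $|E(G)| \geq d\cdot(|V(G)|-1)$; if not, $G$ is already sparse enough and we return $I$ unchanged. Otherwise, Proposition~\ref{skr3} guarantees a $d$-edge connected subgraph, and Lemma~\ref{skr4} produces the unique (maximal) $d$-edge connected core $J$ of $G$ in $O(d\cdot n^{4})$ steps.

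The structural fact underpinning the reduction is the following: if $J$ is a subgraph of $G$ whose underlying graph is $d$-edge connected, then every $\lambda$-list $H$-homomorphism $\chi$ with $\sum_{e\in E_{2}(H)}|C(e)| = d(I)$ is constant on $V(J)$, and its common value lies in the set $L\subseteq V(H)$ of vertices carrying a loop. I would prove this by a short edge-disjoint paths argument: Menger's theorem yields $d$ edge-disjoint paths between any two vertices $x,y\in V(J)$ in the underlying graph; if $\chi(x)\neq \chi(y)$, each such path must contribute at least one arc whose image lies in $E_{2}(H)$, giving $d(I)+1$ crossing arcs and violating $\sum\alpha=d(I)$. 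That the common image carries a loop follows because every internal arc of $J$ must be mapped to some arc of $H$.

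The transformation itself is then straightforward. For each connected component $C$ of $J$ compute $\lambda^{\star}(C) := \bigcap_{x\in V(C)}\lambda(x)\cap L$; if some intersection is empty, the structural fact certifies that $I$ is a {\sc no}-instance and we report so. Otherwise, contract each such $C$ to a single vertex $v_{C}$ with list $\lambda'(v_{C}) = \lambda^{\star}(C)$, delete all arcs internal to $C$ (they now correspond to loops, i.e.\ elements of $E_{1}(H)$, and do not contribute to $\alpha$), and cap the multiplicities of the surviving arcs at $d(I)$. Correctness in both directions is immediate from the structural fact: any feasible $\chi$ on $I$ descends to $I'$, and any feasible $\chi'$ on $I'$ lifts by assigning $\chi'(v_{C})$ to every vertex of $C$, with all internal arcs then mapping to the loop at $\chi'(v_{C})$.

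The main obstacle---and the reason a single contraction round suffices instead of an iteration---is establishing the edge bound on $G'$. I plan to argue by contradiction: if $|E(G')| \geq d\cdot(|V(G')|-1)$, Proposition~\ref{skr3} supplies a $d$-edge connected subgraph of $G'$; applying Lemma~\ref{sjg4rtt} to this subgraph together with the $d$-edge connected pieces that were contracted yields a $d$-edge connected subgraph of the original $G$ strictly larger in edges than $J$, contradicting the maximality of the core. Hence $|E(G')| < d\cdot(|V(G')|-1) = O(d(I)\cdot|V(G')|)$, and the total running time is dominated by the single invocation of Lemma~\ref{skr4}, i.e.\ $O(d(I)\cdot n^{4})$.
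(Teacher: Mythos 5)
Your proposal is correct and follows essentially the same route as the paper: compute the $(d(I)+1)$-edge connected core of the underlying graph via Lemma~\ref{skr4}, contract its components while intersecting lists, argue via edge-connectivity (your Menger-path phrasing is the dual of the paper's cut phrasing) that any feasible homomorphism is constant on each component, and derive the edge bound on $G'$ from Proposition~\ref{skr3}, Lemma~\ref{sjg4rtt}, and maximality of the core. You are in fact slightly more careful than the paper: your restriction of the contracted lists to looped vertices of $H$ is genuinely needed for the backward direction (a solution of $I'$ lifts to $I$ only if the internal arcs of each contracted component can map to a loop), a point the paper's definition of $\lambda'(v_C)=\bigcap_{y\in V(C)}\lambda(y)$ silently omits.
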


\begin{proof}
Let $\tilde{G}$ be the
underlying graph of $G$ (multiplicities of edges of opposite  direction are summed up)
and $d=d(I)$.
If $\tilde{G}$ does not contains a  $(d+1)$-edge connected core, then, from
Proposition~\ref{skr3}, $|E(G)|=O(d\cdot |V(G)|)$.

Suppose now that $\tilde{G}$ has a $(d+1)$-edge connected core $J$ that, from Lemma~\ref{skr4},
can be found in~$O(d\cdot |V(G)|^{4})$ steps. We create a new graph $G'$
as follows: for each  $C\in {\cal C}(J)$ we contract all  vertices
of $C$ to a single vertex $v_{C}$ and we update $\lambda$ to $\lambda'$
so that if $x\not\in\{v_{C}\mid C\in {\cal C}(J)\}$, then $\lambda'(x)=\lambda(x)$
and  if $x=v_{C}$, then  $\lambda'(x)=\cap_{y\in V(C)}\lambda(y)$.
We claim that  $I'=(G',H,\lambda',\alpha)$ is an equivalent instance of {\sc ASLDH}. Indeed, this is based on the fact that, given a $\lambda$-list $H$-homomorphism $\chi$
of $G$ and a connected component $C$ of $J$, all vertices of $J$ should
be the preimages via $\chi$ of the same vertex of $H$. To verify this fact, just observe that, if this is not the case,
then the removal of the $\leq d$ crossing edges from $C$  (i.e., edges with endpoints mapped to different vertices of $H$)
will disconnect $C$, a contradiction to the $(d+1)$-edge-connectivity of $C$.

It now remains to prove that
$|E(G')|=O(d\cdot |V(G')|)$. If $|E(G')|\geq  (d+1)\cdot (|V(G)'|-1)$, then,  again from
Proposition~\ref{skr3},  $G'$ contains a $(d+1)$-edge connected subgraph. This, because of Lemma~\ref{sjg4rtt}, implies that  $G$ contains a subgraph
that is $(d+1)$-edge connected and has more edges than $J$, a contradiction.
\end{proof}

\vspace{0mm}\subsection{A reduction of  {\sc ASLDH} to \la}
\label{i7p1m6iowq}

Given the results of the previous section we are now in position to prove the following.

\begin{theorem}\label{thm:hom2la}
If there is an algorithm that solves \la\ in $T(n,w(I))$ steps, then there exists an
algorithm that solves  {\sc ASLDH}  in $T\big(O(d(I)\cdot n),O(d(I))\big)+O(d(I)\cdot n^{4})$ steps.
\end{theorem}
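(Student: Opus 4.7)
The plan is to combine Lemma~\ref{sparsehom} with a direct reduction to \la. Given an {\sc ASLDH}-instance $I=(G,H,\lambda,\alpha)$, I first apply Lemma~\ref{sparsehom} in $O(d(I)\cdot n^4)$ time to replace it by an equivalent instance (which I keep calling $(G,H,\lambda,\alpha)$) with $|E(G)|=O(d(I)\cdot |V(G)|)$. Then I construct from it an \la-instance $I'=(G',r,\lambda',\alpha')$ that is equivalent, with $|V(G')|=O(d(I)\cdot n)$ and $w(I')=O(d(I))$, and call the assumed \la-algorithm as a black box. Together this matches the claimed running time $T\bigl(O(d(I)\cdot n),O(d(I))\bigr)+O(d(I)\cdot n^4)$.

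For the reduction, I use a two-vertex subdivision gadget whose purpose is to encode, in the undirected world of \la, the direction of both the arcs of $G$ and those of $H$. The $r$ boxes of $I'$ are of three kinds: a \emph{vertex-box} $[a]$ for each $a\in V(H)$, and a \emph{tail-box} $[e]_t$ and a \emph{head-box} $[e]_h$ for each $e\in E_2(H)$. I build $G'$ by replacing every arc $(u,v)\in E(G)$ by a length-three path $u$--$x_{uv}$--$y_{uv}$--$v$ on two fresh internal vertices, so that $|V(G')|=|V(G)|+2|E(G)|=O(d(I)\cdot n)$ and $G'$ has $3|E(G)|$ edges, all of multiplicity one. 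Each $v\in V(G)$ keeps list $\{[a]:a\in\lambda(v)\}$; the list of $x_{uv}$ contains $[e]_t$ for every $e=(a,b)\in E_2(H)$ with $a\in\lambda(u)$, together with $[a]$ for every $a\in \lambda(u)$ with $(a,a)\in E_1(H)$; the list of $y_{uv}$ is defined symmetrically using $\lambda(v)$ and the head $b$. Finally, for each $e=(a,b)\in E_2(H)$, I set $\alpha'([a],[e]_t)=\alpha'([e]_t,[e]_h)=\alpha'([e]_h,[b])=\alpha(e)$ and all other values of $\alpha'$ to zero. This gives $w(I')=3\,d(I)$.

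The forward direction of the equivalence is immediate: from an {\sc ASLDH}-solution $\chi$, allocate each $v$ to $[\chi(v)]$; for every arc $(u,v)\in E(G)$ with $(\chi(u),\chi(v))=e\in E_2(H)$, put $x_{uv}\in[e]_t$ and $y_{uv}\in[e]_h$, otherwise (loop case) put both $x_{uv},y_{uv}$ into $[\chi(u)]=[\chi(v)]$; a direct count shows the prescribed crossing numbers are met. The backward direction, where the main difficulty lies, proceeds by defining $\chi(v)$ to be the vertex-box of $v$ in a given \la-allocation and using the zero targets on all other pairs to force, for each arc $(u,v)$, exactly one of two patterns on $(x_{uv},y_{uv})$: either both lie in a common vertex-box $[c]$ with $(c,c)\in E_1(H)$, in which case the zero targets between pairs of vertex-boxes yield $\chi(u)=\chi(v)=c$; or $x_{uv}\in [e]_t$ and $y_{uv}\in [e]_h$ for a common $e=(a,b)\in E_2(H)$, in which case the constraints $\alpha'([c],[e]_t)=0$ for $c\ne a$ and $\alpha'([c'],[e]_h)=0$ for $c'\ne b$ force $\chi(u)=a$ and $\chi(v)=b$. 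Hybrid ``mixed'' allocations (e.g., $x_{uv}$ in a vertex-box and $y_{uv}$ in a head-box) are ruled out by a cascade of these zero-target constraints, since they would produce extra crossings on some $[c]$--$[e]_h$ or $[e]_t$--$[c']$ pair that cannot be compensated without violating the matching $[e]_t$--$[e]_h$ target. Thus $\chi$ is a valid list $H$-homomorphism, and counting crossings on $[e]_t$--$[e]_h$ yields $|C(e)|=\alpha(e)$ for every $e\in E_2(H)$. The essential obstacle is precisely this direction-encoding: a naive one-vertex subdivision would leave \la\ unable to distinguish $(u,v)\mapsto(a,b)$ from $(u,v)\mapsto(b,a)$; splitting into separate tail-side and head-side internal vertices with disjoint admissible edge-box types breaks that symmetry cleanly.
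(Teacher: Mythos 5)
Your proposal is correct and follows essentially the same route as the paper's proof: first sparsify via Lemma~\ref{sparsehom}, then double-subdivide every arc of $G$ and reduce to \la\ with one box per vertex of $H$ plus a tail-box and a head-box per arc of $E_{2}(H)$ (the paper's $\sigma(\tilde{f}_{xy})$ and $\sigma(\tilde{\ell}_{xy})$), using the same three nonzero weights $\alpha(e)$ per arc, and the same counting argument to rule out mixed allocations in the loop case. The only deviations are cosmetic: your internal-vertex lists are slightly looser than the paper's (which additionally require $y\in\lambda(v)$, resp.\ $x\in\lambda(u)\cap\lambda(v)$), and you certify $|C(e)|=\alpha(e)$ on the middle pair $[e]_t$--$[e]_h$ rather than on $(\sigma(x),\sigma(\tilde{f}_{xy}))$; neither affects correctness or the running time $T\bigl(O(d(I)\cdot n),O(d(I))\bigr)+O(d(I)\cdot n^{4})$.
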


%\sed{\small Reviewer says:  In the full version I would suggest to separate the two directions
%in proof of Theorem 9 into two separate lemmas.}
\begin{proof}
Let $I=(G,H,\lambda,\alpha)$ be an instance of {\sc ASLDH}.
Using the algorithm of Lemma~\ref{sparsehom}, we may assume that $|E(G)|=O(d(I)\cdot |V(G)|)$.
We then use $I$ to  generate an instance $I'=(G',r,\lambda',\alpha')$ of \la, as follows:\medskip
\begin{itemize}%\setlength\itemsep{.2em}
\item[$\bullet$]$G'=(V',E')$, where\medskip
\begin{itemize}%\setlength\itemsep{.2em}
\item[$\circ$]$V'=V\cup V_{F}\cup V_{L}$, where
$V=V(G)$, $V_{F}=\{f_{uv}\mid (u,v)\in E(G)\},$ and $V_{L}=\{\ell_{uv}\mid (u,v)\in E(G)\}$ and\medskip
\item[$\circ$] $E'= E\cup E_{F}\cup E_{L}$, where
$E=\{\{f_{uv},\ell_{uv}\}\mid (u,v)\in E(G)\}$, $E_{F}=\{\{u,f_{uv}\}\mid (u,v)\in E(G)\}$, $E_{L}=\{\{\ell_{uv},v\}\mid  (u,v)\in E(G)\}$.
\end{itemize}\medskip
\item[$\bullet$]$r=|V(H)|+2\cdot |E_{2}(H)|$ and  $\sigma: V(\tilde{H})\rightarrow [r]$
is a   bijection where $\tilde{H}$ is the graph obtained from $H$ by subdividing twice each of its arcs that are not loops. For each arc  $(x,y)\in E_{2}(H)$, we denote its corresponding path in $\tilde{H}$ as $P_{xy},$ where $V(P_{xy})=\{x,\tilde{f}_{xy},\tilde{\ell}_{xy},y\}$.\medskip
\item[$\bullet$] $\lambda':V(G')\rightarrow [r]$ such that
\[\lambda'(w)=\left\{\begin{array}{lll}
& \{\sigma(x)\mid x\in \lambda(w)\} & \mbox{if $w\in  V$} \\ \\
&\{\sigma(\tilde{f}_{xy})\mid x\in \lambda(u)\wedge  y\in\lambda(v)\wedge  x\neq y \}
\ \cup& \\
&  \{\sigma(x)\mid x\in \lambda(u)\cap\lambda(v)\wedge (x,x)\in E_{1}(H) \}& \mbox{if $w=f_{uv}\in V_{F}$}\\ \\
&\{\sigma(\tilde{\ell}_{xy})\mid x\in \lambda(u)\wedge  y\in\lambda(v)\wedge  x\neq y \}
\ \cup& \\
&  \{\sigma(x)\mid x\in \lambda(u)\cap\lambda(v)\wedge (x,x)\in E_{1}(H) \}& \mbox{if $w=\ell_{uv}\in V_{L}$}.\\\end{array}\right.\]
%\item[$\bullet$]$w=\max \alpha$.
\item[$\bullet$]$\alpha': {[r] \choose 2}\rightarrow \Bbb{Z}_{\geq 0}$ such that
\[\alpha'(i,j)=\left\{\begin{array}{lll}
& \alpha(x,y) & \mbox{if  there exists some $(x,y)\in E_{2}(H)$ such that} \\
& &  \mbox{$(i,j)\in\big\{(\sigma(x),\sigma(\tilde{f}_{xy})),(\sigma(\tilde{f}_{xy})),\sigma(\tilde{\ell}_{xy})),(\sigma(\tilde{
\ell}_{xy}),\sigma(y))\big\}$} \\ \\
& 0 & \mbox{otherwise.}\\
\end{array}\right.\]
\end{itemize}

%{\tt [From here, written by Eunjung]}
Let $\chi:V(G)\rightarrow V(H)$ be a $\lambda$-list $H$-homomorphism of $G$ where  $\forall_{e\in E_{2}(H)}\ |C(e)|= \alpha(e)$. We construct an $r$-allocation ${\cal V}$ of $V(G')$ as follows:\medskip
\begin{itemize}\setlength\itemsep{.2em}
\item[$\bullet$]for every $u\in V=V(G)$, $u$ belongs to the part ${\cal V}^{i}$, where $i=\sigma(\chi(u))$\medskip
\item[$\bullet$]for every $f_{uv}\in V_F$, $f_{uv}$ belongs to the part ${\cal V}^{i}$, where \[i=\left\{\begin{array}{lll}
\sigma(\chi(u))& \text{if } \chi(u)=\chi(v)\\
\sigma(\tilde{f}_{xy})& \text{if } x=\chi(u)\neq y=\chi(v)
\end{array}\right.\]\medskip
\item[$\bullet$]for every $\ell_{uv}\in V_L$, $\ell_{uv}$ belongs to the part ${\cal V}^{i}$, where \[i=\left\{\begin{array}{lll}
\sigma(\chi(u))& \text{if } \chi(u)=\chi(v)\\
\sigma(\tilde{\ell}_{xy})& \text{if } x=\chi(u)\neq y=\chi(v)
\end{array}\right.\]
\end{itemize}
It is easy to verify that ${\cal V}$ is a solution for $I'$.\medskip

Now consider a solution ${\cal V}$ for $I'$. From ${\cal V}$, we define a mapping $\chi:V(G)\rightarrow V(H)$ so that for every $u\in V$, we have that $\chi(u)=\sigma^{-1}(i)  \text{~if and only if~}  u\in {\cal V}^{(i)}.$ We claim that $\chi$ is a $\lambda$-list $H$-homomorphism of $G$ where  $\forall_{e\in E_{2}(H)}\ |C(e)|= \alpha(e)$. For this, we investigate $\chi$ upon two conditions: firstly, we verify that $\chi$ is a $\lambda$-list $H$-homomorphism, and secondly that  $\forall_{e\in E_{2}(H)}\ |C(e)|= \alpha(e)$.\medskip

Let us prove that $\chi$ is a $\lambda$-list $H$-homomorphism. To see that $\chi(u)\in \lambda(u)$ for every $u\in V(G)$, let $u$ be in the $i$-th part of ${\cal V}$. Since $i \in \lambda'(u)$, the construction of $\lambda'$ implies that $\sigma^{-1}(i)\in \lambda(u)$, and thus $\chi(u)\in \lambda(u)$. To see that $\chi$ is an $H$-homomorphism, for an arbitrary edge $(u,v)\in E(G)$ we shall show that $(\chi(u),\chi(v))\in E_1(H)\cup E_2(H)$. Let $u$ and $v$ respectively belong to $\sigma(x)$-th and $\sigma(y)$-th parts of ${\cal V}$, for some $x,y\in V(\tilde{H})$. Note that $x\in \lambda(u)\subseteq V(H)$ and $y\in \lambda(v)\subseteq V(H)$. There are two possibilities: $x\neq y$ or  $x=y$.\medskip

\noindent {\em  Case 1:} $x\neq y$. Since $\sigma$ is a bijection, this means $\sigma(x)\neq \sigma(y)$. From the way we construct $\alpha'$, the vertices $f_{uv}$ and $\ell_{uv}$ can be only allocated into the $\sigma(\tilde{f}_{xy})$-th part and the $\sigma(\tilde{\ell}_{xy})$-th part, respectively, in the solution ${\cal V}$. Furthermore, the construction of $\alpha'$ also implies $(x,y)\in E_2(H)$.\medskip

\noindent {\em Case 2:} $x=y$. This means $\sigma(x)=\sigma(y)$. The construction of $\alpha'$ implies $f_{uv}$ and $\ell_{uv}$ are allocated into the $\sigma(x)$-th part of ${\cal V}$ as well. This, in turn, means that $\sigma(x)\in \lambda'(f_{uv})$ and $\sigma(x)\in \lambda'(\ell_{uv})$. Recall that $\lambda'(f_{uv})$ contains $\sigma(x)$ only when $(x,x)\in E_1(H)$. Hence, $(x,y)\in E_1(H)$.\medskip

Now we verify that  $\forall_{e\in E_{2}(H)}\ |C(e)|= \alpha(e)$. Consider an arc $e=(x,y)\in E_2(H)$. Note that for every directed edge $(u,v)$ in the $\chi$-arc charge $C(e)$, the $(u,f_{uv})$ of $E(G')$ contributes to $\alpha'(\sigma(x),\sigma(\tilde{f}_{xy}))$ exactly by one unit. Conversely, for every edge $(u,f_{uv})$ of $E(G')$ which contributes to $\alpha'(\sigma(x),\sigma(\tilde{f}_{xy}))$, we have $\chi(v)=y$ and thus the directed arc $(u,v)$  contributes to $C(e)$ by one unit. This establishes that  $\forall_{e\in E_{2}(H)}\ |C(e)|= \alpha(e)$.
\smallskip

The claimed running time follows from the fact that $w(I')=\sum\alpha'=3\cdot\sum\alpha =O(d(I))$
and $|V(G')|=O(|E(G)|)=O(d(I)\cdot |V(G)|)$.
\end{proof}

 \vspace{0mm}\section{An {\sf FPT}-algorithm for List Allocation}
\label{sec:lafpt}

In this section we give the proof that \la\ admits an {\sf FPT}-algorithm.
Before we proceed with the details of the proof let us summarize the main steps
of the proof that consists of a series of  {\sf T-FPT}-reductions.

\begin{enumerate}
\item  {\sc List Allocation} is {\sf T-FPT}-reduced to its restriction, called {\sc CLA}, where $G$ is a connected graph and only $O(w)$ boxes
are used. This reduction
takes care of the different ways connected components of $G$ can entirely be placed into the boxes
and is based on dynamic programming (see~Subsection~{\bf \ref{subsfs4r}}).\medskip

\item  {\sc CLA} is {\sf T-FPT}-reduced  to a restriction of it, called {\sc HCLA}, where $G$ is highly connected in the sense
that there is no set of $w$ edges that can separate $G$ into two ``big'' connected components.
This reduction  is presented in detail in Subsection~{\bf \ref{jj8poo4yyy}} and uses the technique of  {\em recursive
understanding}, introduced in~\cite{KawarabayashiT11them} and further developed in~\cite{CyganLPPS14minb} and~\cite{ChitnisCHPP12desi} (see also~\cite{GroheKMW11find}%
% \ig{are you sure that recursive understanding is also used in~\cite{GroheKMW11find}? sed> YES
%Don't you mean the paper about MIN BISECTION~\cite{CLPPS14}? This
%reference about MIN BISECTION was missing before!!}
), for generalizations of the  {\sc Multiway Cut} problem).\medskip

\item  {\sc HCLA} is {\sf T-FPT}-reduced  to a special enhancement of it, called \shcla, whose input additionally contains some
set $S\subseteq V(G)$  and the problem asks for a solution where all vertices of $S$ are placed in a unique ``big'' box
and all vertices of this box which are incident to crossing edges are contained in $S$. This variant
of the problem permits the application of the technique of  {\em randomized contractions},  introduced in~\cite{ChitnisCHPP12desi} (see~Subsection~{\bf \ref{mnop93e}}).\medskip

\item  Finally, \shcla\ is {\sf T-FPT}-reduced  to {\sc List Allocation} restricted to instances whose sizes
are bounded by a function of the parameter.  This
is presented in Subsection~{\bf \ref{mnr34op973e}}. It
is a  dynamic programming based on
the fact that an essentially equivalent instance of the problem
can be constructed if, apart from $S$, we remove from $G$ all but a bounded number
of the connected components of $G\setminus S$.
\end{enumerate}

\subsection{Some (more) definitions}
Given two sets $A$ and $B$ we denote by $B^A$ the set containing every function $f:A\rightarrow B$.
Given a function $h:A\rightarrow B$
and $S\subseteq A$,
 we define  $h|_S=\{(x,y)\in h\mid x\in S\}$.
 Given two functions $f_1,f_2:A\rightarrow\Bbb{Z}_{\geq 0}$ we define $f_{1}+f_{2}: A\rightarrow\Bbb{Z}_{\geq 0}$
such that $(f_{1}+f_{2})(x)=f_{1}(x)+f_{2}(x)$.
Let $X$ be a set and let $\zeta_{1},\zeta_{2}$ be two   functions mapping $X$ to non-negative integers. We say that $\zeta_{1}\leq \zeta_{2}$
if  $\forall {i\in X},\ \zeta_{1}(i)\leq\zeta_{2}(i)$.
Given a (possibly partial) function $\zeta: X\rightarrow\Bbb{Z}_{\geq 0}$
we define $\frak{F}_{\leq }(\zeta)=\{\zeta': X\rightarrow \Bbb{Z}_{\geq 0}\mid \zeta'\leq \zeta\}$.
Given a set $T\subseteq S$, we define the {\em restriction of ${\cal V}$
to $S$} as the $r$-allocation  ${\cal V}\cap {T}=({\cal V}^{(1)}\cap T,\ldots,{\cal V}^{(r)}\cap T)$. Notice that  ${\cal V}\cap {T}$ is an $r$-allocation of $T$. Given two $r$-allocations ${\cal V}_{1}=(V_{1}^{1},\ldots,V_{r}^{1})$ and ${\cal V}_{2}=(V_{1}^{2},\ldots,V_{r}^{2})$, we
define ${\cal V}_{1}\cup {\cal V}_{2}=(V_{1}^{1}\cup V_{1}^{2},\ldots,V_{r}^{1}\cup V_{r}^{2})$.

Given two graphs $G$ and $G'$ we set $G\cup G'=(V(G)\cup V(G'),E(G)\cup E(G'))$.
Given a graph $G$ and a set $S\subseteq V(G)$,
we define $\partial_G(S)$ as the set of all vertices in $S$
that are adjacent to vertices in $V(G)\setminus S$.

Let $G$ be a connected graph. A partition $(V_{1},V_{2})$ of $V(G)$ is a {\em $(q,y)$-good
separation} if $|V_{1}|,|V_{2}|>q$,
$|\delta_{G}(V_{1},V_{2})|\leq y$, and $G[V_1]$ and $G[V_{2}]$ are both connected.
A graph $G$ is called {\em $(q,y)$-{connected}} if it does  not contain any
{\em $(q,y-1)$-good separation}. (Note that for $q=0$, $(q,y)$-connectivity corresponds exactly to classical $y$-edge-connectivity.)

\begin{proposition}[Chitnis {et al}.~\cite{ChitnisCHPP12desi}]
\label{bhtefdjfh}
There exists a deterministic algorithm that, with input a $n$-vertex connected graph $G$,
a $q\in \Bbb{Z}_{\geq 1}$ and $y\in\Bbb{Z}_{\geq 0}$, either finds a $(q,y)$-good separation,
or reports  that no such separation exists, in  $2^{O(\min\{q,y\}\cdot  \log(q+y))}n^3 \log n$ steps.
\end{proposition}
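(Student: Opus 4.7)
The plan is to reduce the search for a $(q,y)$-good separation to enumerating a bounded family of two-colourings of $V(G)$ and, for each colouring, performing a single minimum-cut computation. The key structural observation is: if $(V_1, V_2)$ is a $(q,y)$-good separation and $F = \delta_G(V_1, V_2)$ (so $|F| \leq y$), then one can pick a pair of disjoint ``witness'' sets $A_1 \subseteq V_1$, $A_2 \subseteq V_2$ of total size at most $2\min\{q,y\} + 2$. Namely, when $y \leq q$, take $A_i$ to be the endpoints of $F$ that lie on side $i$ (at most $y$ vertices on each side); when $q < y$, take $A_i$ to be any $q+1$ vertices of $V_i$, which exist since $|V_i| > q$. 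In either case $F$ already separates $A_1$ from $A_2$ in $G$, so the minimum $(A_1,A_2)$-edge cut has size at most $|F|\leq y$.

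The next step is to guess $(A_1,A_2)$ without iterating over the $n^{O(\min\{q,y\})}$ candidate subsets. For this I would invoke a classical $(n,k)$-splitter family with $k = 2\min\{q,y\} + 2$: an explicitly constructible family $\mathcal{F}$ of two-colourings $c:V(G)\to\{0,1\}$, of size $2^{O(k\log k)}\cdot \log n = 2^{O(\min\{q,y\}\cdot \log(q+y))}\cdot \log n$, with the property that for every pair of disjoint vertex sets $A_1,A_2$ of total size at most $k$, some $c\in\mathcal{F}$ assigns every vertex of $A_1$ colour $0$ and every vertex of $A_2$ colour $1$. Such a family can be produced deterministically within the claimed time budget.

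For each colouring $c\in \mathcal{F}$, I would compute the minimum edge cut separating $c\rev(0)$ from $c\rev(1)$ by contracting $c\rev(0)$ into a single super-source $s$ and $c\rev(1)$ into a single super-sink $t$ and running a standard max-flow routine in $O(n^{3})$ time. If the resulting cut has value at most $y$, giving a partition $(U_0,U_1)$ with $c\rev(i)\subseteq U_i$, I would then refine it by moving each connected component of $G$ minus the cut edges to whichever side already contains its boundary witnesses, ensuring that both sides induce connected subgraphs without increasing the cut size. Finally I would test whether $|U_0|,|U_1|>q$ and return $(U_0,U_1)$ if so; otherwise, discard and proceed. If no colouring yields a valid separation, the algorithm concludes that none exists.

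The main obstacle is proving correctness in both regimes of $\min\{q,y\}$. When $y\leq q$, the ``lucky'' colouring is the one that separates $\partial V_1$ from $\partial V_2$; then $F$ itself witnesses an $(A_1,A_2)$-separator of size at most $y$, so the computed min cut has value at most $y$, and the component-refinement step relocates each free connected component to the side dictated by $V_1,V_2$, inheriting the size guarantee $|V_i|>q$. When $q<y$, the two fixed $(q+1)$-vertex witnesses directly force $|U_i|\geq q+1>q$, and again the cut cannot exceed $|F|\leq y$. Combining $|\mathcal{F}|$ guesses with one $O(n^{3})$ max-flow computation per guess, plus an additional $\log n$ factor from a binary search on the cut value, yields the claimed running time $2^{O(\min\{q,y\}\cdot \log(q+y))}\cdot n^{3}\log n$.
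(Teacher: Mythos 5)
This proposition is never proved in the paper: it is imported verbatim from Chitnis et al.~\cite{ChitnisCHPP12desi}, so your attempt has to be measured against their argument. Your overall plan (a derandomized splitter family plus flow computations) is indeed the right family of tools --- the splitter you invoke is exactly Proposition~\ref{jrif6thf} --- but the step that turns ``the splitter separates the witnesses'' into ``one max-flow finds the separation'' is broken. The family $\mathcal{F}$ consists of two-colourings of \emph{all} of $V(G)$, and the splitter property only constrains the colours of the at most $2\min\{q,y\}+2$ witness vertices; every other vertex is coloured adversarially. Hence, when you contract $c^{-1}(0)$ into $s$ and $c^{-1}(1)$ into $t$, you are left with a two-vertex multigraph, and the ``minimum cut separating $c^{-1}(0)$ from $c^{-1}(1)$'' is by definition the set of \emph{all} bichromatic edges of $c$ --- there is nothing to optimize, and its size bears no relation to $y$. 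The inequality you rely on (minimum cut at most $|F|\leq y$) holds for the witness sets $A_1,A_2$, but the algorithm never has access to $A_1,A_2$, only to the colour classes, and $F$ does not separate those: vertices of $V_1\setminus A_1$ may well receive colour $1$, and vice versa. So on a yes-instance no colouring in $\mathcal{F}$ need produce a cut of size at most $y$, and your algorithm may wrongly report that no $(q,y)$-good separation exists; completeness fails.

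Even if you repair this by anchoring the flow at witness-derived objects rather than at the colour classes, two further obstacles remain, and they are precisely what makes the argument in~\cite{ChitnisCHPP12desi} nontrivial. First, connectivity: in the regime $q<y$ you take $A_i$ to be ``any $q+1$ vertices of $V_i$'', which need not induce a connected subgraph; then no reassignment of components can in general make the $A_1$-side connected without increasing the cut (the standard component-moving argument works only when the terminal set on a side is a single vertex or at least connected --- here you could and should have chosen the $A_i$ connected, which is possible since $G[V_i]$ is connected with more than $q$ vertices). Second, the size constraint: in the regime $y\leq q$ your witnesses are the endpoint sets of $F$, which may consist of a single vertex on each side; a minimum cut between them can be completely unbalanced (e.g.\ it may just isolate one endpoint), and nothing forces both sides of the cut that the flow computation actually returns to have more than $q$ vertices, even though some balanced cut of size at most $y$ exists. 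Your proposed fix --- moving components ``to the side dictated by $V_1,V_2$'' --- is not an algorithm: the partition $(V_1,V_2)$ is exactly the unknown object being searched for, so no computable step can consult it. Overcoming these two difficulties simultaneously (Chitnis et al.\ do so by contracting the connected components of $G[S]$ and of $G\setminus S$ for the guessed set $S$, running cut computations between the contracted pieces, and making a careful case distinction on $\min\{q,y\}$) is the actual content of the proposition; as written, your proof skips it.
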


For a solution ${\cal V}$ to an instance $I=(G,r,\lambda,\alpha)$ of \la, we define $$E({\cal V})=\bigcup_{\{i,j\}\in{[r]\choose 2}}\delta_{G}({\cal V}^{(i)},{\cal V}^{(j)}).$$ We say that $H$ is $(i,\lambda)$-{friendly} if $i\in \bigcap_{v\in V(H)}\lambda(v)$.

\begin{observation}
\label{glojolg}
If ${\cal V}$ is a solution for some instance  $I=(G,r,\lambda,\alpha)$ of \la\  then every connected component of $G\setminus E({\cal V})$
is also a connected component of $G[{\cal V}^{(i)}]$ for some $i\in[r]$.
\end{observation}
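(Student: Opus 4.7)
The plan is to take an arbitrary connected component $C$ of $G\setminus E({\cal V})$ and establish two things in turn: (a) that $V(C)$ lies entirely inside a single part ${\cal V}^{(i)}$; and (b) that $C$ is actually a maximal connected subgraph of $G[{\cal V}^{(i)}]$, i.e.\ a connected component of $G[{\cal V}^{(i)}]$.

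For (a), I would argue by contradiction. Suppose $V(C)$ intersects two different parts ${\cal V}^{(i)}$ and ${\cal V}^{(j)}$. Since $C$ is connected, there exists a path in $C$ joining a vertex of ${\cal V}^{(i)}$ to a vertex of ${\cal V}^{(j)}$. Walking along this path, some consecutive pair of vertices $u,v$ must lie in two distinct parts ${\cal V}^{(i')}$ and ${\cal V}^{(j')}$. By the definition of $E({\cal V})=\bigcup_{\{p,q\}\in{[r]\choose 2}}\delta_{G}({\cal V}^{(p)},{\cal V}^{(q)})$ the edge $\{u,v\}$ belongs to $E({\cal V})$, which contradicts the fact that this edge, being an edge of $C$, lies in $G\setminus E({\cal V})$. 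Therefore, since ${\cal V}$ partitions $V(G)$, there is a unique $i\in[r]$ with $V(C)\subseteq {\cal V}^{(i)}$.

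For (b), note first that every edge of $C$ has both endpoints in ${\cal V}^{(i)}$, so $C$ is a connected subgraph of $G[{\cal V}^{(i)}]$. To establish maximality, assume for a contradiction that there is a vertex $v\in {\cal V}^{(i)}\setminus V(C)$ adjacent in $G[{\cal V}^{(i)}]$ to some $u\in V(C)$. The edge $\{u,v\}$ has both endpoints in ${\cal V}^{(i)}$, so it is not a crossing edge; hence it belongs to $E(G)\setminus E({\cal V})$. But then $u$ and $v$ are in the same connected component of $G\setminus E({\cal V})$, contradicting $v\notin V(C)$. Hence $C$ is a connected component of $G[{\cal V}^{(i)}]$.

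This is a direct set-theoretic/graph-theoretic argument and I do not foresee any real obstacle; the only care needed is to keep track of the three ambient graphs in play ($G$, $G\setminus E({\cal V})$, and $G[{\cal V}^{(i)}]$) and to invoke the definition of $E({\cal V})$ as the union of all $\delta_G({\cal V}^{(p)},{\cal V}^{(q)})$ at the right moment.
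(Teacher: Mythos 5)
Your proof is correct: part (a) shows each component of $G\setminus E({\cal V})$ lies inside a single part, and part (b) shows vertex-maximality within $G[{\cal V}^{(i)}]$, which together give the claim. The paper states this as an Observation with no proof at all (treating it as immediate from the definition of $E({\cal V})$), and your argument is exactly the routine verification being left implicit, so there is nothing to compare beyond noting that you have filled in the omitted details correctly.
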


\begin{observation}
\label{td7uyi}
If  ${\cal V}$ is a solution  for an instance $I=(G,r,\lambda,\alpha)$ of \la, where $G$ has $\ell$ connected components,
then $G\setminus E({\cal V})$ contains at most $w+\ell$ connected components.
\end{observation}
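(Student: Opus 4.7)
The plan is to reduce the statement to a counting argument on edge removals. First, I would observe that by condition~1 in the definition of \la, the number of edges of $G$ with endpoints in different parts ${\cal V}^{(i)}$ and ${\cal V}^{(j)}$ is exactly $\alpha(i,j)$. Since the sets $\delta_{G}({\cal V}^{(i)},{\cal V}^{(j)})$ for distinct pairs $\{i,j\}\in{[r]\choose 2}$ are pairwise disjoint, their union $E({\cal V})$ satisfies
\[
|E({\cal V})| \;=\; \sum_{\{i,j\}\in {[r]\choose 2}} |\delta_{G}({\cal V}^{(i)},{\cal V}^{(j)})| \;=\; \sum_{\{i,j\}\in {[r]\choose 2}} \alpha(i,j) \;=\; w.
\]

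The second step is to apply the elementary fact that deleting a single edge from a graph can increase the number of connected components by at most one (the edge either lies on a cycle, in which case the component count is unchanged, or it is a bridge, in which case the count increases by exactly one). By a trivial induction on the number of deleted edges, removing any $k$ edges from a graph with $c$ connected components yields a graph with at most $c+k$ connected components.

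Combining the two steps, since $G$ has $\ell$ connected components and $G\setminus E({\cal V})$ is obtained from $G$ by deleting $w$ edges, the number of connected components of $G\setminus E({\cal V})$ is at most $\ell + w$, as claimed. I do not expect any real obstacle here: the statement is essentially a bookkeeping observation intended to justify later structural arguments (for instance, that a solution cannot fragment $G$ into too many pieces), and the only subtlety is making explicit that $|E({\cal V})| = w$ uses the \emph{exact} count condition of \la\ rather than merely an upper bound.
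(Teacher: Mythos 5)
Your proof is correct and is exactly the argument the paper leaves implicit (the Observation is stated without proof): the exactness condition of \la\ gives $|E({\cal V})|=\sum\alpha=w$ since the sets $\delta_{G}({\cal V}^{(i)},{\cal V}^{(j)})$ are pairwise disjoint, and deleting $w$ edges increases the number of connected components by at most $w$. Nothing further is needed.
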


%
%\begin{observation}
%\label{kd7ehd8r}
%Each instance $I$ of \la\ has at most $w$ positive pairs and $2w$ positive indices.
%\end{observation}

\begin{lemma}
\label{j4oqndua}
There exists an algorithm that, given an instance $I=(G,r,\lambda,\alpha)$ of \la,
correctly solves the problem in $n^{O(w)}\cdot 2^{O((w+\ell)\cdot \log r)}$ steps, where $\ell$ is the
 number of connected components of $G$.
\end{lemma}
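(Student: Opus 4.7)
\medskip
\noindent\textbf{Proof plan for Lemma~\ref{j4oqndua}.}
The plan is to \emph{guess} the set of crossing edges of a hypothetical solution and then, on the resulting ``fragmented'' graph, assign each connected fragment to one of the $r$ boxes. The two observations stated just above the lemma do most of the combinatorial work: Observation~\ref{glojolg} tells us that every connected component of $G\setminus E(\mathcal{V})$ sits entirely inside a single part $\mathcal{V}^{(i)}$, and Observation~\ref{td7uyi} bounds the number of such components by $w+\ell$.

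Concretely, first recall that because the multiplicity of every edge of $G$ is assumed to be at most $w$, we have $m=|E(G)|\leq w\cdot\binom{n}{2}=n^{O(1)}\cdot w$. The algorithm then enumerates every subset $F\subseteq E(G)$ with $|F|=w$; the number of such subsets is
\[
\binom{m}{w}\ \leq\ m^{w}\ =\ n^{O(w)}.
\]
For each candidate $F$, compute the connected components $C_{1},\ldots,C_{k}$ of $G\setminus F$ in polynomial time and discard $F$ immediately if $k>w+\ell$. Otherwise, enumerate every function $\phi:\{1,\ldots,k\}\to[r]$, of which there are at most $r^{w+\ell}=2^{O((w+\ell)\log r)}$, and form the $r$-allocation ${\cal V}_{F,\phi}$ by setting ${\cal V}_{F,\phi}^{(i)}=\bigcup_{j:\phi(j)=i}V(C_{j})$. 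Verify, in polynomial time, (i) that each $C_{j}$ is $(\phi(j),\lambda)$-friendly, i.e.\ $\phi(j)\in\bigcap_{v\in V(C_{j})}\lambda(v)$, and (ii) that $|\delta_{G}({\cal V}_{F,\phi}^{(i)},{\cal V}_{F,\phi}^{(j)})|=\alpha(i,j)$ for every $\{i,j\}\in{[r]\choose 2}$. Return ${\cal V}_{F,\phi}$ if both checks pass; otherwise continue the enumeration. If no pair $(F,\phi)$ succeeds, report that $I$ has no solution.

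For correctness, if $I$ admits a solution ${\cal V}$, then $F^{*}:=E({\cal V})$ satisfies $|F^{*}|=\sum\alpha=w$, so it is one of the subsets enumerated. By Observation~\ref{td7uyi}, $G\setminus F^{*}$ has at most $w+\ell$ components, and by Observation~\ref{glojolg} each such component is contained in a unique ${\cal V}^{(i)}$; the function $\phi^{*}$ that sends each component to the index of that part is thus an enumerated assignment, and by construction ${\cal V}_{F^{*},\phi^{*}}={\cal V}$, so the verification passes. Conversely, any $(F,\phi)$ whose verification passes yields, by definition, an allocation satisfying all list and weight constraints of \la.

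The running time is $\binom{m}{w}\cdot r^{w+\ell}\cdot n^{O(1)}=n^{O(w)}\cdot 2^{O((w+\ell)\log r)}$, matching the claimed bound. The only mildly delicate point is the bookkeeping that shows the $w^{w}$ factor in $(wn^{2})^{w}$ is absorbed into $n^{O(w)}$ in the regime where the algorithm is meaningful; this is routine given the assumption that edge multiplicities are capped at $w$.
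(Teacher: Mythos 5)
Your proposal is correct and takes essentially the same approach as the paper's proof: enumerate all size-$w$ edge subsets $F$ as candidate crossing-edge sets ($n^{O(w)}$ of them), then enumerate the at most $r^{w+\ell}=2^{O((w+\ell)\log r)}$ assignments of the connected components of $G\setminus F$ to the $r$ boxes, invoking Observations~\ref{glojolg} and~\ref{td7uyi} exactly as the paper does, and checking each resulting allocation against the list and weight constraints. Your write-up is in fact more explicit than the paper's (it spells out both directions of correctness and the verification steps), and the $w^{w}$ bookkeeping caveat you flag at the end is equally present, and equally unaddressed, in the paper's bare assertion that there are $n^{O(w)}$ such subsets.
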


\begin{proof}
The algorithm considers each subset $F$ of $E(G)$ of size $w$. Notice that there are $n^{O(w)}$ such subsets.
From Observation~\ref{td7uyi},  $G\setminus F$
has at most $w+\ell$
connected components. From Observation~\ref{glojolg},
if ${\cal V}$ is a solution of \la\ for $I$,
and $E({\cal V})=F$, then the vertex set of each connected component of $G\setminus F$
is entirely contained in some ${\cal V}^{(i)}$.
The algorithm  considers all possible ways to assign the $\leq w+\ell$ connected components of $G_{F}$ to the  $r$ indices of $I$
 and checks whether this creates a solution for $I$. As there are $2^{O((w+\ell)\cdot \log r)}$ such assignments, the claimed running time follows.
\end{proof}

\vspace{0mm}\subsection{Connected list allocation}
\label{subsfs4r}

We define the {\sc Connected \allocation} problem ({\sc CLA}, in short) as the \allocation\ with the additional demand that the input graph $G$ is connected and $r\leq 2w$. The reason why we may assume that $r\leq 2w$ is the following. First, we may assume that $w\geq 1$ since otherwise, we can check whether $G$ is $(i,\lambda)$-friendly for some $i\in [r]$ and solve the instance for \cla. Suppose $r>2w>0$. Then there exists an index $i\in [r]$ such that $\alpha(i,j)=0$ for every $j\in [r]\setminus \{i\}$. As $G$ is connected, no vertex can be allocated to ${\cal V}^{(i)}$ in any solution ${\cal V}$ and thus we can remove the $i$-th part from the instance.
%

%Given an instance $I=(G,r,\lambda,\alpha)$
%of \cla,
% we define
% $${\bf folio}(I)=\{\alpha'\in\frak{F}_{\leq}(\alpha)\mid \mbox{$I=(G,r,\lambda,\alpha')$ is a {\sc yes}-instance of \cla}\}.$$
%
%Let ${\cal W}$ be a collection of connected subgraphs of $G$ and let $\lambda: V(G)\rightarrow [r]$.
%For each $\alpha'\in\frak{F}_{\leq}(\alpha)$ we define
%$${\bf rep}({\cal W},\alpha')=\{C\in{\cal W}\mid \alpha'\in {\bf folio}(C,r,\lambda|_{V(C)},\alpha) \}$$
%and, for every set $S\subseteq V(G)$,
%we define $${\bf trunk}(I,{\cal W},S)=\bigcup_{\alpha'\in \frak{F}_{\leq}(\alpha)}\tilde{\bf rep}({\cal W},\alpha')$$ where,  for each $\alpha'\in\frak{F}_{\leq}(\alpha)$, $\tilde{\bf rep}({\cal W},\alpha')$
%consists of $\min\{w,|{\bf rep}({\cal W},\alpha')|\}$  smallest, with respect to the number of vertices not in $S$, elements of ${\bf rep}({\cal W},\alpha')$.

%
%\begin{observation}
%\label{bgbg3ertu6}
%Given an instance $I$ of \cla, and a  collection ${\cal W}$ of connected subgraphs of $G$, and a set $S\subseteq V(G)$,
%the set ${\bf trunk}(I,{\cal W},S)$ contains at most $w\cdot 2^{w}$ elements and can be computed
%in $O(2^{O(w)}\cdot n)$ steps. Moreover, for every $\alpha'\in\frak{F}_{
%\leq}(\alpha^*)$, $|{\bf trunk}(I,{\cal W},S)\cap {\bf rep}({\cal W},\alpha')|\geq \min \{w,|{\bf rep}({\cal W},\alpha')\}$.
%\end{observation}

%
%The proof of the next lemma uses Observation~\ref{kd7ehd8r}.

\begin{lemma}
\label{d8h4kjks}
If there exists an algorithm solving {\sc CLA}
in $f(w)\cdot p(n)$ steps, then there is an algorithm that solves
\la\ in $\ell\cdot 2^{2w}\cdot f(w)\cdot p(n)$ steps.
\end{lemma}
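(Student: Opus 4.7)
The idea is to exploit the fact that distinct connected components of $G$ contribute independently to the crossing pattern, and to reduce $I=(G,r,\lambda,\alpha)$ to polynomially many calls of \cla\ (one per component and per candidate local pattern) assembled by a dynamic program. Write $C_{1},\ldots,C_{\ell}$ for the components of $G$. Any solution ${\cal V}$ of $I$ restricts to an $r$-allocation ${\cal V}_{i}:={\cal V}\cap V(C_{i})$ of each $V(C_{i})$, and since no edge joins distinct components, the crossing pattern of ${\cal V}$ equals the pointwise sum of the crossing patterns $\alpha_{i}$ of the ${\cal V}_{i}$. Conversely, gluing solutions of the sub-instances $(C_{i},r,\lambda|_{V(C_{i})},\alpha_{i})$ with $\sum_{i}\alpha_{i}=\alpha$ produces a solution of $I$. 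So solving $I$ amounts to choosing an ``internal'' pattern $\alpha_{i}\in\frak{F}_{\leq}(\alpha)$ for each $i$, summing to $\alpha$, such that each sub-instance is realizable.

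\emph{Precomputation.} For each $i\in[\ell]$ and each $\gamma\in\frak{F}_{\leq}(\alpha)$ I would decide whether $(C_{i},r,\lambda|_{V(C_{i})},\gamma)$ is a \yes-instance of \la\ and cache the answer. When $\sum\gamma\geq 1$ this is precisely a \cla\ instance of weight $\sum\gamma\leq w$ (the requirement $r\leq 2\sum\gamma$ is enforced by \cla\ internally via the preprocessing described earlier in the excerpt) and is decidable in $f(w)\cdot p(|V(C_{i})|)$ steps by the assumed algorithm. For $\gamma=0$, connectedness of $C_{i}$ forces all of $V(C_{i})$ into a single part, so the answer is \yes\ iff $C_{i}$ is $(j,\lambda)$-friendly for some $j\in[r]$, which is directly verifiable in $O(r\cdot|V(C_{i})|)$ steps.

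\emph{Dynamic programming.} Next, I would fill a table $D[i][\beta]$ for $0\leq i\leq\ell$ and $\beta\in\frak{F}_{\leq}(\alpha)$, where $D[i][\beta]=\yes$ iff $C_{1},\ldots,C_{i}$ admit respective allocations whose crossing patterns sum exactly to $\beta$. The base case is $D[0][0]=\yes$, and the transition sets $D[i+1][\beta+\gamma]\gets\yes$ whenever $D[i][\beta]=\yes$, $\beta+\gamma\leq\alpha$, and $(C_{i+1},\gamma)$ was marked \yes\ in the precomputation. Then $I$ is a \yes-instance of \la\ iff $D[\ell][\alpha]=\yes$; correctness is immediate from the decomposition observation of the first paragraph.

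\emph{Running time.} The key bound is
$|\frak{F}_{\leq}(\alpha)|=\prod_{\{i,j\}\in{[r]\choose 2}}(\alpha(i,j)+1)\leq \prod_{\{i,j\}:\alpha(i,j)\geq 1}2^{\alpha(i,j)}=2^{\sum\alpha}=2^{w},$
using $x+1\leq 2^{x}$ for $x\geq 1$. Hence the precomputation costs at most $\ell\cdot 2^{w}\cdot f(w)\cdot p(n)$ steps, while the DP has $O(\ell\cdot 2^{w})$ cells with $O(2^{w})$ potential updates per cell, i.e., $O(\ell\cdot 2^{2w})$ table operations. Combining yields the claimed $\ell\cdot 2^{2w}\cdot f(w)\cdot p(n)$ bound. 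The only technical subtlety I foresee is the separate treatment of $\gamma=0$, which lies outside \cla's tacit assumption $w\geq 1$ and must be resolved via the direct friendliness check.
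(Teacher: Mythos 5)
Your proposal is correct and follows essentially the same route as the paper: a dynamic program over the connected components $C_{1},\ldots,C_{\ell}$, indexed by partial weight functions in $\frak{F}_{\leq}(\alpha)$, with each component sub-instance decided by the assumed \cla\ solver, and the same counting $|\frak{F}_{\leq}(\alpha)|\leq 2^{w}$ giving the bound $\ell\cdot 2^{2w}\cdot f(w)\cdot p(n)$. The only cosmetic differences (precomputing all component/pattern answers before the DP rather than interleaving the calls, and handling $\gamma=0$ by an explicit friendliness check rather than folding it into the \cla\ preprocessing) do not change the argument or the running time.
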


\begin{proof}
%\todo{Use DP suggested by wise reviewer master!}
We present a dynamic programming for \la\ using an algorithm for {\sc CLA} as a subroutine. Let $C_1,\ldots , C_{\ell}$ be the connected components of $G$ and let $G_i=\bigcup_{j=1}^i C_j$. Define a table $P$ for dynamic programming in which the entries $P(i,\alpha')$ run over all $1\leq i\leq \ell$ and $\alpha'\in\frak{F}_{\leq}(\alpha)$. The value of $P(i,\alpha')$ is \yes\ if the instance $(G_i,r,\lambda|_{V(G_i)},\alpha')$ is \yes. Otherwise, $P(i,\alpha')=\no$. Note that the given instance $(G,r,\lambda,\alpha)$ is \yes\ to \la\ if and only if $P(\ell,\alpha)=\yes$.

For $i=1$, $G_1=C_1$ is connected and thus the value of $P(1,\alpha')$ can be correctly determined by solving {\sc CLA} on the instance $(C_1,r,\lambda|_{V(C_1)},\alpha')$. For $2\leq i\leq \ell$, we assume that all values $P(j,\alpha'')$ have been determined for $j<i$ and $\alpha'' \in\frak{F}_{\leq}(\alpha')$. Let $g$ be a function mapping instances of {\sc CLA} to $\{\yes,\no\}$ in a canonical way. The following recursion for $P(i,\alpha')$ is easy to verify.
\[P(i,\alpha')=\bigvee_{\alpha'' \in\frak{F}_{\leq}(\alpha')} P(i-1,\alpha'-\alpha'')\wedge g(C_i,r,\lambda|_{V(C_i)},\alpha'').\]

As $w=\sum \alpha$ by definition, the table $P$ consists of $\ell\cdot |\frak{F}_{\leq}(\alpha)|\leq \ell\cdot 2^w$ entries. Determining each entry amounts to at most $|\frak{F}_{\leq}(\alpha)|\leq 2^w$ table lookups and computations of $g$. The latter, equivalent to solving an instance to {\sc CLA}, takes at most $f(w)\cdot p(n)$ steps. Overall, the entire entries of $P$ can be determined in $\ell \cdot 2^{2w}\cdot f(w)\cdot p(n)$ steps.
 \end{proof}

%\vspace{0mm}
\vspace{0mm}\subsection{Highly connected list allocation}
\label{jj8poo4yyy}

We fix two functions $f_{1}(w)=2^{w}\cdot (2w)^{2w}$ and $f_2(w)= w \cdot f_1(w)+1$, which will appear through this section.
We define the {\sc Highly Connected \allocation} problem ({\hcla}, in short) as the {\sc Connected} \allocation\ problem with the only difference that we additionally  demand that the input graph is $(f_2(w),w+1)$-connected, where $w$ is the parameter of the problem.

We aim to shrink the size of a given instance $I=(G,r,\lambda,\alpha)$ of \cla\ by finding out a set $E_{{\cal C}}$ of edges such that $I/E_{{\cal C}}$ (formal definition given below) is equivalent to $I$. If it is possible to recursively contract edges so that the obtained instance is of size bounded by a function of $w$, then we can apply the algorithm of Lemma~\ref{j4oqndua} to solve the final instance of \cla.
\medskip

For an instance  $I=(G,r,\lambda,\alpha)$ of \cla\
and $B\in  {V(G)\choose \leq 2w}$,   we  set $\frak{U}(I,B)=[r]^{B}\times \frak{F}_{\leq}(\alpha)$.
Given a ${\bf w}=(\psi,\alpha')\in\frak{U}(I,B)$, we define the instance $I_{\bf w}=(G,\lambda',r,\alpha')$ of  \cla,
where $\lambda'=\lambda|_{V(G)\setminus B}\cup \psi$. The set $\tilde{\frak{U}}(I,B)$
is a collection of all ${\bf w}\in {\frak{U}}(I,B)$ such that $I_{\bf w}$ is a {\sc yes}-instance of  \cla.

\begin{observation}
\label{rj84hlsjd}
For every instance  $I=(G,r,\lambda,\alpha)$ of \cla\
and $B\in {V(G)\choose \leq 2w}$, it holds that $|\frak{U}(I,B)|\leq f_{1}(w) $.
\end{observation}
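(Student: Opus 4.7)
The plan is to bound the two factors $|[r]^B|$ and $|\frak{F}_{\leq}(\alpha)|$ of the Cartesian product $\frak{U}(I,B)=[r]^{B}\times \frak{F}_{\leq}(\alpha)$ separately and then multiply them.

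First I would bound $|[r]^B|$. Since this is the set of all functions from $B$ to $[r]$, we have $|[r]^B|=r^{|B|}$. Recall that in the definition of \cla\ it is imposed that $r\leq 2w$, and by hypothesis $|B|\leq 2w$, so immediately $|[r]^B|\leq (2w)^{2w}$.

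Next I would bound $|\frak{F}_{\leq}(\alpha)|$. A function $\alpha'\leq \alpha$ is determined by a choice of an integer in $\{0,1,\ldots,\alpha(e)\}$ for each $e\in{[r]\choose 2}$, so
\[
|\frak{F}_{\leq}(\alpha)|=\prod_{e\in {[r]\choose 2}}(\alpha(e)+1).
\]
Using the elementary inequality $x+1\leq 2^{x}$, valid for every non-negative integer $x$, we get
\[
\prod_{e\in {[r]\choose 2}}(\alpha(e)+1)\leq \prod_{e\in {[r]\choose 2}}2^{\alpha(e)}=2^{\sum\alpha}=2^{w}.
\]

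Multiplying the two bounds yields $|\frak{U}(I,B)|\leq (2w)^{2w}\cdot 2^{w}=f_{1}(w)$, as required. There is no real obstacle here: the whole statement is an elementary counting estimate whose only two inputs are the structural constraint $r\leq 2w$ built into the definition of \cla\ and the fact that $\sum\alpha=w$.
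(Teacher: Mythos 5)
Your proposal is correct and matches the reasoning behind the paper's statement, which is given as an observation without an explicit proof precisely because it amounts to this counting: $|[r]^B|=r^{|B|}\leq (2w)^{2w}$ (using $r\leq 2w$ from the definition of \cla\ and $|B|\leq 2w$) and $|\frak{F}_{\leq}(\alpha)|=\prod_{e}(\alpha(e)+1)\leq 2^{\sum\alpha}=2^{w}$, whose product is exactly $f_{1}(w)=2^{w}\cdot(2w)^{2w}$. The bound $|\frak{F}_{\leq}(\alpha)|\leq 2^{w}$ you derive is also the one the paper itself uses elsewhere (e.g., in the dynamic programming of Lemma~\ref{d8h4kjks}), so your argument is fully aligned with the paper.
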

%
%We also set up the function
% $f_{2}:\Bbb{Z}_{\geq 0}\rightarrow\Bbb{Z}_{\geq 0}$ such that
%$f_2(w)=w\cdot (f_{1}(w))^{2}+2\cdot w+2.$

%Let $I=(G,r,\lambda,\alpha)$ be an instance of  \hcla\
%where $|V(G)|>f_{2}(w)$
%  and  let $B$ be a subset of $V(G)$ of at most $2\cdot w$ vertices.

Given an instance $I=(G,r,\lambda,\alpha)$  of \la\ and a set of edges $E_{{\cal C}} \subseteq E(G)$, we define the instance $I / E_{{\cal C}}$ of \la\ as $(G/E_{{\cal C}},r,\lambda',\alpha)$, where $G/E_{{\cal C}}$ is the graph obtained from $G$ by contracting all the edges in $E_{{\cal C}}$, and $\lambda'$ is defined as follows: for each vertex $u \in V(G/E_{{\cal C}})$, let $V_u \subseteq V(G)$ be the set of vertices of $G$ that have been identified into $u$ after contracting the edges in $E_{{\cal C}}$ (note that, possibly, $V_u = \{u\}$). Then we define $\lambda'(u) := \bigcap_{v \in V_u} \lambda(v)$.
%For a vertex set $B\subseteq V(G)$, the set of vertices obtained from $B$ by contracting edges in $E_{{\cal C}}$ is denoted as $B/E_{{\cal C}}$.

Let  $I=(G,r,\lambda,\alpha)$ be an instance of \cla\ and let $Q\subseteq V(G)$. We set $I[Q]=(G[Q],r,\lambda|_{Q},\alpha).$ For a bipartition $(V_1,V_2)$ of $V(G)$, let $E_{{\cal C}}$ be a set of edges in $G[V_1]$. Then the gluing of $G[V_1]/E_{{\cal C}}$ and $G[V_2]$ along $\delta_G(V_1,V_2)$, denoted as $G[V_1]/E_{{\cal C}}\oplus_{\delta} G[V_2]$, can be naturally defined: starting from the disjoint union of $G[V_1]/E_{{\cal C}}$ and $G[V_2]$, for each edge $e=(u,v)\in \delta_G(V_1,V_2)$ with $u\in V_1$, we add an edge $e'$ whose one endpoint is the vertex into which $u$ is identified and the other endpoint is $v$. Notice that $G[V_1]/E_{{\cal C}}\oplus_{\delta} G[V_2] = G/E_{{\cal C}}$.

\medskip

The next lemma demonstrates the condition for a set of edges $E_{{\cal C}}$ under which $I$ and $I / E_{{\cal C}}$ are equivalent.

\begin{lemma}\label{lem:contraction-equivalent}
Let $I=(G,r,\lambda,\alpha)$ be an instance of \cla, let $(V_{1},V_{2})$ be a bipartition  of $V(G)$
such that $I[V_{1}]$ is an instance of \hcla\ with $|V_1| > f_{2}(w)$, let $B \subseteq V_1$ with $|B| \leq 2w$, let ${\cal S}=\{{\cal V}_{\bf w}\mid {\bf w}\in \tilde{\frak{U}}(I[V_{1}],B)\}$ be a collection of (arbitrary chosen) solutions to $I[V_1]_{\bf w}$ for every ${\bf w}\in \tilde{\frak{U}}(I[V_{1}],B)$, and let $E_{{\cal C}} = E(G[V_1]) \setminus \bigcup_{{\cal V}_{\bf w} \in {\cal S}}E({\cal V}_{\bf w})$.
%let $E_{{\cal S}} = \bigcup_{{\cal V}_{\bf w} \in {\cal S}}E({\cal V}_{\bf w})$, and let $E_{{\cal C}} = E(G[V_1]) \setminus E_{{\cal S}}$.
Then $E_{{\cal C}}\neq \emptyset$. Furthermore, $I$ and $I / E_{{\cal C}}$ are equivalent instances of \cla.
\end{lemma}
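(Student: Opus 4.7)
The plan is to dispatch the two claims separately. For $E_{{\cal C}}\neq\emptyset$, I would argue by counting. By Observation~\ref{rj84hlsjd}, $|\tilde{\frak{U}}(I[V_{1}],B)|\leq f_{1}(w)$, and each chosen ${\cal V}_{\bf w}\in{\cal S}$ contributes at most $\sum\alpha'\leq w$ crossing edges, so
\[
\Bigl|\bigcup_{{\cal V}_{\bf w}\in{\cal S}}E({\cal V}_{\bf w})\Bigr|\;\leq\; w\cdot f_{1}(w)\;=\;f_{2}(w)-1.
\]
On the other hand $G[V_{1}]$ is connected (being $(f_{2}(w),w+1)$-connected with $|V_{1}|>f_{2}(w)$), hence $|E(G[V_{1}])|\geq |V_{1}|-1>f_{2}(w)-1$, and the set-difference defining $E_{{\cal C}}$ is therefore nonempty.

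For the equivalence of $I$ and $I/E_{{\cal C}}$, the plan is a two-way ``swap'' argument. The backward direction is the easier one: given a solution $\tilde{{\cal V}}$ of $I/E_{{\cal C}}$, lift it to $V(G)$ by assigning every vertex in a contracted class $V_{u}$ to the part containing $u$; edges in $E_{{\cal C}}$ become non-crossing, the crossing-edge counts between distinct parts are preserved, and the list constraint lifts since $\lambda'(u)=\bigcap_{v\in V_{u}}\lambda(v)\subseteq\lambda(v)$ for each $v\in V_{u}$. The forward direction is the substantive one: given a solution ${\cal V}$ of $I$, define $\psi\colon B\to[r]$ by the part memberships of $B$ in ${\cal V}$, and let $\alpha'$ be the profile of crossing edges of ${\cal V}$ inside $G[V_{1}]$. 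Then ${\cal V}\cap V_{1}$ witnesses ${\bf w}:=(\psi,\alpha')\in\tilde{\frak{U}}(I[V_{1}],B)$, so ${\cal V}_{\bf w}\in{\cal S}$ is defined. I would then show that ${\cal V}':={\cal V}_{\bf w}\cup({\cal V}\cap V_{2})$ is again a solution of $I$ with the extra property $E({\cal V}')\cap E(G[V_{1}])\subseteq\bigcup_{{\bf u}}E({\cal V}_{\bf u})$, so that $E_{{\cal C}}\cap E({\cal V}')=\emptyset$ and ${\cal V}'$ descends to a solution of $I/E_{{\cal C}}$.

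The main obstacle lies in checking that gluing ${\cal V}_{\bf w}$ with ${\cal V}\cap V_{2}$ really preserves \emph{every} crossing-edge count of the full instance. Counts inside $G[V_{1}]$ match $\alpha'$ (since ${\cal V}_{\bf w}$ solves $I[V_{1}]_{\bf w}$) and counts inside $G[V_{2}]$ are untouched; the delicate case is $\delta_{G}(V_{1},V_{2})$, where one must know that the $V_{1}$-endpoint of every such edge lies in $B$, i.e.\ that $\partial_{G}(V_{1})\subseteq B$. This is the implicit hypothesis under which the lemma will be invoked in the recursive-understanding reduction of Subsection~\ref{jj8poo4yyy} (where $B$ is chosen as the interface of the good separation), and combined with the fact that ${\cal V}$ and ${\cal V}_{\bf w}$ assign $B$ identically according to $\psi$, it forces the crossing edges in $\delta_{G}(V_{1},V_{2})$ to be exactly the same for ${\cal V}$ and ${\cal V}'$. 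All other verifications (list compatibility on $V_{1}$ through the enforced lists $\psi$, list compatibility of the contraction, and the absence of contracted edges from $E({\cal V}')$) are then routine.
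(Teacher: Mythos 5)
Your proposal is essentially the paper's own proof. The nonemptiness of $E_{{\cal C}}$ is obtained by exactly the same counting (the stored crossing-edge sets have total size at most $w\cdot f_1(w)=f_2(w)-1$, while connectivity of $G[V_1]$ and $|V_1|>f_2(w)$ give $|E(G[V_1])|\geq f_2(w)$); the backward direction is the same lifting of a quotient solution; and the forward direction is the same swap of ${\cal V}\cap V_1$ by the stored solution ${\cal V}_{{\bf w}_1}$ with ${\bf w}_1=(\psi_B,\alpha_1)$, followed by descending to $I/E_{{\cal C}}$.

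Where you diverge is precisely where you are more careful than the paper. The paper's proof ends the forward direction by simply asserting that the swapped allocation ``constitutes a solution of \cla\ for $I/E_{{\cal C}}$'', and this assertion is exactly the delicate point you isolate: ${\cal V}$ and ${\cal V}_{{\bf w}_1}$ provably agree only on $B$, so the contributions of the edges of $\delta_G(V_1,V_2)$ to the pairwise counts are controlled only if every such edge has its $V_1$-endpoint in $B$, i.e.\ $\partial_G(V_1)\subseteq B$. This hypothesis is not cosmetic: without it the lemma is false, not merely unproved. For instance, take $r=3$, $\alpha(1,2)=\alpha(2,3)=1$, $\alpha(1,3)=0$ (so $w=2$ and $r\leq 2w$), let $V_1$ consist of a clique $A$ on $f_2(2)+1$ vertices with lists $\{1\}$ plus two vertices $d_1,d_2$ with lists $\{1,2\}$, each joined to $A$ by a single edge $\{d_i,a_i\}$, let $V_2=\{y\}$ with list $\{3\}$ and the single edge $\{y,d_2\}$, and let $B=\emptyset$. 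The only profile realizable inside $G[V_1]$ with nonzero weight is $\alpha'(1,2)=1$, and if one stores for it the solution that moves $d_1$ (rather than $d_2$) to part $2$, then $\{d_2,a_2\}\in E_{{\cal C}}$, so $d_2$ is contracted into $A$; the resulting quotient forces the edge towards $y$ to be a $(1,3)$-crossing and is a \no-instance, while $I$ itself is a \yes-instance (put $d_2$ in part $2$). So your reading is the right one: the statement must carry the additional hypothesis that $V(\delta_G(V_1,V_2))\cap V_1\subseteq B$, and this is exactly the invariant maintained by the algorithm {\bf shrink} of Subsection~\ref{jj8poo4yyy}, which always places $V(\delta(V_1,V_2))\cap V_i$ into the set $B'$ passed to the recursive call (and starts with $B=\emptyset$, $V_2=\emptyset$), so every invocation of the lemma in the paper is legitimate. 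With that hypothesis made explicit, your argument is complete and correct.
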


\begin{proof}
%To see the first statement, note that $|\tilde{\frak{U}}(I[V_{1}],B)|\leq |\frak{U}(I[V_{1}],B)|\leq f_1(w)$, see Observation~\ref{rj84hlsjd}.  For each ${\cal V}_{\bf w}\in {\cal S}$, we have $|E({\cal V}_{\bf w})|=w$ and thus $|\bigcup_{{\cal V}_{\bf w} \in {\cal S}}E({\cal V}_{\bf w})|\leq w\cdot f_1(w) <f_2(w)$. As $G[V_1]$ is connected, $|E(G[V_1])|\geq f_2(w)$ and $E_{{\cal C}}\neq \emptyset$.

Note also that by Observation~\ref{rj84hlsjd}, $|\bigcup_{{\cal V}_{\bf w} \in {\cal S}}E({\cal V}_{\bf w})| \leq  w \cdot |\tilde{\frak{U}}(I[V_1],B)| \leq w \cdot f_1(w)$. Since by hypothesis the graph $G[V_1]$ is connected and satisfies $|V_1| > f_2(w)$, it holds that $|E(G[V_1])| \geq f_2(w)$, and thus $|E_{{\cal C}}| = |E(G[V_1]) \setminus \bigcup_{{\cal V}_{\bf w} \in {\cal S}}E({\cal V}_{\bf w})| \geq f_2(w) - w \cdot f_1(w)= 1$, hence there exists at least one edge in $E_{{\cal C}}$.

We need to prove that $I$ is a YES-instance of  \cla\ if and only if $I / E_{{\cal C}}$ is. First note that contracting edges does not harm the connectivity of $G$ and, as $r$ and $\alpha$ are the same in $I$ and in $I / E_{{\cal C}}$,  $r\leq 2\sum \alpha$ holds.  Therefore $I / E_{{\cal C}}$ is indeed an instance of \cla.

Assume first that $I$ is a YES-instance, and let ${\cal V}$ be a solution of \cla\ for $I$. Let $\psi_B = \{(v, {\cal V}(v)) \mid v \in B\}$, where ${\cal V}(v)$ denotes the integer $i \in [r]$ such that $v\in {\cal V}^{(i)}$. Let also $\alpha_1$ be the element of $\frak{F}_{\leq}(\alpha)$ such that for any two distinct integers $i,j \in [r]$, $\alpha_1(i,j) = |\delta_{G[V_1]}({\cal V}^{(i)},{\cal V}^{(j)})|$, and let ${\bf w}_1 = (\psi_B, \alpha_1)$. Note that by the definition of ${\bf w}_1$ and since we assume that $I[V_{1}]$ is an instance of \hcla, it holds that
${\bf w}_1\in\tilde{\frak{U}}(I[V_1],B)$. Let ${\cal V}_{{\bf w}_1}$ be the solution to $I[V_1]_{{\bf w}_1}$ in the collection ${\cal S}$ and note that by the definition of the set $E_{{\cal C}}$, the endpoints of any edge in $E_{{\cal C}}$ belong to the same part of ${\cal V}_{{\bf w}_1}$. We now proceed to define an $r$-allocation ${\cal V}'$ for $I / E_{{\cal C}}$.  For each vertex $u \in V(G/E_{{\cal C}})$, let $V_u \subseteq V(G)$ be the set of vertices of $G$ that have been identified into $u$ after contracting the edges in $E_{{\cal C}}$, so each of the sets $V_u$ belongs entirely to the same part of ${\cal V}_{{\bf w}_1}$. For every $u \in V_2$, we define ${\cal V}'(u)={\cal V}(u)$, and for every $u \in V(G/E_{{\cal C}}) \setminus V_2$, we define  ${\cal V}'(u)={\cal V}_{{\bf w}_1}(v)$, for an arbitrary vertex $v \in V_u$. The above discussion implies that the $r$-allocation ${\cal V}'$ is well-defined and constitutes a solution of \cla\ for $I / E_{{\cal C}}$.

Conversely, assume now that $I / E_{{\cal C}}$ is a YES-instance, and let ${\cal V}'$ be a solution of \cla\ for $I'$. For  each vertex $u \in V(G/E_{{\cal C}})$, let again $V_u \subseteq V(G)$ be the set of vertices of $G$ that have been identified into $u$ after contracting the edges in $E_{{\cal C}}$. We proceed to define an $r$-allocation ${\cal V}$ for $I$.  For every $v \in V(G)$, if $v \in V_u$ for a vertex $u \in V(G/E_{{\cal C}})$, we define ${\cal V}(v)={\cal V}'(u)$. We claim that the $r$-allocation ${\cal V}$ is a solution of  \cla\ for $I$. Indeed, by definition of $\lambda'$ of the instance $I / E_{{\cal C}}$, we have that for every vertex $v \in V(G)$ belonging to a set $V_u \subseteq V(G)$, it holds that ${\cal V}(v) \in \lambda'(u) = \bigcap_{w \in V_u} \lambda(w) \subseteq \lambda(v)$. On the other hand, since edge multiplicities are summed up when contracting edges, it holds that for any two distinct integers $i,j \in [r]$, $|\delta_{G}({\cal V}^{(i)},{\cal V}^{(j)})| = |\delta_{G / E_{{\cal C}}}({\cal V}'^{(i)},{\cal V}'^{(j)})| = \alpha(i,j)$.\end{proof}

Due to Lemma~\ref{lem:contraction-equivalent}, an edge set $E_{{\cal C}}$ to be contracted can be obtained if we can compute a collection of solutions ${\cal S}=\{{\cal V}_{\bf w}\mid {\bf w}\in \tilde{\frak{U}}(I[V_{1}],B)\}$ to the instances $I_{\bf w}$ of \hcla. This can be done by solving the instance $I_{\bf w}$ for each ${\bf w}\in \frak{U}([V_{1}],B)$ and this requires at most $f_1(w)$ iterations of \hcla-solver by Observation~\ref{rj84hlsjd}. This point is formalized in the next observation.

\begin{observation}
\label{jf7sgekf73hd}
If there exists  an algorithm that can find, if it exists, a solution of {\sc HCLA}
in $f(w)\cdot p(n)$ steps, then there is an algorithm that,
given an instance $I=(G,r,\lambda,\alpha)$ of \hcla\ and
a set $B\subseteq V(G)$ where $|B|\leq 2\cdot w$,
computes the set $\tilde{\frak{U}}(I,B)$,  a set  ${\cal S}=\{{\cal V}_{\bf w}\mid {\bf w}\in\tilde{\frak{U}}(I,B)\}$ in case $\tilde{\frak{U}}(I,B)\neq \emptyset$, and the set  $E_{{\cal C}}=E(G[V_1]) \setminus \bigcup_{{\cal V}_{\bf w} \in {\cal S}}E({\cal V}_{\bf w})$  in $f(w)\cdot p(n)\cdot f_{1}(w)$ steps.
\end{observation}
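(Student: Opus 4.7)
The plan is to brute-force over $\frak{U}(I,B)$ and invoke the supplied \hcla-solver once for each candidate ${\bf w}$. First I would enumerate $\frak{U}(I,B)=[r]^{B}\times \frak{F}_{\leq}(\alpha)$ explicitly; by Observation~\ref{rj84hlsjd} this list has at most $f_{1}(w)$ elements and, since $r\leq 2w$, $|B|\leq 2w$, and $|\frak{F}_{\leq}(\alpha)|\leq 2^{w}$, it can be produced with only polynomial overhead per element.

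Next, for each ${\bf w}=(\psi,\alpha')$ in the list I would construct in polynomial time the instance $I_{\bf w}=(G,r,\lambda',\alpha')$ with $\lambda'=\lambda|_{V(G)\setminus B}\cup \psi$ and feed it to the solver. Each call spends at most $f(w)\cdot p(n)$ steps; if it returns a solution ${\cal V}_{\bf w}$, I add ${\bf w}$ to $\tilde{\frak{U}}(I,B)$ and ${\cal V}_{\bf w}$ to ${\cal S}$, and otherwise I simply discard ${\bf w}$. Finally, I compute $E_{{\cal C}}=E(G)\setminus \bigcup_{{\cal V}_{\bf w}\in{\cal S}}E({\cal V}_{\bf w})$ by scanning the edges of $G$ and marking those that appear in $E({\cal V}_{\bf w})$ for some ${\cal V}_{\bf w}\in{\cal S}$, which takes polynomial time in $n$ and $|{\cal S}|\leq f_{1}(w)$.

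The only point that requires a brief justification, and which is arguably the one subtle step, is that every $I_{\bf w}$ is again a legitimate \hcla-instance whose parameter does not exceed $w$: the underlying graph $G$ is unchanged and therefore remains connected and $(f_{2}(w),w+1)$-connected, while $\sum \alpha'\leq \sum \alpha=w$ bounds the parameter of $I_{\bf w}$. Consequently, the hypothesised solver is applicable with parameter $w$, and the overall running time is dominated by the $|\frak{U}(I,B)|\leq f_{1}(w)$ invocations of the \hcla-solver, giving the claimed bound $f(w)\cdot p(n)\cdot f_{1}(w)$ once the polynomial overhead of enumeration and of computing $E_{{\cal C}}$ is absorbed into $p(n)$.
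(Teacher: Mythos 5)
Your proposal is correct and matches the paper's treatment: the paper states this as an observation justified only by the remark that one solves $I_{\bf w}$ for each ${\bf w}\in\frak{U}(I,B)$, which requires at most $f_{1}(w)$ invocations of the \hcla-solver by Observation~\ref{rj84hlsjd}, exactly the brute-force enumeration you describe. Your extra remark that each $I_{\bf w}$ keeps the same graph (hence stays connected and $(f_{2}(w),w+1)$-connected) with parameter $\sum\alpha'\leq w$ is precisely the implicit justification the paper relies on for feeding $I_{\bf w}$ to the solver.
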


\IncMargin{2.5em}
\begin{algorithm}[h]
\LinesNumbered
\SetNlSty{texttt}{\bf    }{.     \ \  }
  \SetKwData{Let}{\bf let}
  \SetKwFunction{Return}{\bf return}
  \SetKwInOut{Name}{\bf Algorithm}
  \SetKwInOut{Input}{\sl Input}
  \SetKwInOut{Output}{\sl Output}
  \SetKwInOut{Parameter}{\sl Global Variable}
  \Name{{\bf shrink}($H,B$)}

  \Input{A graph $H$ and a set $B\subseteq V(H)$ s.t. $|B|\leq 2\cdot w$, $|V(H)|> f_{2}(w)$.}
  \Output{A graph $H^{\rm new}$ having at most $f_{2}(w)$ vertices or a report that $I$ is a {\sc no}-instance.}
  \Parameter{An instance $I'$ of  \cla. }	
  \BlankLine
  \eIf{$H$ has a $(f_{2}(w),w)$-separation $(V_{1},V_{2})$\label{k6tud8whh4}}{
 \Let $i$ be an integer in $\{1,2\}$ such that $|B\cap V_{i}|\leq w$ \\
% \Let $J={\delta}(V_{1},V_{2})$, $A_{i}=V_{i}\cap V(J)$, and $A_{3-i}=V_{3-i}\cap V(J)$ \\
\Let $B'= (B\cap V_{i}) \cup (V({\delta}(V_{1},V_{2})) \cap V_{i})$  \\
\Let $H'= {\bf shrink}(H[V_i],B')$, $H^{\rm new}=H' \oplus_{\delta} H[V_{3-i}]$ \label{23vFDgdf}\\
%\Let $I^{\rm new}=I^{\prime \rm new}\oplus_{\bf q} I'',$  where ${\bf q}=(A_{1},A_{2},J)$\\
\Let $B^{\rm new}$ be the vertices of $H^{\rm new}$ onto which the vertices of $B$ are identified\\
\If{$|V(H^{\rm new})|> f_2(w)$}{
{\bf return} ${\bf shrink}(H^{\rm new},B^{\rm new})$ \label{4tgergd} \\%
}
%{{\bf return} $\emptyset$ \\}
{\bf return} $H^{\rm new}$ \label{wneklrw}
}{
compute $\tilde{\frak{U}}(I'[V(H)],B)$ \label{5ufgdudgf}\\
\eIf{$\tilde{\frak{U}}(I'[V(H)],B)=\emptyset$}{report that $I$ is a {\sc no}-instance \\}{compute $ {\cal S}=\{{\cal V}_{\bf w}\mid {\bf w}\in\tilde{\frak{U}}(I'[V(H)],B)\}$, $E_{{\cal C}}=E(H)\setminus  \bigcup_{{\cal V}_{\bf w} \in {\cal S}}E({\cal V}_{\bf w})$ \label{5ufgdudgf} \\
\Let $I' \leftarrow I'/E_{{\cal C}}$ \label{43trgdfg}\\
{\bf return} $H/E_{{\cal C}}$ \label{23rwef}
}}
\end{algorithm}
\DecMargin{1em}

\cla\ is solved by reducing the instance size via iteratively contracting edges, and finding contractible edges boils down to solving \hcla\ by Observation~\ref{jf7sgekf73hd}. The next lemma formalize this as a  {\sf TFT}-reduction from \cla\ to \hcla. Although the idea of the reduction itself is straightforward, we provide rather a nontrivial reduction to achieve a better running time for \cla.

\begin{lemma}
\label{u83nd01nfbsyr}
If  {\sc HCLA} can be solved
in $f(w)\cdot p(n)$ steps,  then   {\sc CLA}  can be solved  in
$\max\{2^{O(w^2\cdot \log w)}\cdot n^{4}\cdot \log n,f(w)\cdot p(n)\cdot 2^{O(w\cdot \log w)}\}$ steps.
\end{lemma}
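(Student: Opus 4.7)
I would drive {\sc CLA} by repeatedly calling the {\bf shrink} subroutine of the pseudocode until the current instance $I'$ has at most $f_{2}(w)$ vertices, at which point Lemma~\ref{j4oqndua} finishes the job in $2^{O(w^{2}\log w)}$ additional time (since $r\leq 2w$ and the graph now has $f_{2}(w)=2^{O(w\log w)}$ vertices, with at most $w$ connected components surviving). The subroutine realizes the recursive-understanding paradigm: either the current subgraph $H$ admits an $(f_{2}(w),w)$-good separation, detected by Proposition~\ref{bhtefdjfh} in $2^{O(w^{2}\log w)}\cdot n^{3}\log n$ steps (applied with $q=f_{2}(w)$, $y=w$, so that $\min\{q,y\}=w$ and $\log(q+y)=O(w\log w)$), or $H$ is $(f_{2}(w),w+1)$-connected, i.e., an {\sc HCLA} instance. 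In the first case, {\bf shrink} recurses into the smaller-boundary side $V_{i}$ with an augmented boundary $B'$ of size at most $2w$ (using $|B\cap V_{i}|\leq w$ and $|\delta(V_{1},V_{2})|\leq w$), then reassembles via the gluing $\oplus_{\delta}$, possibly re-entering itself on $H^{\mathrm{new}}$. In the second case, Observation~\ref{jf7sgekf73hd} produces $\tilde{\frak{U}}(I'[V(H)],B)$ and a nonempty contractible set $E_{{\cal C}}$ at cost $f_{1}(w)\cdot f(w)\cdot p(n)=2^{O(w\log w)}\cdot f(w)\cdot p(n)$.

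\textbf{Correctness.} I would prove, by induction on the recursion depth of {\bf shrink}, that the returned graph yields an instance of {\sc CLA} equivalent to the one it was called on. At a leaf, this is precisely Lemma~\ref{lem:contraction-equivalent}, which additionally guarantees $E_{{\cal C}}\neq\emptyset$ and hence a strict decrease of the vertex count. At an internal node, the inductive hypothesis provides equivalence on the contracted side $V_{i}$; since all contractions occur strictly inside $V_{i}$ and leave $\delta(V_{1},V_{2})$ intact, the gluing on line~\ref{23vFDgdf} commutes with them, and the same reasoning applies to any second recursive call launched on $H^{\mathrm{new}}$. Lists are updated consistently (intersected over identified vertices), matching the definition of $I'/E_{{\cal C}}$ given in Subsection~\ref{jj8poo4yyy}.

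\textbf{Running time and main obstacle.} Every leaf contracts at least one edge, so the total number of leaves across all outer iterations and all internal recursion of {\bf shrink} is at most $n$. Each leaf costs $2^{O(w\log w)}\cdot f(w)\cdot p(n)$, yielding a total oracle cost that fits in the second term of the claimed bound once the linear-in-$n$ factor is absorbed in $p(n)$. The main obstacle is the analogous bound on the number of internal (separation-finding) invocations, which is not automatic because each such node may spawn a second recursive call on the potentially large graph $H^{\mathrm{new}}$. I would handle this with a potential-function argument keyed on $|V(H)|$: the first recursive call descends into $V_{i}\subsetneq V(H)$, while the second recursive call on $H^{\mathrm{new}}$ occurs only after at least one leaf of the first subtree strictly decreased $|V(H')|$, so every recursive call strictly decreases $|V(H)|$. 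Charging each separation-finding to a vertex that disappears downstream bounds the total number of internal nodes by $O(n)$; multiplying by the per-call cost $2^{O(w^{2}\log w)}\cdot n^{3}\log n$ yields the first term $2^{O(w^{2}\log w)}\cdot n^{4}\log n$, completing the announced bound.
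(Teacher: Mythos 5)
Your proposal follows the paper's own route: the same {\bf shrink} recursion, Proposition~\ref{bhtefdjfh} for the separation test, Observation~\ref{jf7sgekf73hd} together with Lemma~\ref{lem:contraction-equivalent} at the leaves, Lemma~\ref{j4oqndua} on the final instance of size at most $f_2(w)$, and a correctness induction (contractions inside $V_i$ commute with the gluing) that matches the paper's Claims~\ref{claim:subgraph+size} and~\ref{claim:equiv}. Your bound on the number of leaves (each leaf contracts at least one edge of the global instance, so there are at most $n$ of them) is also the right argument.

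The genuine gap is your count of the internal, separation-finding calls. The observation that every recursive call strictly decreases $|V(H)|$ only bounds the \emph{depth} of the recursion tree by $n$, and a binary tree of depth $n$ may have exponentially many nodes, so by itself it proves nothing about the number of calls. The charging you then invoke is not injective: a vertex that disappears at a leaf $\ell$ lies ``downstream'' of every internal node whose first subtree contains $\ell$, and there can be many such nodes along a chain of first-child calls, so several separation calls would charge the same vertex. The repair is immediate from the structure you already set up: the second recursive call on $H^{\rm new}$ in fact \emph{always} occurs, since $|V(H^{\rm new})|=|V(H')|+|V_{3-i}|>f_2(w)$ (already the side $V_{3-i}$ has more than $f_2(w)$ vertices, by the definition of a $(f_2(w),w)$-good separation). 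Hence every internal node has exactly two children, the recursion tree is a full binary tree, and the number of internal nodes is one less than the number of leaves, hence at most $n$; this yields the $2^{O(w^2\log w)}\cdot n^4\log n$ term. Equivalently (and this is the paper's terse route), one resolves the recurrence $T(n)\leq T_1(n)+T(n')+T(f_2(w)+n-n')$ by noting that the two recursive arguments sum to $n+f_2(w)$, i.e., the potential $|V(H)|-f_2(w)$ is sub-additive over the two children; that is what a potential ``keyed on $|V(H)|$'' must actually exploit. (Your absorption of the remaining factor $n$ on the oracle term into $p(n)$ is the same accounting looseness as in the paper's own resolution of the recurrence, and it is harmless for Theorem~\ref{u7u33ejk}.)
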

\begin{proof}
Let $I=(G,r,\lambda,\alpha)$ be an instance of  {\sc CLA}.
If $G$ has less than $f_{2}(w)$ vertices then, because of Lemma~\ref{j4oqndua} and the fact  that $r\leq 2w$, the problem can be solved in $2^{O(w^2\cdot \log w)}$ steps. If not, we call the algorithm {\bf shrink}$(G,\emptyset)$ with the given instance $I=(G,r,\lambda,\alpha)$ as the global variable $I'$. The global variable $I'$ is
initialized (only once) at the initial call {\bf shrink}$(G,\emptyset)$, and is considered to be out of the scope of the subsequent calls {\bf shrink}$(H,B)$. Hence, the subsequent calls share the access to the same global variable at line~\ref{43trgdfg}.

Let $G(I')$ refer to the input graph of the current global variable $I'$. We need the following claim, which ascertains the property of {\bf shrink}$(G,\emptyset)$ and also ensure that $I'[V(H)]$ referred to at line~\ref{5ufgdudgf} is a valid instance.

\begin{claim}\label{claim:subgraph+size}
The follow statements hold.
\begin{itemize}
\item[(a)] Throughout the execution of {\bf shrink}$(G,\emptyset)$, whenever a call {\bf shrink}$(H,B)$ is made, the graph $H$ is an induced subgraph of $G(I')$ for the current global variable $I'$.
\item[(b)] The graph\footnote{For notational convenience, we abuse {\bf shrink}$(H,B)$ also to denote the graph returned by the call {\bf shrink}$(H,B)$ whenever it is clear from the context.} returned by each call {\bf shrink}$(H,B)$ can be obtained by contracting edges of $H$ and has at most $f_2(w)$ vertices.
\end{itemize}
\end{claim}
\begin{proofofclaim}
Let $m$ be the number of calls for {\bf shrink}$(H,B)$ during the performance of {\bf shrink}$(G,\emptyset)$, including the initial call itself. At any step in {\bf shrink}$(G,\emptyset)$, let $i$ be the number of calls invoked so far (in Lines {\bf \ref{23vFDgdf}}
and~{\bf \ref{4tgergd}}) and $j$ be the number of return calls  (in Lines {\bf \ref{4tgergd}}, {\bf \ref{wneklrw}}, and~{\bf \ref{23rwef}}). Intuitively, $(i,j)$ stands for the current position in the recursion tree during the course of the algorithm {\bf shrink}$(G,\emptyset)$. Before we make the first call {\bf shrink}$(G,\emptyset)$, we have $(i,j)=(0,0)$. Clearly, $j\leq i$ during the entire execution, and the inequality is strict unless {\bf shrink}$(G,\emptyset)$ terminates. We prove the statements (a) and (b) by induction on $i+j$. Notice that the algorithm traverses from $(i,j)$ to $(i+1,j)$ exactly when a $(i+1)$-st call {\bf shrink}$(H,B)$ is made, and it traverses from $(i,j)$ to $(i,j+1)$ when a call {\bf shrink}$(H,B)$ returns a $(j+1)$-st output. Therefore, we prove the following modification of (a) and (b).

\begin{itemize}
\item[]($a'$) Whenever the traversal $(i,j)\rightarrow (i+1,j)$ is made by a call {\bf shrink}$(H,B)$, $H$ is an induced subgraph of $G(I')$, where $I'$ is the current global variable. \medskip
\item[]($b'$) Whenever the traversal $(i,j)\rightarrow (i,j+1)$ is made by a return of {\bf shrink}$(H,$
$B)$, the returned graph can be obtained by contracting edges of $H$, and has at most $f_2(w)$ vertices (unless {\bf shrink}$(H,B)$ reports that $I$ is a \no-instance).
\end{itemize}

In the base case, that is, when the first call {\bf shrink}$(G,\emptyset)$ makes the traversal $(0,0)\rightarrow (1,0)$, then it is clear that the statement ($a'$) holds. Now we consider the case when a traversal is made {\em to} $(i,j)$. As an induction hypothesis, we assume that any traversal to  $(i',j')$ with $i'+j' <i+j$ satisfies either ($a'$) or ($b')$, depending on whether it increase the first or the second cordinate.
%We prove the base case, i.e. that both statements $(a')$ and ($b')$ hold when $j=0$. The statement ($a'$) can be easily verified since at the initial call {\bf shrink}$(G,\emptyset)$, we have $G=G(I')$ and each subsequent update corresponds to taking an induced subgraph of $G$. Notice that the first update from $(i,0)$ to $(i,1)$ occurs at the return of {\bf shrink}$(G,\emptyset)$ corresponding to a leaf node of the recursion tree. By induction hypothesis, the graph $H$ is an induced subgraph of $G(I')$ for the current global variable $I'$ and thus, the instance $I'[V(H)]$ is well defined at line~\ref{5ufgdudgf}. Also, clearly the graph returned by {\bf shrink}$(H,B)$ is obtained by contracting edges of $H$. The number of vertices in the graph returned by {\bf shrink}$(H,B)$ is at most $f_2(w)$ since we contract all edges except for those in $\bigcup_{{\cal V}_{\bf w} \in {\cal S}}E({\cal V}_{\bf w})$, whose size is bounded by $w\cdot f_1(w)<f_2(w)$, and the obtained graph is connected. Therefore, the statement ($b'$) holds in this case. This proves the base case.

%For $j>0$, we want to settle ($a'$) and ($b'$) assuming that both statements ($a'$) and ($b'$) hold for all $(i',j')$ lexicographically smaller than $(i,j)$. There are a few possibilities.

\noindent {\bf Case A}: when the traversal $(i,j)\rightarrow (i+1,j)$ is made at Line~{\bf \ref{23vFDgdf}}: Let {\bf shrink}$(H[V_i]$,
$B')$ be the $(i+1)$-st call invoked at line~\ref{23vFDgdf}. By induction hypothesis, $H$ is an induced subgraph of $G(I')$, $H[V_i]$ is an induced subgraph of $H$, and $I'$ does not changes between $i$-th and $(i+1)$-st calls. Hence, the statement ($a'$) holds.

\noindent {\bf Case B}: when the traversal $(i,j)\rightarrow (i,j+1)$ is made at Line~{\bf \ref{23rwef}}: Clearly the graph returned by {\bf shrink}$(H,B)$ is obtained by contracting edges of $H$. The number of vertices in the graph returned by {\bf shrink}$(H,B)$ is at most $f_2(w)$ since we contract all edges except for those in $\bigcup_{{\cal V}_{\bf w} \in {\cal S}}E({\cal V}_{\bf w})$, whose size is bounded by $w\cdot f_1(w)<f_2(w)$, and the obtained graph is connected. Therefore, the statement ($b'$) holds in this case.

\noindent {\bf Case C}: when the traversal $(i,j)\rightarrow (i,j+1)$ is made at Line~{\bf \ref{wneklrw}}: Here we consider the case when {\bf shrink}$(H,B)$ returns an output at line~\ref{wneklrw}. By induction hypothesis, $H'$ is obtained by contracting edges of $H[V_i]$. Notice that $H^{\rm new}=H' \oplus_{\delta} H[V_{3-i}]$ can be obtained from $H$ by contracting the same set of edges that have been contracted in $H[V_1]$, resulting in $H'$. It clearly contains at most $f_2(w)$ vertices, thereby satisfying the statement ($b'$).

\noindent {\bf Case D}: when the traversal $(i,j)\rightarrow (i+1,j)$ is made at Line~{\bf \ref{4tgergd}}: Let $I'$ be the global variable when {\bf shrink}$(H,B)$ is called, and $I''$ be the global variable when {\bf shrink}$(H^{\rm new},B^{\rm new})$ is called at Line~{\bf \ref{4tgergd}}. By induction hypothesis, $H'$ and thus $H^{\rm new}$ can be obtained by contracting edges of $H$. During the traversal from the call {\bf shrink}$(H,B)$ and the call {\bf shrink}$(H^{\rm new},B^{\rm new})$, the contracted edges that transformed $I'$ to $I''$ are exactly those which transformed $H[V_1]$ to $H'$, and equivalently $H$ to $H^{\rm new}$. Since $H$ is an induced subgraph of $G(I')$, we conclude that $H^{\rm new}$ is an induced subgraph of $G(I'')$, and the statement ($a'$) holds.

\noindent {\bf Case E}: when the traversal $(i,j)\rightarrow (i,j+1)$ is made at Line~{\bf \ref{4tgergd}}: Here we consider the case when {\bf shrink}$(H,B)$ returns an output at Line~{\bf \ref{4tgergd}}. By case D, we know that $H^{\rm new}$ is an induced subgraph of the current global variable $I'$. By induction hypothesis, the graph returned by {\bf shrink}$(H^{\rm new},B^{\rm new})$ can be obtained by contracting edges of $H^{\rm new}$. Since $H^{\rm new}$ itself can be obtained by contraction edges of $H$ by the same argument as in Case C, the graph returned can be obtained by contracting edges of $H$. To see that the number of vertices in the returned graph is at most $f_2(w)$, we resort to the induction hypothesis, Case B and C. Therefore, ($b'$) holds as well.
\end{proofofclaim}

In order to establish the correctness of the algorithm {\bf shrink}$(H,B)$, we need the following claim.

\begin{claim}\label{claim:equiv}
While the initial call {\bf shrink}$(G,\emptyset)$ is carried out, the global variable $I'$ remains equivalent to $I$.
\end{claim}
\begin{proofofclaim}
Claim~\ref{claim:subgraph+size} implies that at Line~{\bf \ref{5ufgdudgf}}, $I'$ and $I'[V(H)]$ are indeed instances of \cla\ and \hcla\ meeting the conditions of Lemma~\ref{lem:contraction-equivalent}. Hence, any current global variable $I'$ and the new global variable $I'/E_{{\cal C}}$ updated at line~\ref{43trgdfg} are equivalent by Lemma~\ref{lem:contraction-equivalent}. As $I'=I$ at the outset of {\bf shrink}$(G,\emptyset)$, the current global variable $I'$ is equivalent to $I$ during the course of {\bf shrink}$(G,\emptyset)$.
\end{proofofclaim}

If {\bf shrink}$(G,\emptyset)$ reports that $I$ is a \no-instance, it means that $\tilde{\frak{U}}(I'[V(H)],B)=\emptyset$ for some call {\bf shrink}$(H,B)$. Note that if $I'$ is a \yes-instance, $\tilde{\frak{U}}(I'[V(H)],B)\neq \emptyset$ for every subgraph $H$ of $G(I')$. By Claim~\ref{claim:subgraph+size}, the graph $H$ is indeed a subgraph of $G(I')$ for any call {\bf shrink}$(H,B)$ incurred in the course of {\bf shrink}$(G,\emptyset)$ and thus $I'$ is a \no-instance. Together with Claim~\ref{claim:equiv}, this implies that $I$ indeed a \no-instance.

Suppose that {\bf shrink}$(G,\emptyset)$ returns a graph (i.e. does not report that $I$ is a \no-instance). By Claim~\ref{claim:subgraph+size} this means  that the graph {\bf shrink}$(G,\emptyset)$ can be obtained from $G$ by contracting edges of $G$, and has at most $f_2(w)$ vertices. Let $I^{\rm new}$ be the final global variable when {\bf shrink}$(G,\emptyset)$ terminates. The sequence of edge contractions applied to $G$ leading to {\bf shrink}$(G,\emptyset)$ is also applied to the initial global variable $I'=I$. Therefore, $G(I^{\rm new})$ contains no more vertices than the graph {\bf shrink}$(G,\emptyset)$ does, which is at most $f_2(w)$. From Claim~\ref{claim:equiv}, $I^{\rm new}$ is indeed equivalent to $I$. Therefore, by applying the algorithm of Lemma~\ref{j4oqndua} to $I^{\rm new}$, we can correctly solve the instance $I$.

What remains is to prove that \cla\ can be solved in the claimed running time, assuming an algorithm which solves {\sc HCLA}
in $f(w)\cdot p(n)$ steps. Let now $T(n,w)$ be the running time of Algorithm {\bf shrink}
when it runs on an instance $I=(G,r,\lambda,\alpha)$ where $|V(G)|=n$
and $w=\sum \alpha$.
Notice that
$$T(n,w)\leq \max_{{f_{1}(w)\leq n'\leq n-f_1(w)}}\{T_{1}(n,w)+T(n',w)+T(f_{2}(w)+n-n',w),T_{2}(n,w)\},$$
where $T_{1}$ is the running time of  required by line~{\bf \ref{k6tud8whh4}}
and $T_{2}$ is the running time required to compute $E_{{\cal C}}$ in  line~{\bf \ref{5ufgdudgf}}. From Proposition~\ref{bhtefdjfh}, $T_{1}(n,w)=2^{O(w^2\cdot  \log w)}\cdot n^3\cdot \log n$ and, from Observation~\ref{jf7sgekf73hd},
$T_{2}(n,w)=f(w)\cdot p(n)\cdot f_{1}(w)$.
By resolving the above recursion,  we obtain that
$T(n,w)=\max\{T_{1}(n,w)\cdot n,T_{2}(n,w)\}$. Lastly, solving an instance with at most $f_2(w)$ vertices using an algorithm of Lemma~\ref{j4oqndua} requires at most $2^{O(w^2\cdot \log w)}$ steps, which yields the claimed running time.
\end{proof}
%
%\IncMargin{2.5em}
%\begin{algorithm}[h]
%\LinesNumbered
%\SetNlSty{texttt}{\bf    }{.     \ \  }
%  \SetKwData{Let}{\bf let}
%  \SetKwFunction{Return}{\bf return}
%  \SetKwInOut{Name}{\bf Algorithm}\SetKwInOut{Input}{\sl Input}
%  \SetKwInOut{Output}
%  {\sl Output}
%  \Name{{\bf shrink}($I,B$)}
%  \Input{An instance $I=(G,r,\lambda,\alpha)$ of  \cla\ and a set $B\subseteq V(G)$ where $|B|\leq 2\cdot w$ and $|V(G)|> f_{2}(w)$.}
%  \Output{An instance $I^{\rm new}$ that is equivalent to $I$ whose graph has at most $f_{2}(w)$ vertices or a report that $I$ is a {\sc no}-instance.}
%  \BlankLine
%  \eIf{$G$ has a $(f_{2}(w),w)$-separation $(V_{1},V_{2})$\label{k6tud8whh4}}{
% \Let $i$ be an integer in $\{1,2\}$ such that $|B\cap V_{i}|\leq w$ \\
%% \Let $J={\delta}(V_{1},V_{2})$, $A_{i}=V_{i}\cap V(J)$, and $A_{3-i}=V_{3-i}\cap V(J)$ \\
%\Let $B'= (B\cap V_{i}) \cup (V({\delta}(V_{1},V_{2})) \cap V_{i})$ \\
% \Let $I'=I[V_{i}]$ and $I''=I[V_{3-i}]$ \\
%\Let $I^{\prime {\rm new}}$= {\bf shrink}$(I',B')$ \\
%\Let $I^{\rm new}=I^{\prime \rm new}\oplus_{\bf q} I'',$  where ${\bf q}=(A_{1},A_{2},J)$\\
%\eIf{$|V(I^{\rm new})|> f_2(w)$}{
%{\bf return} {\bf shrink}$(I^{\rm new},B)$ \\%
%}{{\bf return} $I^{\rm new}$ \\
%}
%}{
%\eIf{$\tilde{\frak{U}}(I,B)=\emptyset$}{report that $I$ is a {\sc no}-instance \\}{compute $M_{I,B}$\label{5ufgdudgf}
%\\ \Let $Q=M_{I,B}\setminus B$ \\
%{\bf return} $I^{\rm new}=I \langle Q \rangle$ \\
%}}
%\end{algorithm}
%\DecMargin{1em}

\vspace{0mm}\subsection{Split highly connected list allocation}
\label{mnop93e}

Given a graph $G$, an integer $r\in\Bbb{Z}_{\geq 1}$,  an allocation
${\cal V}=\{{\cal V}^{(1)},\ldots,{\cal V}^{(r)}\}$ of $V(G)$ and  two integers $j\in [r]$ and $x\in\Bbb{Z}_{\geq 0}$, we say that ${\cal V}$ is {\em $x$-bounded out of $j$} if
$\sum_{i\in[r]\setminus \{j\}} |{\cal V}^{(i)}|\leq x.$
We define the {\sc Split Highly Connected \allocation} problem ({\shcla}, in short) so that its instances are as the
instances of  {\sc Highly Connected \allocation}  enhanced with some subset $S$ of $V(G)$
and where we impose that $|V(G)|> 2w\cdot f_{2}(w)$ and that the requested solution ${\cal V}$,
additionally, satisfies  the following  condition: There exists some $j\in[r]$, such that\medskip

~~~~{\bf A}. ${\cal V}$ is $w\cdot f_{2}(w)$-bounded out of $j$
and

~~~~{\bf B}. $\partial_{G}({\cal V}^{(j)})\subseteq S\subseteq  {\cal V}^{(j)}$.
%\end{itemize}
%\end{itemize}

%The next lemma uses Observation~\ref{glojolg}.

\begin{lemma}
\label{opensi}
Let ${\cal V}$ be a solution of \hcla\
for an instance $I=(G,r,\lambda,\alpha)$ where $|V(G)|> 2w \cdot f_2(w)$. Then there is a unique
$j\in[r]$ such that ${\cal V}$ is $w\cdot f_2(w)$-bounded out of $j$
and
a unique $C\in {\cal C}(G\setminus E({\cal V}))$ with $|V(C)|>f_2(w)$.
Moreover, for such $C$ and $j$,  $C$ is a subgraph of $G[{\cal V}^{(j)}]$.
\end{lemma}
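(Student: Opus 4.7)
My plan relies on two facts: that $G$, being an instance of \hcla, is $(f_2(w), w+1)$-connected, and that $|E({\cal V})| = \sum \alpha = w$. Consequently, $G \setminus E({\cal V})$ is obtained from a connected graph by removing at most $w$ edges, so it has at most $w+1$ connected components, each of which, by Observation~\ref{glojolg}, is entirely contained in one of the parts ${\cal V}^{(i)}$.

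For the existence of a component $C$ with $|V(C)| > f_2(w)$: since $|V(G)| > 2w \cdot f_2(w) \geq (w+1) \cdot f_2(w)$ (for $w \geq 1$) and $G \setminus E({\cal V})$ has at most $w+1$ components, pigeonhole yields a component of size strictly greater than $f_2(w)$.

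For uniqueness, I would argue by contradiction. Suppose two distinct components $C_1, C_2$ of $G \setminus E({\cal V})$ each have more than $f_2(w)$ vertices. The plan is to manufacture a $(f_2(w), w)$-good separation of $G$, contradicting the $(f_2(w), w+1)$-connectivity of $G$. To do so, consider the multigraph $G'$ with vertex set ${\cal C}(G \setminus E({\cal V}))$ and one edge for each edge of $E({\cal V})$ joining the respective components; then $G'$ is connected, has at most $w+1$ vertices, and has at most $w$ edges. Take a spanning tree $T$ of $G'$ and delete an edge on the $T$-path between the vertices $v_1, v_2 \in V(G')$ corresponding to $C_1, C_2$. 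This partitions $V(G')$ into $A_1, A_2$ with $v_i \in A_i$ and each $G'[A_i]$ connected. Expanding each $A_i$ back to a subset $V_i$ of $V(G)$ (the union of the components encoded by $A_i$) yields a bipartition of $V(G)$ where each $G[V_i]$ is connected, $|\delta_G(V_1, V_2)| \leq |E(G')| \leq w$, and $|V_i| \geq |V(C_i)| > f_2(w)$, the desired contradiction.

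For the remaining claims: by Observation~\ref{glojolg} there exists $j \in [r]$ with $V(C) \subseteq {\cal V}^{(j)}$, which immediately gives $C \subseteq G[{\cal V}^{(j)}]$. Since the at most $w$ remaining components of $G \setminus E({\cal V})$ each have at most $f_2(w)$ vertices, we get $\sum_{i \neq j} |{\cal V}^{(i)}| \leq |V(G) \setminus V(C)| \leq w \cdot f_2(w)$, so ${\cal V}$ is $w \cdot f_2(w)$-bounded out of $j$. Uniqueness of $j$ follows by a short counting argument: if ${\cal V}$ were $w \cdot f_2(w)$-bounded out of two distinct indices $j, j'$, adding $|V(G)| - |{\cal V}^{(j)}| \leq w \cdot f_2(w)$ and $|V(G)| - |{\cal V}^{(j')}| \leq w \cdot f_2(w)$ and using $|{\cal V}^{(j)}| + |{\cal V}^{(j')}| \leq |V(G)|$ gives $|V(G)| \leq 2w \cdot f_2(w)$, contradicting the hypothesis.

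The main obstacle is the uniqueness of $C$: naively using the cut around a large component fails because $V(G) \setminus V(C)$ need not induce a connected subgraph. Contracting the components of $G \setminus E({\cal V})$ to a quotient multigraph and using a spanning-tree path is the natural way around this, yielding both parts connected while keeping the edge cut bounded by $w$.
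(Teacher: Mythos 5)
Your proof is correct and follows essentially the same route as the paper's: pigeonhole over the at most $w+1$ components of $G\setminus E({\cal V})$ for existence of the big component, Observation~\ref{glojolg} to place it inside a single part ${\cal V}^{(j)}$, the $(f_2(w),w+1)$-connectivity of $G$ to rule out a second big component, and the counting argument forcing $|V(G)|\leq 2w\cdot f_2(w)$ for the uniqueness of $j$. The only difference is that where the paper simply asserts that a second component of size $>f_2(w)$ would contradict $(f_2(w),w+1)$-connectivity, you supply the detail that this assertion actually needs — the quotient-multigraph/spanning-tree construction guaranteeing that \emph{both} sides of the resulting separation induce connected subgraphs, which is precisely the subtle point since $V(G)\setminus V(C)$ need not induce a connected graph.
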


%%%%
\begin{proof} Let  $C$ be a connected component of  $G\setminus E({\cal V})$ that has maximum number of vertices. As $G\setminus E({\cal V})$ has at most $w+1$ connected components and $2w \cdot f_2(w) \geq (w+1)\cdot f_2(w)+1$, we deduce
that $|V(C)|> f_2(w)$. Using Observation~\ref{glojolg}, we know that $C$ belongs entirely in some ${\cal V}^{(j)}$. As  $G$ is  $(f_2(w),w+1)$-connected,
 every connected component of $G\setminus E({\cal V})$ that is different from $C$ has at most  $f_2(w)$ vertices.
 This implies that the union of the  parts of ${\cal V}$ that are different from ${\cal V}^{(j)}$ contains  at most $w\cdot f_2(w)$ vertices. Moreover $j$ is unique as, otherwise, $|V(G)|\leq 2w\cdot f_2(w)$.
\end{proof}

%The proof of the next Lemma uses Lemma~\ref{opensi}.
%
%
%\begin{lemma}
%\label{u85na1gpe}
%If $(I,S)$ is a {\sc YES}-instance of \shcla, where $I=(G,r,\lambda,\alpha)$ and $|V(G)|> 2w\cdot f_{2}(w)$,
%then there exists some solution ${\cal V}$ of \shcla\ for $(I,S)$ and a unique $j\in[r]$
%such that, {\bf i}. ${\cal V}$ is $w\cdot f_{2}(w)$-bounded out of $j$ and
%{\bf ii}.  if $C\in{\cal C}(G\setminus S)$ and
%$C$ is not ${(j,\lambda)}$-friendly,
% then $V(C)\cap  {\cal V}^{(j)}=\emptyset$.
%\end{lemma}
%
%\begin{proof} Let ${\cal V}$ be a solution of \shcla\ for $(I,S)$. From Lemma~\ref{opensi}, there is
%a unique index $j$ satisfying {\bf A}.  We also adjust the choice of ${\cal V}$ such that $|{\cal V}^{(j)}|$ is
%maximized. Obviously Condition {\bf i} holds.
%To prove Condition {\bf ii},  consider a connected component $C$ of $G\setminus S$.
%Suppose that $C$ is not ${(j,\lambda)}$-friendly. Clearly,  $V(C)$ is not a subset of ${\cal V}^{(j)}$,
%therefore it contains some vertex $x$ not in ${\cal V}^{(j)}$. Towards a contradiction we assume that
%$C$ has also a vertex $y$ in
%${\cal V}^{(j)}$. This is impossible as every path between $x$ and $y$ in $C$ should
%contain some vertex $z$ of $\partial_{G}({\cal V}^{(j)})$. From {\bf B},
%$z\in S$, a contradiction as $C$ is a connected component of $G\setminus S$.
%\end{proof}
%Before stating the next lemma, we need the following result.

\begin{proposition}[Chitnis {et al}.~\cite{ChitnisCHPP12desi}]
\label{jrif6thf}
There exists an algorithm that given a set $U$ of size
$n$ and  two integers
$a,b\in[0,n]$, outputs a set ${\cal F}\subseteq 2^{U}$
with $|{\cal F}|=2^{O(\min\{a,b\}\cdot\log(a+b+1))}\cdot \log n$ such that for every two sets $A,B\subseteq U$, where $A\cap B=\emptyset$ and $|A|\leq a$ and $|B|\leq b$, there exists a set $S\in{\cal F}$ with $A\subseteq S$ and $B\cap S=\emptyset$, in $2^{O(\min\{a,b\}\cdot\log(a+b+1))}\cdot n\cdot \log n$ steps.
\end{proposition}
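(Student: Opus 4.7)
The plan is to compress $U$ to a universe of size polynomial in $k=a+b$ via a splitter, and then to enumerate all candidate separating sets inside that small universe. I first exploit the symmetry between $A$ and $B$: complementing a candidate set $S$ (replacing it by $U\setminus S$) swaps the ``must contain'' and ``must avoid'' roles, so I may assume without loss of generality that $a\leq b$, and hence $\min\{a,b\}=a$; once the family is built under this assumption, the original case is recovered by also including the complements.

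Next I would invoke the Naor--Schulman--Srinivasan splitter construction: for $k=a+b$ there exists a family ${\cal H}$ of functions $h\colon U\to [k^{2}]$, of cardinality $k^{O(1)}\cdot \log n$ and computable in $k^{O(1)}\cdot n\log n$ steps, with the property that for every $T\subseteq U$ with $|T|\leq k$ some $h\in{\cal H}$ is injective on $T$. From ${\cal H}$ I would build ${\cal F}$ by taking, for every $h\in{\cal H}$ and every $I\in \binom{[k^{2}]}{a}$, the preimage $h^{-1}(I)=\{u\in U\colon h(u)\in I\}$. Correctness is then verified as follows: given disjoint $A,B$ with $|A|\leq a$, $|B|\leq b$, pick $h\in{\cal H}$ that is injective on $A\cup B$; then $h(A)$ and $h(B)$ are disjoint subsets of $[k^{2}]$, and since $|h(A)|+|h(B)|\leq k<k^{2}$, one can pad $h(A)$ to some $I$ of size exactly $a$ that is disjoint from $h(B)$; the resulting $h^{-1}(I)\in{\cal F}$ satisfies $A\subseteq h^{-1}(I)$ and $B\cap h^{-1}(I)=\emptyset$.

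For the bounds, $|{\cal F}|\leq |{\cal H}|\cdot\binom{k^{2}}{a}\leq k^{O(1)}\cdot k^{2a}\cdot \log n=2^{O(a\log k)}\cdot \log n=2^{O(\min\{a,b\}\cdot \log(a+b+1))}\cdot \log n$, and the running time is dominated by enumerating and storing the preimages $h^{-1}(I)$ in $O(n)$ time each, yielding a total of $2^{O(\min\{a,b\}\cdot \log(a+b+1))}\cdot n\cdot \log n$ steps. The only nontrivial ingredient is the splitter; once that is taken as a black box, the main point requiring care is the choice of the intermediate universe size $k^{2}$: it must be large enough that $A\cup B$ admits an injective $h$ and that $h(A)$ can always be padded while avoiding $h(B)$, yet small enough that $\binom{k^{2}}{a}$ stays within $2^{O(a\log k)}$. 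Any polynomial in $k$ works and $k^{2}$ is the standard choice; trivial edge cases such as $a=0$ (take ${\cal F}=\{\emptyset\}$) or $k\leq 1$ are handled separately and are immediate.
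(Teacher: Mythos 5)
The paper does not actually prove Proposition~\ref{jrif6thf}; it imports it verbatim from Chitnis et al.~\cite{ChitnisCHPP12desi}, whose proof is precisely the argument you give: an $(n,k,k^2)$-splitter family (Naor--Schulman--Srinivasan) with $k=a+b$, enumeration of the $\binom{k^2}{\leq\min\{a,b\}}$ image patterns whose preimages form ${\cal F}$, and complementation to reduce to the case $a\leq b$. Your reconstruction is correct --- including the injectivity-plus-padding step and the size/time accounting --- so it matches the cited source's approach essentially step for step.
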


The proof of the following lemma uses
Proposition~\ref{jrif6thf} and Lemmata~\ref{j4oqndua} and~\ref{opensi}.

\begin{lemma}
\label{p01j5ds}
Given  an algorithm solving
\shcla\ in $f(w)\cdot p(n)$ steps, then there is an algorithm solving
\hcla\ in $f(w)\cdot 2^{O(w^{2}\cdot \log w )}\cdot \log n\cdot  \max\{n,p(n)\}$ steps.
\end{lemma}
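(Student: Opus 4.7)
The plan is to reduce an instance of \hcla\ to polynomially many instances of \shcla\ by guessing, via a splitter family, the boundary $S$ of the ``big'' part predicted by Lemma~\ref{opensi}. First I would dispose of the easy case $|V(G)| \leq 2w\cdot f_{2}(w)$: here Lemma~\ref{j4oqndua} together with the bound $r\leq 2w$ (which comes built into \cla\ and hence \hcla) solves the instance in $2^{O(w^{2}\cdot \log w)}$ steps, which fits into the claimed budget.

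So assume $|V(G)| > 2w\cdot f_{2}(w)$. Any hypothetical solution ${\cal V}$ of \hcla\ for $I$ falls into the regime of Lemma~\ref{opensi}: there is a unique index $j\in[r]$ such that ${\cal V}$ is $w\cdot f_{2}(w)$-bounded out of $j$, and the set $B:=V(G)\setminus{\cal V}^{(j)}$ has size at most $w\cdot f_{2}(w)$. On the other hand, since the total number of crossing edges is $w$, the boundary $A:=\partial_{G}({\cal V}^{(j)})$ contains at most $w$ vertices. The set $S^{\star}:=\partial_{G}({\cal V}^{(j)})$ is precisely what the \shcla-oracle needs in order to recognize ${\cal V}$ as a valid solution, because conditions~\textbf{A} and~\textbf{B} of \shcla\ read off literally from Lemma~\ref{opensi} once $S=S^{\star}$.

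The core step is then to enumerate enough candidate sets $S\subseteq V(G)$ so that at least one equals $S^{\star}$ modulo the separating requirements $A\subseteq S$ and $B\cap S=\emptyset$. This is exactly what Proposition~\ref{jrif6thf} provides: applying it to $U=V(G)$ with $a=w$ and $b=w\cdot f_{2}(w)$, we obtain in $2^{O(w\cdot\log(w\cdot f_{2}(w)))}\cdot n\cdot\log n$ steps a family ${\cal F}$ of size $2^{O(w\cdot\log(w\cdot f_{2}(w)))}\cdot \log n$ that is guaranteed to contain some $S$ satisfying $A\subseteq S$ and $B\cap S=\emptyset$. Since $f_{2}(w)=2^{O(w\log w)}$, we have $\log(w\cdot f_{2}(w))=O(w\log w)$, so both the cardinality of ${\cal F}$ and the time to compute it are $2^{O(w^{2}\cdot\log w)}\cdot n\cdot\log n$.

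For each $S\in{\cal F}$, I would invoke the assumed \shcla-algorithm on the instance $(G,r,\lambda,\alpha,S)$ and return \textsc{yes} as soon as any call succeeds, otherwise report \textsc{no}. Correctness in one direction is immediate (a solution of \shcla\ is a solution of \hcla); in the other direction, if $I$ is a \textsc{yes}-instance of \hcla\ with solution ${\cal V}$, then Lemma~\ref{opensi} produces the $j$ and the required $A,B$, the splitter property gives some $S\in{\cal F}$ with $A\subseteq S$ and $B\cap S=\emptyset$, and ${\cal V}$ itself witnesses that $(G,r,\lambda,\alpha,S)$ is a \textsc{yes}-instance of \shcla. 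Summing over the $|{\cal F}|=2^{O(w^{2}\log w)}\cdot \log n$ oracle calls plus the generation of ${\cal F}$ yields the total $f(w)\cdot 2^{O(w^{2}\log w)}\cdot\log n\cdot\max\{n,p(n)\}$. The only point that requires care—and which I would double-check—is that $A$ and $B$ are disjoint (true because $A\subseteq{\cal V}^{(j)}$ while $B$ is its complement) and that their sizes match the parameters fed into Proposition~\ref{jrif6thf}; beyond that the argument is a textbook application of the splitter technique on top of the structural lemma.
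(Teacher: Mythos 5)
Your proposal is correct and follows essentially the same route as the paper's proof: dispose of the case $|V(G)|\leq 2w\cdot f_{2}(w)$ via Lemma~\ref{j4oqndua}, then use Lemma~\ref{opensi} to identify the unique big part, apply Proposition~\ref{jrif6thf} with $a=w$ and $b=w\cdot f_{2}(w)$ to build the splitter family ${\cal F}$, and call the \shcla-oracle once per $S\in{\cal F}$, with the same running-time accounting. The only cosmetic difference is that the paper takes $B=\bigcup_{i\neq j}{\cal V}^{(i)}$ where you write $B=V(G)\setminus{\cal V}^{(j)}$, which is the same set.
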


\begin{proof} Let $I$ be an instance of \hcla. If
$|V(G)|\leq 2w\cdot f_{2}(w)$, \hcla\ can be solved
in $ (2w\cdot f_{1}(w))^{w}\cdot 2^{O(w\cdot \log w)}=2^{O(w^2\cdot\log w )}$
steps because of Lemma~\ref{j4oqndua} (applied for $\ell=1$ and $r\leq 2w$).

Let ${\cal F}$ be a family of subsets of $V(G)$ such that the condition of Proposition~\ref{jrif6thf} is satisfied for $a=w$ and
$b=w\cdot f_{2}(w)$. We claim that $I$ is a {\sc yes}-instance of
\hcla\ if and only if for some $S\in{\cal F}$, $(I,S)$ is a {\sc yes}-instance of \shcla. Recall that $(I,S)$ is an instance of \shcla, as $|V(G)|>2w\cdot f_{2}(w)$.

In the non-trivial direction, assume that ${\cal V}$ is a solution for $I$. By applying Lemma~\ref{opensi} on $I$, we know that there is a unique $j$ such that ${\cal V}$ is $w\cdot f_{2}(w)$-bounded out of $j$.
Let
$A=\partial_{G}({\cal V}^{(j)}) \mbox{\ \ and\ \ } B=\bigcup_{i\in[r]\setminus\{j\}}{\cal V}^{(i)}.$
Clearly, $|A|\leq w$
and $|B|\leq w\cdot f_{2}(w)$.
By the definition of ${\cal F},$ there exists some set $S\in{\cal F}$ such that $A\subseteq S$ and $B\cap S=\emptyset$. Therefore $\partial_{G}({\cal V}^{(j)})\subseteq S\subseteq {\cal V}^{(j)}$ and $(I,S)$ is a {\sc yes}-instance of \shcla\  as required.

Suppose now that ${\sf A}$ is an algorithm that solves \shcla\ in $f(w)\cdot p(n)$ steps. To solve \hcla, we  apply {\sf A} on $(I,S)$ for all $S\in{\cal F}$. If we obtain a solution to $(I,S)$ for some $S\in{\cal F}$ we output this solution as a solution to $I$, otherwise we output that $I$ is a {\sc no}-instance of \hcla.
 As $|{\cal F}|=2^{O(w\cdot\log(w\cdot f_{2}(w)))}\cdot \log n=2^{O(w^{2}\cdot \log w )}\cdot \log n$, this algorithm runs in $2^{O(w^{2}\cdot \log w )}\cdot \log n\cdot n+2^{O(w^{2}\cdot \log w )}\cdot \log n\cdot f(w)\cdot p(n)$ steps as required.
\end{proof}

%Let $(I,S)$ be an instance of {\sc SHCLA} where $I=(G,r,\lambda,\alpha)$ and $S\subseteq V(G)$,
%and let ${\cal W}\subseteq {\cal C}(G\setminus S)$ and $j\in[r]$.
%We define the function $\lambda_{j,S}: V(G)\rightarrow 2^{[r]}$ such that
%$$\lambda_{j,S}(x)=\left\{\begin{array}{lll}
%\{j\}& x\in S, \\
%\lambda(x)\setminus\{j\}&  x\in V(G)\setminus S.
%\end{array}\right.$$

\vspace{0mm}\subsection{An algorithm for solving \shcla}
\label{mnr34op973e}

Below we present a dynamic programming algorithm for solving \shcla.% and Observation~\ref{bgbg3ertu6}.

\begin{lemma}
\label{opp4pp4kjj4j}
\shcla\ can be solved in $2^{O(w^{2}\cdot \log w )}\cdot n$ steps.
\end{lemma}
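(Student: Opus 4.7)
The plan is a dynamic programming over the connected components of $G\setminus S$, preceded by a guess of the distinguished index $j\in[r]$ from conditions {\bf A} and {\bf B} (only $r\leq 2w$ choices). For fixed $j$, the constraint $\partial_G({\cal V}^{(j)})\subseteq S$, combined with the fact that each component $C\in{\cal C}(G\setminus S)$ is connected and disjoint from $S$, forces $C$ to be either entirely inside ${\cal V}^{(j)}$ (call this \emph{type~I}) or entirely outside ${\cal V}^{(j)}$ with its vertices distributed among parts in $[r]\setminus\{j\}$ (call this \emph{type~O}): any mixed allocation would put an edge of $G[C]$ across ${\cal V}^{(j)}$ and its complement, placing one endpoint in $\partial_G({\cal V}^{(j)})\setminus S$, which is empty by~{\bf B}. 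By condition~{\bf A} the vertices of all type-O components together number at most $w\cdot f_2(w)$, so any $C$ with $|C|>w\cdot f_2(w)$ must be of type~I; these ``big'' components contribute $0$ to the crossing count and we validate them in linear total time by checking $j\in\lambda(v)$ for every~$v$.

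For each remaining ``small'' component $C$ (with $|C|\leq w\cdot f_2(w)=2^{O(w\log w)}$) and each candidate contribution vector $\alpha''\in\frak{F}_{\leq}(\alpha)$, the plan is to precompute whether $C$ admits an allocation achieving exactly $\alpha''$ crossings, under either type. Type~I is feasible only when $\alpha''=0$ and $j\in\lambda(v)$ for all $v\in V(C)$. For type~O, we build an auxiliary \la-instance $I_C=(G_C,r,\lambda_C,\alpha'')$, where $G_C$ is obtained from $G[C]$ by adding a fresh vertex $s_C$ together with, for each $v\in V(C)$, an edge $\{v,s_C\}$ of multiplicity $|\delta_G(\{v\},S)|$ (capped at $w$), and where $\lambda_C(s_C)=\{j\}$ and $\lambda_C(v)=\lambda(v)\setminus\{j\}$ for $v\in V(C)$. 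The \yes-answers of $I_C$ correspond precisely to valid type-O allocations of $C$ with global contribution $\alpha''$: crossings between $s_C$ and $V(C)$ encode the edges from $C$ to $S$, while internal crossings of $G[C]$ are captured verbatim. Applying Lemma~\ref{j4oqndua} to $I_C$ (with $|V(I_C)|=2^{O(w\log w)}$, $r\leq 2w$, $w(I_C)\leq w$, and at most two connected components) resolves $I_C$ in $2^{O(w^2\log w)}$ steps; building the full feasibility table over all small components and all $\alpha''$ then costs $n\cdot 2^w\cdot 2^{O(w^2\log w)}=n\cdot 2^{O(w^2\log w)}$ steps.

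With the feasibility table at hand, the outer DP enumerates the small components $C_1,\ldots,C_t$ and maintains $D(k,\alpha')=\yes$ iff the first $k$ of them can jointly contribute $\alpha'\leq\alpha$, via the recurrence
\[
D(k,\alpha')=\bigvee_{\alpha''\leq\alpha'}\bigl(D(k-1,\alpha'-\alpha'')\wedge \mathrm{feasible}(C_k,\alpha'')\bigr),
\]
returning \yes\ iff $D(t,\alpha)=\yes$ and every big component passes the type-I check. The table has $O(n\cdot 2^w)$ entries, each filled in $O(2^w)$ time, and summing over the $r\leq 2w$ guesses of $j$ yields the claimed $2^{O(w^2\log w)}\cdot n$ bound. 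The main obstacle is getting the per-component reduction to \la\ right: a naive encoding would wrongly accept mixed allocations of $V(C)$ across part $j$ and its complement, which are incompatible with $\partial_G({\cal V}^{(j)})\subseteq S$; restricting $\lambda_C(v)$ to $\lambda(v)\setminus\{j\}$ for every $v\in V(C)$ is precisely what confines \la-solutions of $I_C$ to genuine type-O configurations, while handling type~I as a separate case recovers the remaining legitimate scenario.
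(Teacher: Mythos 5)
Your algorithm follows the same route as the paper's proof: guess the distinguished box $j\in[r]$, use condition {\bf B} to split the components of ${\cal C}(G\setminus S)$ into those inside and those outside box $j$, resolve each small component by a call to the brute-force algorithm of Lemma~\ref{j4oqndua}, and combine contributions by an outer DP over $\frak{F}_{\leq}(\alpha)$. Your gadget with the fresh vertex $s_C$ is a sound, and in fact cleaner, implementation of the paper's per-component subinstance $(G[S\cup C_i],r,\lambda_s^*|_{S\cup C_i},\alpha'-\alpha'')$, since it keeps the subinstance size bounded independently of $|S|$. The substantive gap is in soundness with respect to condition {\bf A}. You invoke {\bf A} only in the completeness direction (to argue that components larger than $w\cdot f_2(w)$ must be of type~I in a valid solution), but your accept criterion --- $D(t,\alpha)=\yes$ plus the big-component check --- only certifies the \la\ constraints and condition {\bf B}. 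Nothing in your DP bounds the number of vertices placed outside box $j$: an accepted configuration can have up to $w$ type-O components (each needs at least one crossing edge to $S$), each of size up to $w\cdot f_2(w)$, hence up to $w^2\cdot f_2(w)>w\cdot f_2(w)$ vertices outside box $j$, so {\bf A} does not follow by counting. The paper enforces {\bf A} explicitly by carrying a third coordinate $c'$ in its tables $P_s(i,\alpha',c')$, counting the vertices allocated outside the distinguished box, and accepting only when $c'\leq w\cdot f_2(w)$; your DP has no such coordinate. The gap is repairable in two ways: add that counter, or prove that {\bf A} is automatic for accepted configurations. The latter is true but requires a genuine argument via $(f_2(w),w+1)$-connectivity: if a type-O component $C$ had $|C|>f_2(w)$ (note that $|C|\leq w\cdot f_2(w)$ and that at most $w$ edges leave $C$, all going to $S$), then either $G\setminus C$ is connected and $(C,V(G)\setminus C)$ is an $(f_2(w),w)$-good separation of $G$, or else every connected component $D$ of $G\setminus C$ must satisfy $|D|\leq f_2(w)$ (otherwise $(D,V(G)\setminus D)$ is such a separation), which gives $|V(G)|\leq |C|+w\cdot f_2(w)\leq 2w\cdot f_2(w)$, contradicting $|V(G)|>2w\cdot f_2(w)$. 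Either way $(f_2(w),w+1)$-connectivity is violated, so every type-O component has at most $f_2(w)$ vertices and {\bf A} follows. This reasoning, in the spirit of Lemma~\ref{opensi}, appears nowhere in your write-up, and without it (or the counter) the correctness of your algorithm is not established.

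There are also two smaller defects. First, your algorithm never examines the lists of the vertices of $S$ itself: $S$ is placed in box $j$, so you must check that $j\in\lambda(v)$ for every $v\in S$ (this is exactly the paper's base case requiring $G[S]$ to be $(s,\lambda)$-friendly). Your vertex $s_C$ with $\lambda_C(s_C)=\{j\}$ does not substitute for this check, since it is a fresh vertex and not a vertex of $S$; as written, your algorithm answers \yes\ on instances where some $v\in S$ has $j\notin\lambda(v)$, and such instances are \no-instances. Second, capping the multiplicity of the bundle $\{v,s_C\}$ at $w$ is unsafe: if $|\delta_G(\{v\},S)|>w$, then $v$ can never legally be placed outside box $j$, yet the capped subinstance may still be feasible by charging all $w$ crossings to that single bundle; you should cap at $w+1$ (or not cap at all) so that infeasibility is preserved.
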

\begin{proof}
We present a dynamic programming for \shcla\ using the brute-force algorithm of Lemma~\ref{j4oqndua} as a subroutine.
Let $(I,S)$ be an instance of {\sc SHCLA} where $I=(G,r,\lambda,\alpha)$ and $S\subseteq V(G)$,
and let $C_1,\ldots , C_{\ell}$ be the vertex sets of the graphs in ${\cal C}(G\setminus S)$. Let $G_i=G[S \cup \bigcup_{1\leq i'\leq i} C_{i'}]$ for $1\leq i\leq \ell$ and specifically $G_0=G[S]$.
For $s\in[r]$, we define two functions $\lambda_{s}, \lambda_{s}^*: V(G)\rightarrow 2^{[r]}$ such that

$$\lambda_{s}(x)=\left\{\begin{array}{lll}
\{s\}& x\in S \\
\lambda(x)&  x\in V(G)\setminus S
\end{array}\right.$$ and
$$\lambda_{s}^*(x)=\left\{\begin{array}{lll}
\{s\}& x\in S, \\
\lambda(x)\setminus \{s\} &  x\in V(G)\setminus S.
\end{array}\right.$$

For each $s\in [r]$, we have a table $P_s$ for dynamic programming in which the entries $P_s(i,\alpha',c')$ are either \yes\ or \no, and run over all $0\leq i\leq \ell$, $\alpha'\in\frak{F}_{\leq}(\alpha)$ and $0\leq c'\leq w\cdot f_{2}(w)$. The entries of $P_s$ are determined recursively as follows.

\begin{itemize}
\item $P_s(0,\alpha',c')=\yes$ if and only if $\alpha'=\mathbf{0}$, $c'=0$ and $G[S]$ is $(s,\lambda)$-friendly.
\item For $1\leq i \leq \ell$, $P_s(i,\alpha',c')=\yes$ if and only if
\begin{itemize}
\item[(i)] $P_s(i-1,\alpha',c')=\yes$ and $G[C_i]$ is $(s,\lambda)$-friendly, or
\item[(ii)] $|C_i|\leq c'$, and there exists $\alpha''\in \frak{F}_{\leq}(\alpha')$ such that $P_s(i-1,\alpha'',c'-|C_i|)=\yes$ and $(G[S\cup C_i],r,\lambda_s^*|_{S\cup C_i},\alpha'-\alpha'')$ is a \yes-instance for \la.
\end{itemize}
\end{itemize}
%We set $P_s(0,\alpha',c')=\yes$ if and only if $\alpha'=\mathbf{0}$, $c'=0$ and $G[S]$ is $(s,\lambda)$-friendly. When computing $P_s(i,\alpha',c')$, we can assume that all values $P(j,\alpha'',c'')$ have been determined for $j<i$, $\alpha'' \in\frak{F}_{\leq}(\alpha)$ and $0\leq c''\leq w\cdot f_{2}(w)$. For $i\geq 1$, we set $P_s(i,\alpha',c')=\yes$ if and only if
%(i) $P_s(i-1,\alpha',c')=\yes$ and $G[C_i]$ is $(s,\lambda)$-friendly, or (ii) $|C_i|\leq c'$, $P_s(i-1,\alpha'',c'-|C_i|)=\yes$ and $(G[S\cup C_i],r,\lambda_s^*|_{S\cup C_i},\alpha'-\alpha'')$ is a \yes-instance for \la.

\begin{claim}\label{claim:iff1}
$P_s(i,\alpha',c')=\yes$ if and only if the instance $(G_i,r,\lambda_s|_{V(G_i)},\alpha')$ admits a solution ${\cal V}$ for \la\ such that $\sum_{i\in[r]\setminus \{s\}} |{\cal V}^{(i)}|=c'$, and either $C_j \subseteq {\cal V}^{(s)}$ or $C_j \cap {\cal V}^{(s)}=\emptyset$ for each $1\leq j\leq i$.
\end{claim}
\begin{proofofclaim}
When $i=0$, it is tedious to verify the claim. We prove by induction on $i$.

Firstly, we prove the forward direction. Suppose that $P_s(i,\alpha',c')=\yes$ and consider the instance $(G_i,r,\lambda_s|_{V(G_i)},\alpha')$ of \la. If case (i) of the recursion holds, then by induction hypothesis, there exists a solution ${\cal V}'$ to $(G_{i-1},r,\lambda_s|_{V(G_{i-1})},\alpha')=(G_i\setminus C_i,r,\lambda_s|_{V(G_i)\setminus C_i},\alpha')$ such that $\sum_{i\in[r]\setminus \{s\}} |{\cal V}'^{(i)}|=c'$, and either $C_j \subseteq {\cal V}^{(s)}$ or $C_j \cap {\cal V}^{(s)}=\emptyset$ for each $1\leq j\leq i-1$. Let ${\cal V}$ be an $r$-allocation obtained from ${\cal V}'$ by adding all vertices of $C_i$ to the part ${\cal V}^{(s)}$. Since $G[C_i]$ is $(s,\lambda)$-friendly and $C_i\cap \partial_{G_i}({\cal V}^{(s)})=\emptyset$, it follows that ${\cal V}$ is a solution to $(G_i,r,\lambda_s|_{V(G_i)},\alpha')$. Note that ${\cal V}$ meets the two conditions of our claim.\medskip

Suppose that case (i) of the recursion does not hold for the entry $P_s(i,\alpha',c')=\yes$, but case (ii) does. From $P_s(i-1,\alpha'',c'-|C_i|)=\yes$ and induction hypothesis, there exists a solution ${\cal V}'$ to $(G_{i-1},r,\lambda_s|_{V(G_{i-1})},\alpha'')=(G_i\setminus C_i,r,\lambda_s|_{V(G_i)\setminus C_i},\alpha'')$ such that $\sum_{i\in[r]\setminus \{s\}} |{\cal V}^{(i)}|=c'-|C_i|$, and either $C_j \subseteq {\cal V}^{(s)}$ or $C_j \cap {\cal V}^{(s)}=\emptyset$ for each $1\leq j\leq i-1$. Let ${\cal V}''$ be a solution to $(G[S\cup C_i],r,\lambda_s^*|_{S\cup C_i},\alpha'-\alpha'')$, and let ${\cal V}$ be ${\cal V}'\cup {\cal V}''$. Indeed, ${\cal V}$ is an $r$-allocation of $V(G_i)$ since $V(G_{i-1})\cap (S\cup C_i)=S$, $S\subseteq {\cal V}'^{(s)}$ and $S\subseteq {\cal V}''^{(s)}$. It is easy to see that ${\cal V}$ is a solution to $(G_i,r,\lambda_s|_{V(G_i)},\alpha')$. Furthermore, due to the definition of $\lambda^*_s$, we have $\sum_{i\in[r]\setminus \{s\}}|{\cal V}''^{(i)}|=|C_i|$, and thus $\sum_{i\in[r]\setminus \{s\}} |{\cal V}^{(i)}|=\sum_{i\in[r]\setminus \{s\}} |{\cal V}'^{(i)}|+ \sum_{i\in[r]\setminus \{s\}}|{\cal V}''^{(i)}|=(c'-|C_i|)+|C_i|=c'$. Notice that ${\cal V}^{(s)}={\cal V}'^{(s)}$ as $C_i\cap {\cal V}''^{(s)}=\emptyset$. This implies that we have $C_j \subseteq {\cal V}'^{(s)}={\cal V}^{(s)}$ or $C_j \cap {\cal V}^{(s)}=C_j\cap {\cal V}'^{(s)}=\emptyset$ for $1\leq j \leq i-1$. It remains to observe that $C_i\cap {\cal V}''^{(s)}=\emptyset$ also implies $C_i \cap {\cal V}^{(s)}=\emptyset$.

Secondly, let us prove the opposite direction. Let ${\cal V}$ be a solution to $(G_i,r$,
$\lambda_s|_{V(G_i)},\alpha')$ meeting the conditions of the claim. Consider the two cases. \medskip

\noindent Case 1: Suppose $C_i \subseteq {\cal V}^{(s)}$. Clearly, $G[C_i]$ is $(s,\lambda)$-friendly. We argue that $P_s(i-1,\alpha',c')=\yes$, which implies $P_s(i,\alpha',c')=\yes$ by the recursion case (i) for $P_s$. Let ${\cal V}'$ be the restriction of ${\cal V}$ to $V(G_{i-1})$, i.e. $({\cal V}^{(1)}\setminus C_i, \ldots , {\cal V}^{(r)}\setminus C_i)$. By induction hypothesis, in order to prove $P_s(i-1,\alpha',c')=\yes$, it suffices to show that ${\cal V}'$ is a solution to $(G_{i-1},r,\lambda_s|_{V(G_{i-1})},\alpha')$ such that $\sum_{j\in[r]\setminus \{s\}} |{\cal V}'^{(j)}|=c'$, and either $C_j \subseteq {\cal V}^{(s)}$ or $C_j \cap {\cal V}^{(s)}=\emptyset$ for each $1\leq j\leq i-1$. Indeed, ${\cal V}'$ is a solution to $(G_{i-1},r,\lambda_s|_{V(G_{i-1})},\alpha')$ for $C_i \subseteq {\cal V}^{(s)}\setminus \partial_{G_i}({\cal V}^{(s)})$, which implies $\delta_{G_{i-1}}({\cal V}'^{(j)}, {\cal V}'^{(k)})=\delta_{G_i}({\cal V}^{(j)}, {\cal V}^{(k)})$ for every $1\leq j<k \leq r$.
%\[|\delta_{G_{i-1}}({\cal V}'^{(j)}, {\cal V}'^{(k)})| = \left\{\begin{array}{lll}
%|\delta_{G_i}({\cal V}^{(j)}, {\cal V}^{(k)})|=\alpha'(j,k) & \text{if~} s\neq j,k \\
%|\delta_{G_{i-1}}({\cal V}'^{(s)}, {\cal V}'^{(k)})|=|\delta_{G_i}(\partial({\cal V}^{(s)}), {\cal V}^{(k)})|=\alpha'(s,k) & \text{if~} j=s\\
%|\delta_{G_{i-1}}({\cal V}'^{(j)}, {\cal V}'^{(s)})|=|\delta_{G_i}({\cal V}^{(j)},\partial({\cal V}^{(s)}) )|=\alpha'(j,s) & \text{if~} k=s\\
%\end{array}\right.\]
Note that $\sum_{j\in[r]\setminus \{s\}} |{\cal V}'^{(j)}|=\sum_{j\in[r]\setminus \{s\}} |{\cal V}^{(j)}|=c'$. Moreover, from ${\cal V}'^{(s)}={\cal V}^{(s)}\setminus C_i$, ${\cal V}'^{(j)}={\cal V}^{(j)}$ for $j\neq s$, and the fact that
either $C_j \subseteq {\cal V}^{(s)}={\cal V}'^{(s)} \cup C_i$ or $C_j \cap {\cal V}^{(s)}=\emptyset$ holds for each $1\leq j\leq i-1$, we have either $C_j \subseteq {\cal V}'^{(s)}$ or $C_j \cap {\cal V}'^{(s)}=C_j \cap ({\cal V}^{(s)}\setminus C_i)= C_j \cap  {\cal V}^{(s)}=\emptyset$ for each $1\leq j\leq i-1$. \medskip

\noindent Case 2: Suppose $C_i \cap {\cal V}^{(s)}=\emptyset$. For $\sum_{j\in[r]\setminus \{s\}} |{\cal V}^{(j)}|=c'$, we have $|C_i|\leq c'$. Let ${\cal V}'$ be the restriction of ${\cal V}$ to $V(G_{i-1})$, i.e. $({\cal V}^{(1)}\setminus C_i, \ldots , {\cal V}^{(r)}\setminus C_i)$ and let ${\cal V}''$ be the restriction of ${\cal V}$ to $S\cup C_i$. Also let $\alpha''$ be such that $\alpha''(j,k)=|\delta_{G_{i-1}}({\cal V}'^{(j)}, {\cal V}'^{(k)})|$ for every $1\leq j < k\leq r$. In order to show $P_s(i,\alpha',c')=\yes$, it suffices to verify that $P_s(i-1,\alpha'',c'-|C_i|)=\yes$ and $(G[S\cup C_i],r,\lambda_s^*|_{S\cup C_i},\alpha'-\alpha'')$ is a \yes-instance for \la. \medskip

To verify $P_s(i-1,\alpha'',c'-|C_i|)=\yes$, notice that ${\cal V}'$ is a solution to $(G_{i-1},r$,
$\lambda_s|_{V(G_i)}$,
$\alpha'')$ and $\sum_{i\in[r]\setminus \{s\}} |{\cal V}'^{(i)}|=\sum_{i\in[r]\setminus \{s\}} |{\cal V}^{(i)}\setminus C_i|=\sum_{i\in[r]\setminus \{s\}} |{\cal V}^{(i)}|-|C_i|=c'-|C_i|$. For each $1\leq j\leq i-1$, if $C_j \subseteq {\cal V}^{(s)}$, then $C_j \subseteq {\cal V}'^{(s)}$ since ${\cal V}^{(s)}={\cal V}'^{(s)}\cup C_i$ and $C_j\cap C_i=\emptyset$. Otherwise, $C_j \cap {\cal V}'^{(s)}=C_j \cap {\cal V}^{(s)}=\emptyset$ for ${\cal V}'^{(s)}={\cal V}^{(s)}$. By induction hypothesis, that $P_s(i-1,\alpha'',c'-|C_i|)=\yes$ follows. It is routine to check that ${\cal V}''$ is a solution to $(G[S\cup C_i],r,\lambda_s^*|_{S\cup C_i},\alpha'-\alpha'')$ is a \yes-instance for \la.
\end{proofofclaim}

The next claim, together with Claim~\ref{claim:iff1}, asserts that we can correctly solve \shcla\ by computing the tables $P_s$ for $s\in [r]$.

\begin{claim}\label{claim:iff2}
The given instance $(I,S)$ is \yes\ to \shcla\ if and only if $P_s(\ell,\alpha,c)=\yes$ for some $s\in [r]$ and $c\leq w\cdot f_{2}(w)$.
\end{claim}
\begin{proofofclaim}
To see the forward direction, let ${\cal V}$ be a solution to $(I,S)$ of \shcla. By definition of \shcla\,  there exists an index $j\in[r]$ such that the two conditions {\bf A}. ${\cal V}$ is $w\cdot f_{2}(w)$-bounded out of $s$ and {\bf B}. $\partial_{G}({\cal V}^{(s)})\subseteq S\subseteq  {\cal V}^{(s)}$ are met. Let $c:=\sum_{i\in[r]\setminus \{s\}}$. Notice that ${\cal V}$ is a solution to the instance $(G,r,\lambda_{s},\alpha)$ of \la\ as $S\subseteq {\cal V}^{(s)}$, and $c\leq w\cdot f_2(w)$ by Condition {\bf A.}. Hence, to prove that $P_s(\ell, \alpha,c)=\yes$, it suffices to verify that $C_j \subseteq {\cal V}^{(s)}$ or $C_j \cap {\cal V}^{(s)}=\emptyset$ holds for every $1\leq j\leq \ell$ by Claim~\ref{claim:iff1}. Suppose that $C_j \nsubseteq {\cal V}^{(s)}$ and $C_j \cap {\cal V}^{(s)}\neq \emptyset$ for some $j$. Then $C_j \cap {\cal V}^{(s)}$ constains a vertex of $\partial_G({\cal V}^{(s)})$, and thus contains a vertex of $S$ by {\bf B.}. However, this contradicts the fact that $C_j$ and $S$ are disjoint.

For the backward implication, suppose $P_s(\ell,\alpha,c)=\yes$ for some $s\in [r]$ and $c\leq w\cdot f_2(w)$. Then, there exists a solution ${\cal V}$ to the instance $(G_{\ell},r,\lambda_s,\alpha)$ satisfying the condition of Claim~\ref{claim:iff1}. It suffices to show that $\partial_{G}({\cal V}^{(s)})\subseteq S\subseteq  {\cal V}^{(s)}$. If $\partial_{G}({\cal V}^{(s)})\setminus S\neq \emptyset$, this means that there exists a vertex sets $C$ of ${\cal C}(G\setminus S)$ such that $C\setminus {\cal V}^{(s)}\neq \emptyset$ and $C\cap {\cal V}^{(s)}\neq \emptyset$, a contradiction to the second condition of Claim~\ref{claim:iff1}. The fact that $S\subseteq {\cal V}^{(s)}$ is an immediate consequence of the definition of $\lambda_s$.
\end{proofofclaim}

In the recursion for $P_s$, verifying (i) takes $O(|C_i|)$ steps and verifying (ii) amounts to solving an instance to \la\ whose instance size is at most $c\leq w\cdot f_2(w)$. The latter takes $2^{O(w^2 \cdot \log w)}$ using the algorithm of Lemma~\ref{j4oqndua}. As the size of each table $P_s$ is $(\ell +1)\cdot |\frak{F}_{\leq}(\alpha)|\cdot w\cdot f_2(w)$, we obtain the claimed running time.
\end{proof}

Composing the running times of  Lemmata~\ref{d8h4kjks},~\ref{u83nd01nfbsyr},~\ref{p01j5ds}, and~\ref{opp4pp4kjj4j} and the fact that $f_{1}(w)=2^{O(w\cdot\log w)}$ and $f_{2}(w)=2^{O(w\cdot\log w)}$ we can derive the correctness of Theorem~\ref{u7u33ejk}.

\section{Further research}
\label{sec:conclusions}

In the definition of {\sc  List Allocation} we ask for a  {\em $\lambda$-{list} $H$-homomorphism} of $G$  where $ \sum_{e\in E(H)}|C(e)|\leq \ell$. A different parameterization of   {\sc List Allocation}, that is  similar in flavor
to  \textsc{Min-Max Multiway Cut}, may instead ask for a  {\em $\lambda$-{list} $H$-homomorphism} of $G$  where $ \max_{v\in V(H)}\sum_{{\mbox{\small  $e$
is incident to $v$}}}|C(e)|\leq \ell$.  We call this new problem {\sc Max Bounded List Digraph Homomorphism} (in short {\sc MBLDH})
As it is straightforward to prove an analogue of Theorem~\ref{kl6iq2zmn6}, where  {\sc BLDH} is now replaced
by  {\sc MBLDH} and instead of $2^{O(\ell\log h)}\cdot T(n,\ell)$ steps we now have
a reduction that takes  $2^{O(\ell^2\log h)}\cdot T(n,\ell)$ steps. This implies that   {\sc MBLDH}, when parameterized
by $\ell$ and $h$  admits an {\sf FPT}-algorithm that runs in $2^{O(\ell^2\cdot\max \{\log\ell,\log h\})}\cdot n^{4}\cdot \log n$ steps.

A  natural research direction is to improve the running time of our {\sf FPT}-algorithms for \textsc{Min-Max Multiway Cut} and  \textsc{Bounded List Digraph Homomorphism}.
If we want to improve our running times using the techniques used in this paper  it seems
that we need to crucially improve upon the recursive understanding and
randomized contractions technique.

\medskip

\noindent\textbf{Acknowledgement}. We would like to thank the anonymous referees of an earlier version of this paper for their thorough remarks and suggestions that improved the presentation and some proofs of the paper.

%
%\ig{The format of the references is HIGHLY heterogeneous, we need to fix that. I suggest to put all the ``clean" references in the file ``bib-clean.bib''}
%
%{\footnotesize

%
%\sed{\small Reviewer says:  I think you should
%check the references to make sure that the formatting and
%abbreviations are consistent}

%\bibliographystyle{plain}

%\bibliographystyle{abbrv}
%\bibliography{bib-clean-Ignasi}

\end{document}